\DeclareMathOperator*{\argmin}{arg\,min}
\newcommand*{\tikzmk}[1]{\tikz[remember picture,overlay,] \node (#1) {};\ignorespaces}
\colorlet{mypink}{magenta}
\colorlet{myblue}{cyan}
\newcommand{\boxit}[1]{\tikz[remember picture,overlay]{\node[yshift=1.5pt,fill=#1,opacity=.15,fit={(A)($(B)+(.88\linewidth,.99\baselineskip)$)}] {};}\ignorespaces}
\newcommand{\boxitcd}[1]{\tikz[remember picture,overlay]{\node[yshift=3pt,fill=#1,opacity=.15,fit={(C)($(D)+(.88\linewidth,.9\baselineskip)$)}] {};}\ignorespaces}
\newtheorem{definition}{\textbf{Definition}}
\newtheorem{theorem}{\textbf{Theorem}}
\newtheorem{example}{\textbf{Example}}
\newcommand{\ours}{TraCS\xspace}
\newcommand{\oursd}{TraCS-D\xspace}
\newcommand{\oursc}{TraCS-C\xspace}
\newcommand{\traj}{\mathcal{T}}
\newcommand{\mechanism}{\mathcal{M}}
\newcommand{\traspace}{\mathcal{S}}
\newcommand{\pdomain}{\mathcal{D}_\varphi}
\begin{document}

\title{\ours: Trajectory Collection in Continuous Space under Local Differential Privacy}


\author{Ye Zheng}
\affiliation{%
  \institution{Rochester Institute of Technology}
  \city{Rochester}
  \country{USA}}

\author{Yidan Hu}
\affiliation{%
  \institution{Rochester Institute of Technology}
  \city{Rochester}
  \country{USA}}



\begin{abstract}
Trajectory collection is essential for location-based services, yet it can reveal highly sensitive information about users, such as daily routines and activities, raising serious privacy concerns.
Local Differential Privacy (LDP) offers strong privacy guarantees for users even when the data collector is untrusted.
However, existing trajectory collection methods under LDP are largely confined to discrete location spaces, where the size of the location space affects both privacy guarantees and trajectory utility.
Moreover, many real-world applications---such as flying trajectories or wearable-sensor traces---naturally operate in continuous spaces, 
making these discrete-space methods inadequate.

This paper shifts the focus from discrete to continuous spaces for trajectory collection under LDP.
We propose two methods:
\oursd, which perturbs the direction and distance of locations, and \oursc, which perturbs the Cartesian coordinates of locations.
Both methods are theoretically and experimentally analyzed for trajectory utility in continuous spaces.
\ours can also be applied to discrete spaces by rounding perturbed locations to any discrete space embedded in the continuous space.
In this case, the privacy and utility guarantees of \ours are independent of the number of locations in the space,
and each perturbation requires only $\Theta(1)$ time complexity.
Evaluation results on discrete location spaces validate the efficiency advantage and
demonstrate that \ours outperforms state-of-the-art methods with improved
trajectory utility, particularly for large privacy parameters.
\end{abstract}

\keywords{local differential privacy, continuous space, trajectory data}

\maketitle

\section{Introduction} \label{sec:introduction}
Trajectory data from users---sequences of locations describing movement over time---are a fundamental resource for activity analysis and location-based services, 
including activity classification and routine detection.
However, directly collecting trajectory data raises significant privacy concerns,
especially given users' growing privacy expectations and stringent regulations~\cite{gdpr,DBLP:journals/puc/Krumm09}.
Anonymization techniques like $k$-anonymity~\cite{DBLP:journals/ijufks/Sweene02,DBLP:conf/icde/LiLV07} have proven insufficient 
to protect trajectory privacy~\cite{DBLP:journals/corr/abs-1112-2020}.
Thus, stronger privacy guarantees are necessary for such data.

Local differential privacy (LDP) provides a provable worst-case privacy guarantee 
by ensuring the hardness of distinguishing a user's data from any released data.
The privacy level of an LDP mechanism is quantified by its privacy parameter $\varepsilon$,
which also generally implies the extent of data perturbation.
Through LDP, users can release their perturbed trajectory data to an untrusted data collector
while still benefiting from the services provided by the collector.

Existing LDP methods for trajectory collection are designed for discrete spaces.
They leverage discrete LDP mechanisms, such as the Exponential mechanism~\cite{DBLP:conf/innovations/NissimST12},
to perturb the trajectory data.
Discrete LDP mechanisms are explicitly defined on the size of the location space. 
As a result, these methods either rely on properly splitting the entire location space into grids~\cite{DBLP:conf/ccs/WangH0QH22} 
or assume that the location space consists of a finite set of labeled locations (i.e. points of interest)~\cite{DBLP:journals/pvldb/CunninghamCFS21,DBLP:journals/pvldb/Zhang000H23}. 
\ However, relying on discrete location spaces has three inherent limitations:
(i) Their privacy guarantees are inherently tied to the discrete set.
For example, a space with $10$ locations offers weaker effective privacy than one with $100$ locations:
even a trivial inference strategy that always outputs a fixed location succeeds with probability at least $1/10$ in the former,
regardless of the privacy parameter $\varepsilon$.\footnote{
    Appendix~\ref{appendix:space_dependence} provides details on the space-dependence of indistinguishability.}
(ii) Their efficacy and efficiency are often limited by the domain size.
As the number of candidate locations grows, the probability of a mechanism outputting the true location diminishes.
Furthermore, the widely used Exponential mechanism incurs linear sampling complexity in the domain size,
making the generation of each perturbed location computationally expensive.\footnote{
    Appendix~\ref{appendix:exponential_mechanism} details the Exponential mechanism's efficiency and data utility (efficacy) limitations.}
(iii) Discrete methods are not directly applicable to inherently continuous location spaces,
such as those arising from flying and sailing trajectories or sensor data from wearable devices.
Although one can discretize a continuous space before applying discrete methods,
this inherits the aforementioned limitations. Moreover, choosing a suitable discretization granularity that balances privacy, utility, and efficiency is non-trivial.\footnote{
    Appendix~\ref{appendix:discrete_to_continuous} further discusses the challenges of adapting discrete mechanisms to continuous spaces.}

To address these limitations, this paper shifts the focus from discrete to continuous spaces for trajectory collection under LDP.
A continuous space represents locations as real-valued points,
such as GPS coordinates in $[-180, 180] \times [-90, 90]$,
and thus contains infinitely many candidate locations.
Collecting trajectory data directly in continuous spaces is natural for many applications
and offers three key advantages.
(i) The privacy guarantee does not depend on the cardinality of a discretized location set
(ii) The sampling mechanism operates directly on the continuous space, so its efficacy and efficiency are not constrained by the space size.
(iii) Perturbed locations can be post-processed (e.g. rounded) to any discrete space embedded within the continuous space,
making the approach applicable to both continuous and discrete settings.

Designing LDP trajectory collection methods for continuous spaces poses unique challenges:
they must guarantee LDP for every location in the continuous domain;
they must preserve sufficient trajectory utility for downstream tasks relative to the original trajectories (formalized in Section~\ref{sec:evaluation});
and they must generate perturbed locations efficiently to support real-time applications.

This paper proposes two LDP methods for trajectory collection in continuous spaces.
The main insight is to decompose the $2$D continuous space into $1$D subspaces and design mechanisms for each subspace.
Our mechanisms build on existing utility-optimized piecewise-based mechanisms~\cite{DBLP:journals/popets/ZhengMH25} 
for $1$D bounded numerical domains.\footnote{
    Another category of LDP mechanisms for bounded numerical domains is truncated mechanisms (e.g. the truncated Laplace mechanism~\cite{DBLP:journals/tdp/000619, DBLP:journals/jpc/HolohanABA20}).
    While such mechanisms can be incorporated into \ours, they are more complex and typically less effective than piecewise-based mechanisms.
    Section~\ref{subsubsec:other_bounded_ldp} provides details, and Section~\ref{sec:evaluation} includes experimental comparisons.
    }
\ Based on two different decompositions of the continuous space, we propose two methods: \oursd and \oursc.
The method \oursd treats the continuous space as the composition of a direction space and a distance space.
We exploit the direction information of trajectories and
design a direction perturbation mechanism to ensure LDP for the direction space.
Combined with a piecewise-based mechanism for the distance space,
\oursd ensures LDP for the continuous $2$D space.
The method \oursc treats the continuous space as the Cartesian product of two distance spaces.
Each location is represented by a $2$D Cartesian coordinate, and \oursc perturbs these coordinates.
\ Both methods guarantee LDP for the entire $2$D continuous space. 
They can also be applied to any discrete space embedded in the continuous domain
by rounding perturbed locations to their nearest discrete points.
Compared with existing methods based on the Exponential mechanism~\cite{DBLP:journals/pvldb/CunninghamCFS21,DBLP:journals/pvldb/Zhang000H23},
\ours is unaffected by the number of locations in the location space and require only $\Theta(1)$ time to generate each perturbed location.
Specifically, our contributions are as follows:
\begin{itemize}
    \item To the best of our knowledge, this is the first work to develop trajectory collection methods
        for continuous spaces under pure LDP.
        We highlight the benefits of operating directly in continuous spaces (rather than discretizing) and propose \oursd and \oursc accordingly.
        \ Our key insight is to decompose the $2$D continuous space into two $1$D subspaces and to build $2$D trajectory perturbation mechanisms
        from existing utility-optimized $1$D piecewise-based mechanisms, leveraging the direction and coordinate information in continuous trajectories.
        We theoretically and experimentally analyze their trajectory utility.    
\item Our method \ours also applies to discrete spaces.
    Compared with existing approaches for discrete spaces, \ours has substantially lower computational complexity for generating perturbed locations.
    Experiments on discrete location spaces validate this efficiency, with an average runtime below $1\%$ of that of NGram, L-SRR, and ATP~\cite{DBLP:journals/pvldb/CunninghamCFS21, DBLP:journals/pvldb/Zhang000H23, DBLP:conf/ccs/WangH0QH22},
    three state-of-the-art LDP methods designed for discrete spaces.
    In addition, the results show that \ours generally achieves better trajectory utility, particularly at larger privacy parameters.
\end{itemize}


\section{Preliminaries} \label{sec:preliminaries}
This section formulates the problem, introduces LDP, and reviews three trajectory collection methods under LDP. 
We highlight their limitations in continuous spaces, which motivates the need for new LDP methods for trajectory collection in such settings.

\subsection{Problem Formulation}

A typical trajectory collection schema consists of a set of trajectories from users and one collector.
Each trajectory $\mathcal{T}$ is a sequence of locations $\mathcal{T} = \{\tau_1, \tau_2, \dots, \tau_n\}$, 
where $\tau_i \in \mathcal{S}$ and $\mathcal{S}\subset \mathbb{R}^2$ is a continuous and bounded domain.
The collector needs to collect these trajectories to provide analysis or location-based services.

However, the collector is untrusted and may act as an adversary attempting to infer users' sensitive data.
Therefore, releasing the sensitive trajectories to the collector poses a privacy risk.
To protect privacy, users perturb their trajectories using a privacy mechanism $\mechanism: \mathcal{S} \to Range(\mechanism)$,
and then send the perturbed trajectories $\mathcal{T}' = \{\tau_1', \tau_2', \dots, \tau_n'\}$
to the collector.

We aim to design a mechanism $\mathcal{M}$ that satisfies $\varepsilon$-LDP, providing provable privacy guarantee for sensitive trajectories.
Additionally, $\mathcal{M}$ should preserve utility, 
ensuring perturbed trajectories remain comparable to the originals, while incurring only modest computational overhead suitable for real-time applications.

\subsection{Local Differential Privacy}

\begin{definition}[$\varepsilon$-LDP~\cite{DBLP:conf/focs/DuchiJW13}] \label{def:edldp}
    A perturbation mechanism $\mathcal{M}: \mathcal{X} \to Range(\mechanism)$ satisfies $\varepsilon$-LDP,
    if for two arbitrary input $x_1$ and $x_2$, the probability ratio of outputting the 
    same $y$ is bounded:
    $$
    \forall x_1, x_2 \in \mathcal{X}, \forall y \in Range(\mechanism): \frac{\Pr[\mathcal{M}(x_1) = y]}{\Pr [\mathcal{M}(x_2) = y]} \leq \exp(\varepsilon).
    $$
\end{definition} 

If $\mechanism(x)$ is continuous, the probability $\Pr[\cdot]$ is replaced by probability density function ($pdf$).
Intuitively, Definition~\ref{def:edldp} represents the difficulty of distinguishing
$x_1$ and $x_2$ given $y$.
Lower values of the privacy parameter $\varepsilon \in [0, +\infty)$ mean higher privacy. 
For example, $\varepsilon = 0$ requires $\mathcal{M}$ to map two arbitrary 
inputs to any output $y$ with the same probability, 
thus the perturbed output contains no distribution information of the sensitive input, 
making any hypothesis-testing method to infer the sensitive $x$ powerless.

\begin{theorem}[Sequential Composition of LDP~\cite{DBLP:journals/fttcs/DworkR14,DBLP:journals/pvldb/Zhang000H23}] \label{thm:sequential}
    Let $\mechanism_1$ and $\mechanism_2$ be two mechanisms that satisfy $\varepsilon_1$ and $\varepsilon_2$-LDP, respectively.
    Their composition, defined as $\mechanism_{1,2}\coloneqq (\mechanism_1, \mechanism_2):(\mathcal{X}_1, \mathcal{X}_2)\to (Range(\mechanism_1), Range(\mechanism_2))$, 
    satisfies $(\varepsilon_1 + \varepsilon_2)$-LDP.
\end{theorem}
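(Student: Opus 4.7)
The plan is to apply Definition~\ref{def:edldp} directly to the product mechanism $\mechanism_{1,2}$ and exploit the independence of the two component mechanisms' internal randomness in order to factor the joint output probability into a product that can be bounded coordinate-by-coordinate. First I would fix arbitrary input pairs $(a_1, a_2), (b_1, b_2) \in \mathcal{X}_1 \times \mathcal{X}_2$ and an arbitrary joint output $(y_1, y_2) \in Range(\mechanism_1) \times Range(\mechanism_2)$, and write out the ratio that needs to be bounded by $\exp(\varepsilon_1 + \varepsilon_2)$, namely $\Pr[\mechanism_{1,2}(a_1,a_2) = (y_1,y_2)] / \Pr[\mechanism_{1,2}(b_1,b_2) = (y_1,y_2)]$.

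The key step is the factorization. Under the standard convention that $\mechanism_1$ and $\mechanism_2$ are run with independent randomness when composed, we have
\[
\Pr[\mechanism_{1,2}(a_1,a_2) = (y_1,y_2)] = \Pr[\mechanism_1(a_1) = y_1] \cdot \Pr[\mechanism_2(a_2) = y_2],
\]
and analogously for $(b_1, b_2)$. After this factorization, the joint ratio splits into the product of two one-coordinate ratios, each of which is directly bounded by $\exp(\varepsilon_1)$ and $\exp(\varepsilon_2)$, respectively, by the hypothesized $\varepsilon_i$-LDP of each $\mechanism_i$ applied to $(a_i, b_i)$ and $y_i$. Multiplying the two bounds yields $\exp(\varepsilon_1) \cdot \exp(\varepsilon_2) = \exp(\varepsilon_1 + \varepsilon_2)$, which is exactly what is needed.

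The main ``obstacle'' here is formal rather than technical: one must be explicit about the independence assumption underlying the composition, since without it the factorization step fails and the bound no longer holds in general. For the continuous-output case noted after Definition~\ref{def:edldp}, the same argument goes through verbatim with probabilities replaced by densities, because independent marginals have a product density almost everywhere. Beyond this bookkeeping, no auxiliary lemmas, case analyses, or delicate estimates are required, and the proof reduces to a single line of arithmetic on the likelihood ratio.
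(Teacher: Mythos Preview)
Your argument is correct and is exactly the standard proof of sequential composition: factor the joint output probability using independence of the mechanisms' randomness, bound each coordinate ratio by the corresponding $\exp(\varepsilon_i)$, and multiply. The paper itself does not supply a proof of Theorem~\ref{thm:sequential}; it simply cites the result from the literature, so there is nothing further to compare.
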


Sequential composition enables (i) designing LDP mechanisms for 2D data by composing two 1D mechanisms, 
and (ii) perturbing multiple locations in a trajectory (i.e. repeated composition) while still providing an overall LDP guarantee.

\subsubsection{Piecewise-based Mechanism}

When the input domain $\mathcal{X}$ is continuous and bounded, state-of-the-art LDP mechanisms are often \emph{piecewise-based}~\cite{DBLP:conf/icde/WangXYZHSS019,DBLP:conf/sigmod/Li0LLS20,DBLP:journals/popets/ZhengMH25}.
They generate an output by sampling from a probability distribution over $\mathcal{X}$ whose density is defined in a piecewise manner.
Below, we present a unified formulation that captures this family of mechanisms.

\begin{definition}[Piecewise-based mechanism]\label{def:piecewise}
    A piecewise-based mechanism $\mechanism:\mathcal{X}\to Range(\mechanism)$ is a family of probability distributions that,
    given input $x\in\mathcal{X}$, outputs $y\in Range(\mechanism)$ according to 
    \begin{equation*}
        pdf[\mechanism(x)=y] = 
        \begin{dcases}
            p_{\varepsilon} & \text{if} \ y \in [l_{x,\varepsilon}, r_{x,\varepsilon}), \\ 
            p_{\varepsilon} / \exp{(\varepsilon)} & \text{otherwise},
        \end{dcases}
    \end{equation*}
    where $p_{\varepsilon}$ is the sampling probability w.r.t. $\varepsilon$,
    $[l_{x,\varepsilon}, r_{x,\varepsilon})$ is the sampling interval w.r.t. $x$ and $\varepsilon$.
    $Range(\mechanism)\supseteq \mathcal{X}$ is also a continuous and bounded domain.
\end{definition}
Piecewise-based mechanisms sample an output $y$ with a high probability $p_\varepsilon$ from the interval $[l_{x,\varepsilon}, r_{x,\varepsilon})$,
and with a lower probability from the remaining two pieces, while still satisfying $\varepsilon$-LDP.
Representative instantiations include OGPM~\cite{DBLP:journals/popets/ZhengMH25} and PM~\cite{DBLP:conf/icde/WangXYZHSS019} for mean estimation,
and SW~\cite{DBLP:conf/sigmod/Li0LLS20} for distribution estimation.
Our perturbation mechanisms build on the 1D piecewise-based mechanism in OGPM to design new 2D mechanisms tailored to trajectory collection.

\subsubsection{$k$-RR and Exponential Mechanism}

If $\mathcal{X}$ is discrete, especially when $|\mathcal{X}|$ is not large~\cite{DBLP:conf/uss/WangBLJ17}, 
a classical LDP mechanism is $k$-RR (or GRR in some literature)~\cite{DBLP:conf/nips/KairouzOV14}.
\begin{definition} \label{def:krr}
    $k$-RR is a sampling mechanism $\mechanism:\mathcal{X}\to \mathcal{X}$ that, 
    given $|\mathcal{X}|=k$ and input $x\in\mathcal{X}$, outputs $y\in\mathcal{X}$ according to
    \begin{equation*}
        \Pr[\mechanism(x)=y] = 
        \begin{dcases}
            \frac{\exp(\varepsilon)}{|\mathcal{X}|-1+\exp(\varepsilon)} & \text{if} \ y = x, \\ 
            \frac{1}{|\mathcal{X}|-1+\exp(\varepsilon)} & \text{otherwise}.
        \end{dcases}
    \end{equation*}
\end{definition}

$k$-RR outputs the truth $x$ with a higher probability or outputs other values with a lower probability,
while satisfying $\varepsilon$-LDP.

When there is a semantic distance (score) function between elements in $\mathcal{X}$,
the Exponential mechanism~\cite{DBLP:conf/innovations/NissimST12} is more widely used.
Unlike $k$-RR, which treats other values (except $x$) equally, the Exponential mechanism 
leverages a score function to assign different probabilities.
$k$-RR is a special case of the Exponential mechanism when the score function is a binary indicator function.

\begin{definition}[Exponential Mechanism~\cite{DBLP:conf/innovations/NissimST12}] \label{def:exponential_mechanism}
    Given a score function $d: \mathcal{X} \times \mathcal{Y} \to \mathbb{R}$, a privacy parameter $\varepsilon > 0$,
    and a set of possible outputs $\mathcal{Y}$, the Exponential mechanism $\mechanism: \mathcal{X} \to \mathcal{Y}$
    is defined by:
    \begin{equation*}
        \Pr[\mechanism(x)=y] = \frac{\exp \left (\frac{\varepsilon d(x,y)}{2\Delta d}\right )}{\sum_{y' \in \mathcal{Y}} \exp \left (\frac{\varepsilon d(x,y')}{2\Delta d}\right )},
    \end{equation*}
    where $\Delta d = \max_{x,y,y' \in \mathcal{Y}} |d(x,y) - d(x,y')|$ is the sensitivity of the score function $d$.
\end{definition}

In the context of trajectory collection, common score functions include the negative Euclidean distance $d(x,y) \coloneq -||x-y||_2$,
great circle distance~\cite{DBLP:journals/pvldb/Zhang000H23}, etc.
Under such score functions (e.g. $d(x,y) = -\|x-y\|_2$), locations closer to the true location $x$ have larger scores $d(x,y)$ and thus receive higher sampling probabilities, making them more likely to be selected.

\subsubsection{Privacy Parameter for Trajectory Data}

When applying LDP to trajectory data, there are two strategies for setting the privacy parameter.
(i) Treating each location $\tau_i$ in a trajectory $\traj$ as a data item and applying an $\varepsilon$-LDP mechanism to perturb each location;
(ii) Treating the entire trajectory $\traj$ as a data item and applying an $\varepsilon$-LDP mechanism to perturb the whole trajectory.\footnote{
    This concerns how to set the privacy parameter for trajectory data, rather than defining the ``trajectory space'' as the input space in LDP mechanisms.
        We focus on LDP mechanisms for location spaces in this paper, 
        as the trajectory space in $\mathbb{R}^2$ is infinite-dimensional.
}
The first strategy is cleaner and convertible to the second strategy via the Sequential Composition Theorem~\ref{thm:sequential},
whereas the second depends on the trajectory length,
which complicates the analysis and the design of mechanisms under a fixed $\varepsilon$.
Therefore, we primarily adopt the first strategy in this paper, and include experiments for the second in Appendix~\ref{appendix:trajectory_level_epsilon} for completeness.

\subsection{Existing Methods} \label{sec:existing}

\begin{table}[t]
    \begin{center}
        \caption{Comparison with existing methods.}\label{tab:existing}
        \resizebox*{\linewidth}{!}{
            \begin{threeparttable}
                \begin{tabular}{llr}
                    \toprule
                    & \textbf{Main technique} & \textbf{LDP guarantee} \\
                    \midrule
                    NGram~\cite{DBLP:journals/pvldb/CunninghamCFS21} & Hierarchical decomposition  & \multirow{3}{*}[-0.2em]{Discrete space} \\[0.2em]
                    L-SRR~\cite{DBLP:conf/ccs/WangH0QH22} & Staircase RR mechanism & \\[0.2em]
                    ATP~\cite{DBLP:journals/pvldb/Zhang000H23} & Direction perturbation & \\
                    \cmidrule(lr){1-3}
                    This paper (\ours) & \makecell[l]{Direction\tnote{a}~\enspace\& \\ coordinate perturbation} & Continuous space\\
                    \bottomrule
                \end{tabular}
                \begin{tablenotes}
                    \footnotesize \item[a] Designed for continuous direction space (different from ATP).
                \end{tablenotes}
            \end{threeparttable}
        }
    \end{center}
\end{table}

Table~\ref{tab:existing} summarizes three state-of-the-art trajectory collection methods under pure LDP, along with our approach.
Other works that emphasize external knowledge rather than new perturbation mechanisms are discussed in Appendix~\ref{sec:related_work}.

NGram~\cite{DBLP:journals/pvldb/CunninghamCFS21} hierarchically decomposes the physical space $\mathcal{S}$ into fine-grained discrete regions with semantic labels.
Concretely, $\mathcal{S}$ is first partitioned into small spatial regions $R_s$, then further refined into category-specific regions $R_c$, and finally augmented with temporal information $R_t$.
A resulting ``gram'' can be written as $r_{sct}=\{\text{mountain, church, 3am}\}$.
This decomposition represents a trajectory $\traj$ at the knowledge level, enabling the use of public knowledge to define \emph{reachability}, i.e. to constrain the set of feasible perturbed trajectories.
For instance, visiting a church at $3$am is unlikely because it is typically closed, which reduces the reachable set of $\traj'$.
Finally, NGram applies the Exponential mechanism to sample a perturbed trajectory from this reachable set.


L-SRR~\cite{DBLP:conf/ccs/WangH0QH22} introduces the staircase randomized response (SRR) mechanism for location perturbation.
Unlike $k$-RR, which assigns only two probabilities---one for the true location and another for all other locations---the SRR mechanism employs a staircase-shaped probability distribution.
L-SRR recursively partitions the physical space $\mathcal{S}$ and forms location groups $G_1, G_2, \dots, G_g$ based on their distances to the true location.
By integrating the SRR mechanism, L-SRR assigns higher probabilities to locations in groups that are closer to the true location, and lower probabilities to those farther away.
This approach is similar to the Exponential mechanism, but uses group-level distance instead of location-level distance.

ATP~\cite{DBLP:journals/pvldb/Zhang000H23} perturbs directions to restrict the set of feasible perturbed locations.
For a trajectory $\traj=\{\tau_1,\tau_2,\dots,\tau_n\}$, when perturbing $\tau_{i+1}$, ATP first perturbs the direction from $\tau_i$ to $\tau_{i+1}$.
Specifically, it partitions the direction space into $k$ sectors and applies $k$-RR to sample a sector.
The perturbed location $\tau_{i+1}'$ is then constrained to lie within the sampled sector, which reduces the candidate set of $\tau_{i+1}'$;
ATP finally samples $\tau_{i+1}'$ from this constrained set using the Exponential mechanism.

We summarize their limitations in the context of continuous spaces as follows:
\begin{itemize}
    \item Existing methods are designed for discrete location spaces. 
    Discretizing a continuous domain is possible, but choosing an appropriate granularity is non-trivial.
    \item Beyond discretization, methods that rely on the Exponential mechanism (e.g. NGram and ATP) typically incur high computational cost for evaluating scores and sampling, 
    and their utility can be sensitive to the spatial distribution of candidate locations. 
    L-SRR requires constructing distance-based groups for every location, leading to similar scalability issues. 
    Moreover, approaches that depend on public knowledge (e.g. NGram) are hard to design in continuous spaces (where the location set is infinite and semantics are difficult to define), 
    are often dataset-specific, and the public knowledge is typically \emph{known} to the adversary, which can further weaken privacy protection.
\end{itemize}

Appendix~\ref{appendix:exponential_mechanism} details the limitations of the Exponential mechanism.
In brief, perturbing a location requires evaluating the score function $d(x,y)$ for all candidate outputs $y$ associated with $x$,
which takes $\Theta(m)$ time where $m$ is the number of locations in the discrete space.
Moreover, generating each perturbed location requires sampling from an $m$-piece cumulative distribution function (CDF),
which also incurs $\Theta(m)$ time.
Such a per-location cost is prohibitive for large-scale location spaces.
Finally, because the sampling probabilities are induced by the score function,
they can be sensitive to the spatial distribution of candidate locations and their relative distances,
which may further degrade utility.

To address these limitations, we propose \ours, a novel trajectory collection method that ensures rigorous LDP guarantees for trajectories in continuous spaces. 
\ours is carefully designed to tackle the privacy, utility, and efficiency challenges inherent in handling continuous domains without relying on discretization.

\section{Our Method: \ours} \label{sec:method}

This section presents our methods, \oursd and \oursc, along with their theoretical analyses and extensions.
Notations used in this section can be found in Table~\ref{tab:notations}.

\subsection{Location Space} \label{subsec:location_space}

We consider the location space constrained by a pair of longitude and latitude lines,
i.e. $\mathcal{S}\subset \mathbb{R}^2$ is a rectangular area $[a_{\text{sta}},a_{\text{end}}]\times[b_{\text{sta}},b_{\text{end}}]$,
where $a_{\text{sta}},a_{\text{end}}$ denote the longitudes and $b_{\text{sta}},b_{\text{end}}$ denote the latitudes.
Consequently, each location $\tau_i\in \traspace$ has a natural representation as a pair of coordinates $(a_i, b_i)$.
\ This representation aligns with real-world location data (e.g. GPS data) and
existing trajectory collection methods for discrete spaces~\cite{DBLP:journals/pvldb/CunninghamCFS21,DBLP:journals/pvldb/DuHZFCZG23,DBLP:journals/pvldb/Zhang000H23,DBLP:journals/csur/JiangLZZXI21,DBLP:conf/ccs/WangH0QH22}.

The core idea behind having various LDP methods for the location space $\mathcal{S}$ stems from its multiple decompositions.
One decomposition means $\mathcal{S}$ can be represented by several subspaces.
For example, in addition to the aforementioned longitude-latitude coordinate representation, each location $\tau_i\in \mathcal{S}$
can also be represented by a direction component and a distance component.
Based on these decompositions, we propose two LDP methods:
\oursd, which perturbs the direction and distance subspaces, 
and \oursc, which perturbs subspaces of the Cartesian coordinates.

\subsection{\oursd}

Continuous space $\mathcal{S}$ can be represented as
the composition of a direction space $\pdomain$ and a distance space $\mathcal{D}_{r(\varphi)}$
at each (reference) location $\tau_i$.
Figure~\ref{fig:decomposition_d} illustrates this decomposition.

\begin{figure}[t]
    \centering
    \begin{subfigure}[b]{0.43\linewidth}
        \centering
        \includegraphics[width=0.98\textwidth]{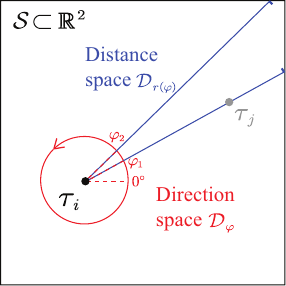}
        \caption{\oursd: $\traspace = \mathcal{D}_{\varphi} \otimes \mathcal{D}_{r(\varphi)}$}
        \label{fig:decomposition_d}
    \end{subfigure}
    \hfill
    \begin{subfigure}[b]{0.43\linewidth}
        \centering
        \includegraphics[width=0.98\textwidth]{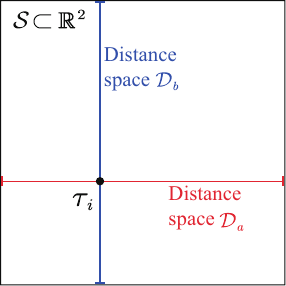}
        \caption{\oursc: $\traspace = \mathcal{D}_{a} \times \mathcal{D}_{b}$}
        \label{fig:decomposition_c}
    \end{subfigure}
    \caption{Two decompositions of location space $\traspace$ at $\tau_i$.
    Any other location $\tau_j \in \traspace$ can be represented by direction-distance coordinates
    $(\varphi, r(\varphi))$ or Cartesian coordinates $(a, b)$.
    }
    \label{fig:decomposition}
\end{figure}

\begin{table}[t]
    \begin{center}
        \caption{Notations used in this paper.} \label{tab:notations}
            \small
                \begin{tabular}{p{3.8cm}l}
                    \toprule
                    \textbf{Notation} & \textbf{Meaning} \\
                    \midrule
                    $\traj \coloneqq \{\tau_1, \tau_2, \ldots, \tau_n\}$ & Sensitive trajectory \\
                    $\traj' \coloneqq \{\tau_1', \tau_2', \ldots, \tau_n'\}$ & Perturbed trajectory \\
                    $\traspace \coloneqq [a_{\text{sta}},a_{\text{end}}]\times[b_{\text{sta}},b_{\text{end}}]$ & Continuous location space \\
                    \midrule
                    $\mechanism_\circ$ & Direction perturbation mechanism \\
                    $\mechanism_{-}$ & Distance perturbation mechanism \\
                    $\varphi \in \pdomain \coloneqq [0, 2\pi)$ & Direction \& direction space \\
                    $\varphi'$ & Perturbed direction \\
                    $r(\varphi) \in \mathcal{D}_{r(\varphi)}$ & Distance \& distance space along $\varphi$ \\
                    $r'(\varphi)$ & Perturbed distance along $\varphi$ \\
                    $\overline{r}(\varphi) \in [0,1)$ & Normalized distance over $\mathcal{D}_{r(\varphi)}$ \\
                    \bottomrule
                \end{tabular}
    \end{center}
\end{table}

\begin{figure*}[th!]
    \centering
    \begin{subfigure}[b]{0.45\textwidth}
        \centering
        \includegraphics[width=0.99\textwidth]{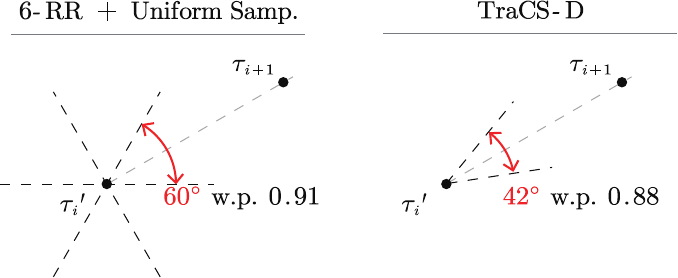}
        \caption{$\varepsilon=4$}
        \label{fig:direction_perturbation_sub1}
    \end{subfigure}
    \hfill
    \begin{subfigure}[b]{0.45\textwidth}
        \centering
        \includegraphics[width=0.99\textwidth]{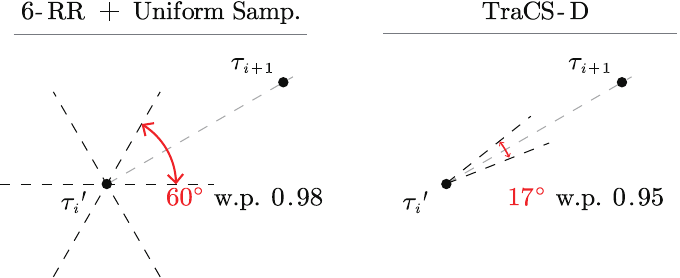}
        \caption{$\varepsilon=6$}
        \label{fig:direction_perturbation_sub2}
    \end{subfigure}
    \caption{Comparison of the dominant sectors of the strawman approach and \oursd with $\varepsilon=4$ and $\varepsilon=6$. 
    The red angular arcs indicate the dominant sectors, with their probabilities shown on the right.
    The dominant sector of \oursd narrows as $\varepsilon$ increases, leading to a smaller inner-sector (sampling) error.
    In contrast, the strawman approach has a fixed dominant sector,
    which is independent of $\varepsilon$ and leads to a large inner-sector error.
    }
    \label{fig:direction_perturbation}
\end{figure*}

\textbf{Direction space $\pdomain$.}
In the space $\mathcal{S}\subset$ $\mathbb{R}^2$, the direction is represented by the angle $\varphi$ relative to a reference direction.
We define the reference direction as the latitude lines, i.e. the $0^{\circ}$ direction. 
Thus, the direction space $\pdomain$ is a circular domain $[0, 2\pi)$ for any location.
Fixing a location $\tau_i$, any other location $\tau_j \neq \tau_i$ has a unique direction with respect to $\tau_i$.

\textbf{Distance space $\mathcal{D}_{r(\varphi)}$.}
The other subspace, i.e. the distance space $\mathcal{D}_{r(\varphi)}$, is determined by the direction $\varphi$.
Its size is the distance from $\tau_i$ to the boundary of $\traspace$ in the direction $\varphi$,
which is a function of $\varphi$.
Since $\mathcal{S}$ is rectangular, $r(\varphi)$ has a closed-form expression, which is detailed later.
Thus, fixing $\tau_i$ and a direction $\varphi$, any other location $\tau_j$ has a unique distance $r(\varphi)$ with respect to $\tau_i$.

By this decomposition, we can represent any location $\tau_j \in S$ as coordinates
$(\varphi, r(\varphi))\in (\mathcal{D}_{\varphi},\mathcal{D}_{r(\varphi)})$ relative to $\tau_i$.
As an example, $\tau_{j}$ in Figure~\ref{fig:decomposition_d} is represented by $(\varphi_1, r(\varphi_1))$.
This representation is unique and reversible:
any pair of coordinates $(\varphi, r(\varphi))$ can be mapped back to a unique location $\tau_j \in \mathcal{S}$.
Thus, if $\tau_j$ is a sensitive location,
an LDP mechanism applied to $\tau_j = (\varphi, r(\varphi))$ is essentially applied to $\varphi$ and $r(\varphi)$.
By Composition Theorem~\ref{thm:sequential}, we can design perturbation mechanisms for $\varphi$ and $r(\varphi)$, respectively, to achieve LDP for $\tau_j$.

\subsubsection{Direction Perturbation} \label{subsubsec:direction_perturbation}

The technical challenge in direction perturbation is designing an LDP mechanism for the circular domain $\pdomain = [0,2\pi)$.
This is non-trivial due to the following properties of the circular domain:
\begin{itemize}
    \item The circular domain $[0,2\pi)$ is bounded but not finite, making the unbounded mechanisms
    and discrete mechanisms inapplicable.
    \item The circular domain has a unique distance metric, e.g. the distance between $0$ and $2\pi$ is $0$, not $2\pi$,
    making the Euclidean distance-based mechanisms inapplicable.
\end{itemize}
Specifically, classical mechanisms like Laplace and Gaussian~\cite{DBLP:journals/fttcs/DworkR14}
add unbounded noise, resulting in an output domain of $(-\infty, +\infty)$.
This output domain makes them inapplicable to the circular domain $[0,2\pi)$.
Additionally, these mechanisms are designed for the distance metric $d(y,x) = |y-x|$, which is inconsistent with the distance in circular domain.
Discrete mechanisms, such as $k$-RR~\cite{DBLP:conf/nips/KairouzOV14}, assume a finite domain of size $k$ and therefore cannot be directly applied to the continuous circular domain $[0,2\pi)$.
Recently, OGPM~\cite{DBLP:journals/popets/ZhengMH25} proposed a piecewise-based mechanism tailored to circular domains, which inspires our design.

To highlight the necessity and superiority of \oursd's piecewise-based design
compared to simpler alternatives,
we start by presenting a strawman approach that extends the $k$-RR mechanism to the circular domain $[0,2\pi)$.

\textbf{Strawman approach: $k$-RR + uniform sampling.}
A former work ATP~\cite{DBLP:journals/pvldb/Zhang000H23} divides the direction space $[0,2\pi)$ into $k$ sectors
and applies the $k$-RR mechanism.
This approach ensures LDP for the $k$ sectors but not for the entire $[0,2\pi)$ domain.
Even so, we can extend this approach to $[0,2\pi)$ by
further uniformly sampling a direction $\varphi'$ from
the output sector of $k$-RR.

Specifically, the circular domain $[0,2\pi)$ is divided into $k$ sectors
$[i-1, i)\cdot 2\pi/k$ for $i=1,2,\ldots,k$. 
Assume the sensitive direction $\varphi$ falls into the $j$-th sector.
Applying the $k$-RR mechanism outputs sector $i = j$
with probability $p = \exp(\varepsilon) / (k-1+\exp(\varepsilon))$,
or outputs sector $i \neq j$ with probability $p/\exp(\varepsilon)$.
Then, we uniformly sample a direction $\varphi'$ from the $i$-th sector,
i.e. $\varphi' \sim U(i-1, i)\cdot 2\pi/k$.

\textbf{Limitation.}
This strawman approach introduces inherent \emph{inner-sector} errors due to uniform sampling within a specific sector.
That is, even though the perturbed direction $\varphi'$ falls into the same sector as the sensitive direction $\varphi$,
the distance between $\varphi$ and $\varphi'$ may still be large due to the large sector size.
This issue is particularly prominent when $k$ is small, as each sector then spans $2\pi/k$ radians. 
Uniform sampling within such wide sectors introduces substantial inner-sector errors, which persist regardless of how large the privacy parameter $\varepsilon$ is set. 
On the other hand, increasing $k$ to reduce sector width impairs the utility of the $k$-RR mechanism~\cite{DBLP:conf/uss/WangBLJ17}.

\textbf{Design rationale.}
To address this limitation, 
we design a direction perturbation mechanism in which the perturbed direction 
is independent of hyperparameters like $k$,
thereby avoiding inherent errors beyond those introduced by the privacy parameter $\varepsilon$.

We adapt the design of piecewise-based mechanisms for circular domains from OGPM~\cite{DBLP:journals/popets/ZhengMH25}.
Specifically, (i) we instantiate a piecewise-based mechanism over the circular domain $[0,2\pi)$,
which guarantees LDP for the entire direction space;
(ii) the perturbed direction is sampled from a piecewise probability distribution,
which is centered around the sensitive direction with high probability and depends solely on the privacy parameter $\varepsilon$,
thus eliminating the need for pre-defined fixed sectors.

\begin{definition}[Direction Perturbation Mechanism] \label{def:direction_perturbation}
    Given a sensitive direction $\varphi$ and a privacy parameter $\varepsilon$,
    \oursd's direction perturbation mechanism $\mathcal{M}_\circ: [0,2\pi)\to [0,2\pi)$
    is defined by:
    \begin{equation*}
        pdf[\mechanism_\circ(\varphi)=\varphi'] =
        \begin{dcases}
            p_{\varepsilon} & \text{if} \ \varphi' \in [l_{\varphi,\varepsilon}, r_{\varphi,\varepsilon}), \\
            p_{\varepsilon} / \exp{(\varepsilon)} & \text{otherwise},
        \end{dcases}
    \end{equation*}
    where $p_{\varepsilon} = \frac{1}{2\pi} \exp(\varepsilon/2)$ is the sampling probability,
    and $[l_{\varphi,\varepsilon}, r_{\varphi,\varepsilon})$ is the sampling interval that
    \begin{equation*}
        \begin{split}
            l_{\varphi,\varepsilon} &= \left( \varphi - \pi\frac{\exp(\varepsilon/2) - 1}{\exp(\varepsilon) - 1} \right) \mod 2\pi, \\
            r_{\varphi,\varepsilon} &= \left( \varphi + \pi\frac{\exp(\varepsilon/2) - 1}{\exp(\varepsilon) - 1} \right) \mod 2\pi.
        \end{split}
    \end{equation*}
\end{definition}

The above mechanism $\mathcal{M}_\circ$ is defined by a piecewise probability distribution with three pieces,
where the central piece $[l_{\varphi,\varepsilon}, r_{\varphi,\varepsilon})$ has a higher probability density $p_{\varepsilon}$.
As a piecewise-based mechanism, it evidently satisfies LDP for the circular domain $[0,2\pi)$;
see Appendix~\ref{appendix:ldp_tracs-d} for the proof.

A key advantage of \oursd's direction perturbation mechanism is the ``dominant'' sector.
We refer to the central piece $[l_{\varphi,\varepsilon}, r_{\varphi,\varepsilon})$ as the dominant sector because it
centers around the sensitive direction $\varphi$ with high probability.
The dominant sector adapts dynamically to the privacy parameter $\varepsilon$:
as $\varepsilon$ increases, the dominant sector becomes narrower and more tightly centered around the sensitive direction $\varphi$.
This adaptivity effectively mitigates the inner-sector errors inherent in the strawman approach, where the sector width is fixed and independent of $\varepsilon$.

\begin{example} \label{example:direction_perturbation}
    Figure~\ref{fig:direction_perturbation} compares the strawman approach and \oursd.
    Assume the next location $\tau_{i+1}$ is a sensitive location needing direction perturbation,
    thus the sensitive direction is $\varphi: \tau_i' \to \tau_{i+1}$.
    The dominant sector of \oursd is much smaller as $\varepsilon$ increases.
    For instance, when $\varepsilon=6$ and $\varphi = \pi/6$, it can be calculated that $p_{\varepsilon} \approx 3.20$ and
    $[l_{\varphi,\varepsilon}, r_{\varphi,\varepsilon}) \approx [0.119\pi,0.214\pi)$.
    Thus, the dominant sector has a size of $0.095\pi \approx 17^\circ$ with probability $0.095\pi \times 3.20 \approx 0.95$ of being chosen.
    In comparison, the strawman approach has a dominant sector of $2\pi/6 = 60^\circ$ with probability $0.98$ of being chosen.
    With almost the same probability of being chosen, 
    \oursd's dominant sector is significantly narrower than that of the strawman approach, 
    resulting in a much smaller inner-sector (sampling) error.
    \end{example}


Detailed comparison between the strawman approach and Definition~\ref{def:direction_perturbation}
is provided in Appendix~\ref{appendix:comparison_strawman}.
This Appendix will show that \oursd achieves a better trade-off between the size and probability of the dominant sector 
compared to the strawman approach with any $k$ value.

\subsubsection{Distance Perturbation} \label{subsubsec:distance_perturbation}

The other subspace of \oursd is the distance space $\mathcal{D}_{r(\varphi)}$.
Besides the boundedness, the size of this space also varies with the reference location $\tau_i$ and the direction $\varphi$.
The technical challenge in designing an LDP mechanism for this space lies in calculating 
the size $|\mathcal{D}_{r(\varphi)}|$ for any $\varphi$ at any $\tau_i$.

Given location $\tau_i$, any other location $\tau_j\in \traspace$ can be represented by a pair of coordinates $(\varphi, r(\varphi))$ relative to $\tau_i$.
Since we focus on $\traspace$ as a rectangular area $[a_{\text{sta}},a_{\text{end}}]\times[b_{\text{sta}},b_{\text{end}}]$,
the distance from $\tau_i$ to the boundary of $\traspace$ has a closed-form expression depending on the direction $\varphi$.
Specifically, this distance falls into one of the following four cases:
\begin{equation} \label{eq:distance_sapce}
    |\mathcal{D}_{r(\varphi)}| \in
    \left\{ \frac{a_{\text{end}} - a_i}{\cos(\varphi)},
    \frac{b_{\text{end}} - b_i}{\sin(\varphi)},
    \frac{a_i - a_{\text{sta}}}{-\cos(\varphi)},
    \frac{b_i - b_{\text{sta}}}{-\sin(\varphi)} \right\},
\end{equation}
depending on the value of $\varphi$.
Appendix~\ref{appendix:detailed_expression} provides the detailed equation form of Equation~(\ref{eq:distance_sapce}).
Figure~\ref{fig:distance_space} illustrates the first two cases of $|\mathcal{D}_{r(\varphi)}|$:
when $\varphi$ falls into $[0,\varphi_1)$, i.e. Figure~\ref{fig:distance_space_1},
$|\mathcal{D}_{r(\varphi)}| = (a_{\text{end}} - a_i) / \cos(\varphi)$;
Figure~\ref{fig:distance_space_2} shows the second case, i.e.
when $\varphi \in [\varphi_1,\varphi_2)$,
then $|\mathcal{D}_{r(\varphi)}| = (b_{\text{end}} - b_i) / \sin(\varphi)$.

\begin{figure}[t]
    \centering
    \begin{subfigure}[b]{0.43\linewidth}
        \centering
        \includegraphics[width=0.98\textwidth]{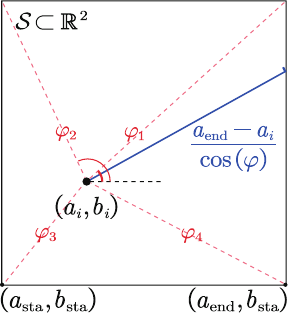}
        \caption{$\varphi \in [0,\varphi_1)$}
        \label{fig:distance_space_1}
    \end{subfigure}
    \hfill
    \begin{subfigure}[b]{0.43\linewidth}
        \centering
        \includegraphics[width=0.98\textwidth]{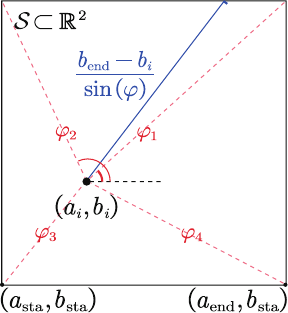}
        \caption{$\varphi \in [\varphi_1,\varphi_2)$}
        \label{fig:distance_space_2}
    \end{subfigure}
    \caption{Two cases of $|\mathcal{D}_{r(\varphi)}|$ at $\tau_i$ (blue lines).}
    \label{fig:distance_space}
\end{figure}

With the known distance space $[0,|\mathcal{D}_{r(\varphi)}|]$ and sensitive distance $r(\varphi)$,
we can design a distance perturbation mechanism to guarantee LDP for $\mathcal{D}_{r(\varphi)}$.
For the convenience of presentation, we normalize the distance $r(\varphi)$ by $|\mathcal{D}_{r(\varphi)}|$,
resulting in a normalized distance $\overline{r}(\varphi)\in [0,1)$.\footnote{
    We use $[0,1)$ instead of $[0,1]$ to ease the presentation of Definition~\ref{def:distance_perturbation}
    and to align with the circular domain. These two domains are equivalent in implementation.
}
We then employ the piecewise-based mechanism in OGPM~\cite{DBLP:journals/popets/ZhengMH25} for the normalized distance $\overline{r}(\varphi)$ to guarantee LDP on $[0,1)$.

\begin{definition}[Distance Perturbation Mechanism] \label{def:distance_perturbation}
    Given a sensitive distance $\overline{r}(\varphi)$ and a privacy parameter $\varepsilon$,
    \oursd's distance perturbation mechanism $\mathcal{M}_{-}: [0,1)\to [0,1)$
    is defined by:
    \begin{equation*}
        pdf[\mechanism_{-}(\overline{r}(\varphi))=\overline{r}'(\varphi)] =
        \begin{dcases}
            p_{\varepsilon} & \text{if} \ \overline{r}'(\varphi) \in [u, v), \\
            p_{\varepsilon} / \exp{(\varepsilon)} & \text{otherwise},
        \end{dcases}
    \end{equation*}
    where $p_{\varepsilon} = \exp(\varepsilon/2)$ is the sampling probability,
    and $[u,v)$ is the sampling interval that
    \begin{equation*}
        \begin{split}
            [u,v) &=
            \begin{cases}
                \overline{r}(\varphi) + [-C, C) & \text{if} \ \overline{r}(\varphi) \in [C, 1-C), \cr
                [0, 2C) & \text{if} \ \overline{r}(\varphi) \in [0, C), \cr
                [1-2C, 1) & \text{otherwise},
            \end{cases}
        \end{split}
    \end{equation*}
    with $C=(\exp(\varepsilon / 2) - 1) / (2\exp(\varepsilon) - 2)$.
\end{definition}

Similar to $\mathcal{M}_\circ$, the above mechanism $\mathcal{M}_{-}$ has a higher probability density $p_\varepsilon$ in the central piece $[u,v)$,
and a larger $\varepsilon$ results in a higher $p_\varepsilon$ and a narrower $[u,v)$, boosting the utility.
It ensures LDP for $[0,1)$ and outputs a perturbed normalized distance $\overline{r}'(\varphi)$.
To cooperate with the direction perturbation,
$\overline{r}'(\varphi)$ can be mapped to another distance space $\mathcal{D}_{r(\varphi')}$ at the perturbed direction $\varphi'$.
This is a linear mapping without randomness, so the post-processing property~\cite{DBLP:journals/fttcs/DworkR14}
ensures it preserves the same privacy level.

\subsubsection{Workflow of \oursd}

Combining the direction perturbation and distance perturbation, \oursd perturbs each sensitive location in the trajectory.

Algorithm~\ref{alg:tracs-d} presents the workflow of \oursd.
It takes the location space $\mathcal{S}$, the sensitive trajectory $\mathcal{T}$, and the privacy parameter $\varepsilon$ as input. 
Since \oursd relies on a non-sensitive reference location, we add a dummy location $\tau_0'$ to $\mathcal{T}$,
which can be the starting coordinate of $\mathcal{S}$ or a randomly drawn location.
Then, \oursd iterates over each pair of consecutive locations $\tau_i'$ and $\tau_{i+1}$ (line~3).
For each sensitive location $\tau_{i+1}$, it perturbs the direction (red block)
and normalized distance (blue block) with privacy parameters $\varepsilon_d$ and $\varepsilon - \varepsilon_d$, respectively.
The perturbed distance is then mapped along the perturbed direction (line~9).
Line~11 updates the reference location to the perturbed location $\tau_{i+1}'$, which is then used as the reference for the next iteration ($i+1$). 
Finally, the algorithm outputs the perturbed trajectory $\mathcal{T}'$. 

Algorithm~\ref{alg:tracs-d} uses each $\tau_i'$ as the reference location for the next perturbation.
In fact, it can be any non-sensitive location in $\traspace$.
We choose different $\tau_i'$ for alleviating the impact of a specific reference location.

\begin{algorithm}[t]
    \small
    \caption{\oursd}\label{alg:tracs-d}
    \SetKwComment{commentAlgo}{$\color{gray}\triangleright$\ }{}
    \KwIn{Rectangular location space $\mathcal{S}$, sensitive trajectory $\mathcal{T}=\{\tau_1, \tau_2, \dots, \tau_n\}$,
    privacy parameter $\varepsilon$}
    \KwOut{Perturbed trajectory $\mathcal{T}'=\{\tau_1', \tau_2', \dots, \tau_n'\}$}
    $\mathcal{T}' \gets \emptyset, \mathcal{T} \leftarrow  \tau_0' \cup\mathcal{T}$\commentAlgo*{\textcolor{gray}{Add a dummy location $\tau_0'$}}
    \For{$i\leftarrow 0$ \KwTo $n-1$}
    {
        \commentAlgo{\textcolor{gray}{$\tau_i'$ is the ref location for this iteration}}
        $\tau_i' = (a_i,b_i), \tau_{i+1} = (a_{i+1},b_{i+1})$\;
        \commentAlgo{\textcolor{gray}{Sensitive direction}}
        \tikzmk{A}
        $\varphi \leftarrow \text{atan2}(b_{i+1}-b_i, a_{i+1}-a_i)$\;
        \commentAlgo{\textcolor{gray}{Perturb direction}}
        $\varphi' \leftarrow \mechanism_\circ(\varphi;\varepsilon_d)$\;
        \tikzmk{B}\boxit{mypink}
        \tikzmk{A}
        $R \leftarrow |\mathcal{D}_{r(\varphi)}|$ in Equation~(\ref{eq:distance_sapce})\;
        $\overline{r}(\varphi) \leftarrow  ||\tau_{i+1} - \tau_i'||_2 / R$\commentAlgo*{\textcolor{gray}{Sensitive (norm.) distance}}
        $\overline{r}'(\varphi) \leftarrow \mechanism_{-} (\overline{r}(\varphi);\varepsilon - \varepsilon_d)$\commentAlgo*{\textcolor{gray}{Perturb distance}}
        \tikzmk{B}\boxit{myblue}
        $R'\leftarrow |\mathcal{D}_{r(\varphi')}|$, $r'(\varphi') \leftarrow \overline{r}'(\varphi)\times R'$\commentAlgo*{\textcolor{gray}{De-normalize}}
        \commentAlgo{\textcolor{gray}{Transform back to (longitude, latitude)}}
        $\tau_{i+1}' = (a_i + r'(\varphi')\cos(\varphi'), b_i + r'(\varphi')\sin(\varphi'))$\;
        $\mathcal{T}' \leftarrow \mathcal{T}' \cup \tau_{i+1}'$\commentAlgo*{$\tau_{i+1}'$ is the next ref location}
    }
    \Return  $\mathcal{T}'$\;
\end{algorithm}

\subsubsection{Analysis of \oursd} \label{subsubsec:analysis_tracs-d}

This subsection analyzes the privacy, computational complexity, and utility of \oursd.

\begin{theorem}\label{thm:tracs-d}
    \oursd (Algorithm~\ref{alg:tracs-d}) satisfies $n\varepsilon$-LDP for the rectangular location space $\mathcal{S}$.
\end{theorem}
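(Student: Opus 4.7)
The plan is to reduce the theorem to three ingredients: (i) per-location LDP of the two subspace mechanisms, (ii) sequential composition within a single location, and (iii) sequential composition across the $n$ locations of the trajectory, leveraging that every quantity used as a ``reference'' by the algorithm is either public or already a released (perturbed) output.

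First, I would verify that the direction mechanism $\mathcal{M}_\circ$ of Definition~\ref{def:direction_perturbation} satisfies $\varepsilon_d$-LDP on $[0,2\pi)$. Two checks are needed. (a) The density is a valid $pdf$: integrating $p_\varepsilon$ over the (wrap-around) sampling interval of length $2\pi\cdot(\exp(\varepsilon/2)-1)/(\exp(\varepsilon)-1)$ and $p_\varepsilon/\exp(\varepsilon)$ over its complement should give $1$; plugging $p_\varepsilon=\tfrac{1}{2\pi}\exp(\varepsilon/2)$ into this expression collapses to $1$ after an elementary algebraic simplification. (b) For any two private directions $\varphi_1,\varphi_2$ and any output $\varphi'$, the density ratio lies in $\{1,\exp(\varepsilon),\exp(-\varepsilon)\}$ by construction, so it is bounded by $\exp(\varepsilon_d)$ with the choice $\varepsilon_d=\varepsilon$ in this sub-analysis (we will later split the budget). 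Treating the sampling interval modulo $2\pi$ is the only subtlety, but since both the ``in-interval'' and ``out-of-interval'' density values are constants independent of $\varphi$, the ratio bound holds everywhere on the circle.

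Next, I would verify $\mathcal{M}_-$ of Definition~\ref{def:distance_perturbation} satisfies $(\varepsilon-\varepsilon_d)$-LDP on $[0,1)$ by the same two steps: the interval $[u,v)$ always has length $2C$ (including the two boundary cases, where it is clipped to $[0,2C)$ or $[1-2C,1)$ of length $2C$), and $p_\varepsilon\cdot 2C + p_\varepsilon/\exp(\varepsilon)\cdot(1-2C)=1$ with $p_\varepsilon=\exp(\varepsilon/2)$ and $C=(\exp(\varepsilon/2)-1)/(2\exp(\varepsilon)-2)$. The density ratio again takes only the three values $\{1,\exp(\varepsilon),\exp(-\varepsilon)\}$, giving the required bound.

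Now I would argue per-location LDP. Conditioned on the reference $\tau_i'$ (which is either the public dummy $\tau_0'$ or a previously released perturbation), the bijective change of coordinates $\tau_{i+1}\leftrightarrow(\varphi,\overline{r}(\varphi))$ is a deterministic map, so any LDP guarantee on the coordinate pair transfers back to $\tau_{i+1}$. By Theorem~\ref{thm:sequential}, running $\mathcal{M}_\circ$ with budget $\varepsilon_d$ on $\varphi$ followed by $\mathcal{M}_-$ with budget $\varepsilon-\varepsilon_d$ on $\overline{r}(\varphi)$ yields $\varepsilon$-LDP for the joint release $(\varphi',\overline{r}'(\varphi))$. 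The subsequent steps in lines~9--10 of Algorithm~\ref{alg:tracs-d} (linear rescaling by $|\mathcal{D}_{r(\varphi')}|$ and conversion back to Cartesian coordinates) depend only on the released $(\varphi',\overline{r}'(\varphi))$ and on $\tau_i'$, so they are post-processing and preserve $\varepsilon$-LDP for $\tau_{i+1}$.

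Finally, I would iterate this argument across the $n$ locations. The key observation, and the main subtle point in the proof, is that the reference $\tau_i'$ used at iteration $i+1$ is not private information about the user's true trajectory: $\tau_0'$ is a public dummy, and every later $\tau_i'$ is already a released output. Hence the perturbation of $\tau_{i+1}$ is a conditionally $\varepsilon$-LDP mechanism whose randomness is independent of the private trajectory beyond $\tau_{i+1}$ itself. Applying Theorem~\ref{thm:sequential} sequentially over $i=0,\dots,n-1$ gives $n\varepsilon$-LDP for the entire mapping $\mathcal{T}\mapsto\mathcal{T}'$. The hardest part is making the conditioning argument rigorous, i.e., showing that the density ratio of the joint output $\mathcal{T}'$ under any two neighboring private trajectories factorizes along the chain and each factor is bounded by $\exp(\varepsilon)$; this is a standard chain-rule calculation but must be spelled out because the reference location at step $i+1$ is itself a random function of earlier private inputs.
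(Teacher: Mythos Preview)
Your proposal is correct and follows essentially the same route as the paper's proof: verify $\varepsilon_d$-LDP for $\mathcal{M}_\circ$ and $(\varepsilon-\varepsilon_d)$-LDP for $\mathcal{M}_-$, invoke post-processing for the linear rescaling, compose within each location via Theorem~\ref{thm:sequential}, and then compose across the $n$ locations. Your treatment is in fact more thorough than the paper's, since you explicitly check that the piecewise densities are valid $pdf$s and you spell out the conditioning argument on the released reference $\tau_i'$, a subtlety the paper's proof glosses over.
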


\begin{proof}(Sketch)
    The direction perturbation mechanism $\mechanism_\circ$ satisfies $\varepsilon_d$-LDP by its definition.
    For the distance perturbation, the randomness comes entirely from mechanism $\mechanism_{-}$,
    which satisfies $(\varepsilon - \varepsilon_d)$-LDP.
    The post-processing property preserves the same privacy level after linearly mapping the perturbed distance. 
    Then each location satisfies $\varepsilon$-LDP 
    by Composition Theorem~\ref{thm:sequential} (or by computing the 2D $pdf$ ratio in the LDP definition).
    Hence, the entire perturbed trajectory satisfies $n\varepsilon$-LDP.
    Appendix~\ref{appendix:proof_tracs-d} provides details.
\end{proof}

\textbf{Complexity.}
\oursd (Algorithm~\ref{alg:tracs-d}) has $\Theta(1)$ time complexity for each location,
as each line is $\Theta(1)$.
Thus, the entire algorithm has $\Theta(n)$ time complexity, where $n$ is the length of the trajectory.
The space complexity is also $\Theta(n)$ because it needs to store the perturbed trajectory $\mathcal{T}'$.


The $\Theta(n)$ time complexity is optimal for per-location perturbation mechanisms.
Among existing LDP mechanisms for discrete location spaces,
even if we ignore the complexity of the score function and sampling in the Exponential mechanism~\cite{DBLP:conf/innovations/NissimST12},
NGram~\cite{DBLP:journals/pvldb/CunninghamCFS21}
has $\Theta(tn)$ time complexity, where $t$ is the number of predefined ``gram''.
It also includes a large constant factor due to its search to satisfy reachability constraints.
L-SRR~\cite{DBLP:conf/ccs/WangH0QH22} has $\Theta(mn)$ time complexity, where $m$ is the number of locations, due to the grouping of locations.
ATP~\cite{DBLP:journals/pvldb/Zhang000H23} also has $\Theta(mn)$ time complexity,
because of its trajectory merging step.

Note that the $\Theta(n)$ space complexity is for the convenience of presenting the algorithm.
It can be reduced to $\Theta(1)$ by only updating each perturbed location $\tau_{i+1}'$
in place within $\mathcal{T}$.
This results in a lightweight memory footprint, which is crucial for edge computing.

\begin{theorem} \label{thm:tracs-d-variance}
    In \oursd, both the worst-case mean square error (MSE) of $\mechanism_\circ$ and $\mechanism_{-}$ converge to zero with a rate of $\Theta(e^{-\varepsilon/2})$.
\end{theorem}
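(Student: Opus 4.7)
The plan is to compute both MSEs by splitting the expectation according to the two density levels of the piecewise distributions in Definition~\ref{def:direction_perturbation} and Definition~\ref{def:distance_perturbation}, and then asymptotically expanding each term as $\varepsilon\to\infty$. In each case I expect the tail (low-density) contribution to dominate, so the overall rate will be determined by the side pieces rather than by the central sector.

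For $\mathcal{M}_\circ$, I would equip $[0,2\pi)$ with the circular distance $d_\circ(\varphi',\varphi) = \min(|\varphi'-\varphi|,\ 2\pi-|\varphi'-\varphi|) \le \pi$. By rotational symmetry of the construction, the distribution of $d_\circ(\mathcal{M}_\circ(\varphi),\varphi)$ is independent of $\varphi$, so the worst-case MSE equals the MSE at any fixed input. Writing $w = 2\pi(e^{\varepsilon/2}-1)/(e^{\varepsilon}-1)$ for the width of the dominant sector, its contribution to the MSE is $p_\varepsilon \int_{-w/2}^{w/2}\delta^2\,d\delta = p_\varepsilon w^3/12$; substituting $p_\varepsilon = e^{\varepsilon/2}/(2\pi)$ and expanding for large $\varepsilon$ gives $\Theta(e^{-\varepsilon})$. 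The two side pieces have density $p_\varepsilon/e^\varepsilon = \Theta(e^{-\varepsilon/2})$, total width $2\pi-w\to 2\pi$, and the integral $\int_{\text{tail}} d_\circ(\varphi',\varphi)^2\,d\varphi'$ is $\Theta(1)$ since the tail asymptotically covers the whole circle, so their contribution is $\Theta(e^{-\varepsilon/2})$. Summing, MSE$=\Theta(e^{-\varepsilon/2})$.

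For $\mathcal{M}_{-}$, I would apply the same decomposition with the Euclidean distance on $[0,1)$. The central interval has width $2C = \Theta(e^{-\varepsilon/2})$ and density $p_\varepsilon = e^{\varepsilon/2}$, so its contribution is $O(p_\varepsilon C^3)=\Theta(e^{-\varepsilon})$ in the interior case of Definition~\ref{def:distance_perturbation}; in the two boundary cases the center of mass of the dominant interval differs from $\overline{r}(\varphi)$ by at most $O(C)$, so the central contribution is still $O(e^{-\varepsilon})$. The tail contribution equals $(p_\varepsilon/e^\varepsilon)\int_{\text{tail}} (\overline{r}'-\overline{r}(\varphi))^2\,d\overline{r}' = \Theta(e^{-\varepsilon/2})$, and the worst case is attained for $\overline{r}(\varphi)$ near $0$ or $1$, where the tail spans almost the entire unit interval and the squared distance is bounded below by a positive constant on a subset of positive measure. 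Summing yields worst-case MSE $=\Theta(e^{-\varepsilon/2})$.

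The main obstacle is bookkeeping rather than any deep argument. For $\mathcal{M}_\circ$, the sampling interval $[l_{\varphi,\varepsilon},r_{\varphi,\varepsilon})$ may wrap across the identification $0\sim 2\pi$, and one has to verify that the rotational symmetry really does make this case equivalent to the non-wrapping one for the MSE computation. For $\mathcal{M}_{-}$, one must handle the three sub-cases of the sampling interval separately, and one must exhibit a matching lower bound of order $e^{-\varepsilon/2}$ to justify the $\Theta$ (rather than $O$) claim; this follows by choosing a specific input for which the tail retains probability mass $\Omega(e^{-\varepsilon/2})$ at distance bounded away from $0$. Both reductions are elementary once the dominant/tail splitting above is in place.
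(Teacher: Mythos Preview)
Your proposal is correct and follows essentially the same approach as the paper: split the MSE integral into the high-density central piece and the low-density tail, show the central contribution is $\Theta(e^{-\varepsilon})$ while the tail is $\Theta(e^{-\varepsilon/2})$, and identify the worst case for $\mathcal{M}_{-}$ at the endpoints $\overline{r}\in\{0,1\}$. The paper carries out the explicit computation at $\overline{r}=0$ to obtain the leading constant $\tfrac{1}{3}e^{-\varepsilon/2}$ and then simply asserts that $\mathcal{M}_\circ$ ``can be calculated in the same way''; your treatment of $\mathcal{M}_\circ$ via the circular metric and rotational symmetry, and your explicit attention to the matching lower bound for the $\Theta$ claim, are in fact slightly more careful than what the paper spells out.
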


\begin{proof}(Sketch)
    The MSE can be calculated by the closed-form expression of $\mechanism_\circ$ and $\mechanism_{-}$,
    allowing us to prove the convergence rate.
    Appendix~\ref{appendix:proof_variance} provides the details.
\end{proof}

This theorem indicates that both the error of the perturbed direction and distance are reduced exponentially with the privacy parameter $\varepsilon$.
\ Note that the size of the distance space also affects the variance.
Specifically, denote the output of $\mechanism_{-}$ as a random variable $Y$ and the sensitive input as $x$.
After the linear mapping, the perturbed distance is $Y \cdot |\mathcal{D}_{r}|$ and the sensitive distance is $x \cdot |\mathcal{D}_{r}|$, where $|\mathcal{D}_{r}|$ is the size of the distance space.
Thus, the MSE of the perturbed distance is $\text{MSE}[Y \cdot |\mathcal{D}_{r}|] = |\mathcal{D}_{r}|^2 \cdot \text{MSE}[Y]$. 
Therefore, a smaller $|\mathcal{D}_{r}|$ leads to a smaller MSE of the perturbed distance.

\subsection{\oursc}

\oursc decomposes the rectangular location space $\mathcal{S}$
into two independent distance spaces $\mathcal{D}_a$ and $\mathcal{D}_b$,
as shown in Figure~\ref{fig:decomposition_c}.
\ Specifically, we can fix $(a_{\text{sta}}, b_{\text{sta}})$ as the reference location,
with longitude $a_{\text{sta}}$ as the $a$-axis and latitude $b_{\text{sta}}$ as the $b$-axis.
Then any location $\tau_i \in \mathcal{S}$ can be represented as a pair of Cartesian coordinates $(d_a, d_b)$,
where $d_a \in \mathcal{D}_a$ and $d_b \in \mathcal{D}_b$ are the distances from $\tau_i$ to the $a$-axis and $b$-axis, respectively.
This representation is unique and reversible: any pair of distances $(d_a, d_b)$ can be mapped back to 
a unique location $\tau_i \in \mathcal{S}$.
\ Thus, if $\tau_i$ is a sensitive location, 
an LDP mechanism applied to $\tau_i$ is essentially applied to $d_a$ and $d_b$.
By Composition Theorem~\ref{thm:sequential}, we can achieve LDP for $\tau_j$ by applying perturbation mechanisms for $d_a$ and $d_b$, respectively.

\subsubsection{Coordinates Perturbation}
In \oursc's representation of locations, $\mathcal{D}_a = [0, a_{\text{end}} - a_{\text{sta}})$ and $\mathcal{D}_b = [0, b_{\text{end}} - b_{\text{sta}})$ 
are independent of any specific location.
Therefore, the location perturbation of any $\tau_i$ can be performed independently in $\mathcal{D}_a$ and $\mathcal{D}_b$.

Following the same idea of distance perturbation in \oursd,
we normalize the distance $d_a$ and $d_b$ to $[0, 1)$ by dividing $|\mathcal{D}_a|$
and $|\mathcal{D}_b|$, respectively.
Then the mechanism $\mathcal{M}_{-}$ in Definition~\ref{def:distance_perturbation} 
provides $\varepsilon$-LDP for the normalized distances,
and the linear mappings back to $\mathcal{D}_a$ and $\mathcal{D}_b$ preserve the privacy guarantee.

Algorithm~\ref{alg:tracs-c} shows the workflow of \oursc.
It takes the rectangular location space $\mathcal{S}$, the sensitive trajectory $\mathcal{T}$, and the privacy parameter $\varepsilon$ as input,
and outputs the perturbed trajectory $\mathcal{T}'$.
For each location $\tau_i = (a_i, b_i)$ in $\mathcal{T}$, Line~4 computes its coordinates $(d_a, d_b)$ and 
normalizes them to $(\overline{d}_a, \overline{d}_b) \in [0,1)\times [0,1)$.
Lines~5-6 perturb the normalized coordinates $\overline{d}_a$ and $\overline{d}_b$ 
using $\mathcal{M}_{-}$ with privacy parameter $\varepsilon/2$.
Then Line~8 maps the perturbed normalized coordinates back to $\mathcal{D}_a$ and $\mathcal{D}_b$,
and converts them to the longitude-latitude representation $\tau'_i$.

\begin{algorithm}[t]
    \small 
    \caption{\oursc}\label{alg:tracs-c}
    \SetKwComment{commentAlgo}{$\color{gray}\triangleright$\ }{}
    \KwIn{Rectangular location space $\mathcal{S}$, sensitive trajectory $\mathcal{T}=\{\tau_1, \tau_2, \dots, \tau_n\}$, 
    privacy parameter $\varepsilon$}
    \KwOut{Perturbed trajectory $\mathcal{T}'=\{\tau_1', \tau_2', \dots, \tau_n'\}$}
    $\mathcal{T}' \leftarrow \emptyset$\;
    \For{$i\leftarrow1$ \KwTo $n$}
    {
        $\tau_i = (a_i,b_i)$\;
        \commentAlgo{\textcolor{gray}{Normalize coordinates}}
        $(\overline{d}_a, \overline{d}_b) \leftarrow (\frac{a_i - a_{\text{sta}}}{a_{\text{end}}-a_{\text{sta}}}, \frac{b_i - b_{\text{sta}}}{b_{\text{end}} - b_{\text{sta}}})$\;
        \tikzmk{C}
        $\overline{d}'_a \leftarrow \mechanism_{-}(\overline{d}_a;\varepsilon/2)$\commentAlgo*{\textcolor{gray}{Perturb coordinates}} 
        \tikzmk{D}\boxitcd{mypink}
        \tikzmk{C}
        $\overline{d}'_b \leftarrow \mechanism_{-}(\overline{d}_b;\varepsilon/2)$\;
        \tikzmk{D}\boxitcd{myblue}
        \commentAlgo{\textcolor{gray}{De-normalize to $(\mathcal{D}_a,\mathcal{D}_b)$}}
        $(d_a', d_b') \leftarrow (\overline{d}'_a |\mathcal{D}_a|, \overline{d}'_b |\mathcal{D}_b|)$, $\tau'_i = (a_{\text{sta}}, b_{\text{sta}}) + (d_a', d_b')$\;
        $\mathcal{T}' \leftarrow \mathcal{T}' \cup \{\tau'_i\}$\;
    }
    \Return  $\mathcal{T}'$\;
\end{algorithm}

\subsubsection{Analysis of \oursc}

Privacy and complexity of \oursc can be analyzed similarly to \oursd.

\begin{theorem}\label{thm:tracs-c}
    \oursc (Algorithm~\ref{alg:tracs-c}) satisfies $n\varepsilon$-LDP for the rectangular location space $\mathcal{S}$.
\end{theorem}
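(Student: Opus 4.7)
The plan is to mirror the structure of the proof of Theorem~\ref{thm:tracs-d}, but the argument is actually simpler because \oursc uses two copies of the same mechanism $\mechanism_{-}$ on two independent subspaces, avoiding the circular-domain subtlety.

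First, I would establish that $\mechanism_{-}(\cdot;\varepsilon/2)$ satisfies $(\varepsilon/2)$-LDP on $[0,1)$. This is a direct verification from Definition~\ref{def:distance_perturbation}: for any two private inputs $x_1, x_2 \in [0,1)$ and any output $y \in [0,1)$, the ratio of the two piecewise $pdf$s is at most $p_\varepsilon / (p_\varepsilon / \exp(\varepsilon/2)) = \exp(\varepsilon/2)$, because each $pdf$ takes only the two values $p_\varepsilon$ and $p_\varepsilon/\exp(\varepsilon/2)$. One also needs to check that the central piece $[u,v)$ always has measure $2C$, so that $p_\varepsilon$ integrated over $[0,1)$ together with the tails yields a valid density; this is already implicit in the definition of $C$.

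Next I would handle a single location. Fix any $\tau_i \in \traspace$. The normalization in Line~4 is a deterministic bijection $\traspace \to [0,1)^2$, so revealing $(\overline{d}'_a, \overline{d}'_b)$ is privacy-equivalent to revealing $\tau'_i$ after the inverse linear mapping in Lines~7--8 (post-processing property~\cite{DBLP:journals/fttcs/DworkR14}). The two perturbations in Lines~5--6 draw fresh, independent randomness and each satisfies $(\varepsilon/2)$-LDP on its respective coordinate. By Sequential Composition (Theorem~\ref{thm:sequential}) applied to $(\overline{d}_a, \overline{d}_b) \mapsto (\overline{d}'_a, \overline{d}'_b)$, the joint mechanism satisfies $\varepsilon$-LDP, and post-processing preserves this, so the perturbation of $\tau_i$ as a whole satisfies $\varepsilon$-LDP.

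Finally, I would lift this to the full trajectory. Unlike \oursd, the perturbation of $\tau_i$ in \oursc does not depend on any previously perturbed location; each iteration uses only the fixed reference corner $(a_{\text{sta}}, b_{\text{sta}})$ and its own private $\tau_i$. Thus the releases $\tau'_1, \ldots, \tau'_n$ are produced by $n$ independent applications of an $\varepsilon$-LDP mechanism to the sequence $(\tau_1,\ldots,\tau_n)$, and a second invocation of Theorem~\ref{thm:sequential} yields $n\varepsilon$-LDP for $\traj'$. I do not expect any real obstacle here; the only point requiring care is to remember that each $\tau_i$ must be counted separately in the composition (hence $n\varepsilon$ rather than $\varepsilon$), matching the accounting in Theorem~\ref{thm:tracs-d}.
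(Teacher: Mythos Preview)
Your proposal is correct and follows essentially the same route as the paper: verify that $\mechanism_{-}$ with parameter $\varepsilon/2$ is $(\varepsilon/2)$-LDP on $[0,1)$, compose the two coordinate perturbations via Theorem~\ref{thm:sequential} to get $\varepsilon$-LDP per location, invoke post-processing for the linear map back to $\mathcal{D}_a\times\mathcal{D}_b$, and compose across the $n$ locations for $n\varepsilon$-LDP. Your extra remarks (the explicit $pdf$ ratio check, the observation that \oursc has no dependence on previously perturbed locations) are sound and slightly more detailed than the paper's version, but they do not change the argument.
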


\begin{proof}(Sketch)
    Algorithm~\ref{alg:tracs-c} uses mechanism $\mathcal{M}_{-}$ twice, each with a privacy parameter $\varepsilon/2$.
    By Composition Theorem~\ref{thm:sequential} (or by computing the 2D $pdf$ ratio in the LDP definition), their composition satisfies $\varepsilon$-LDP.
    The subsequent linear mappings to $\mathcal{D}_a$ and $\mathcal{D}_b$ are post-processing steps
    that preserve the same privacy.
    Hence, each perturbed location satisfies $\varepsilon$-LDP, 
    and the whole trajectory satisfies $n\varepsilon$-LDP.
    Appendix~\ref{appendix:proof_tracs-c} provides details.
\end{proof}    

\textbf{Complexity.}
\oursc (Algorithm~\ref{alg:tracs-c}) has $\Theta(n)$ time complexity and $\Theta(n)$ space complexity,
where $n$ is the length of the trajectory.
\ The time complexity is $\Theta(n)$ because each line of Algorithm~\ref{alg:tracs-c} is $\Theta(1)$.
The space complexity is $\Theta(n)$ because it needs to store the perturbed trajectory $\mathcal{T}'$.
Note that the $\Theta(n)$ space complexity is for the convenience of presenting the algorithm.
It can also be reduced to $\Theta(1)$ by updating each perturbed location $\tau_{i}'$
in place within $\mathcal{T}$, which benefits edge devices with limited computation resources.

\subsection{Rounding to Discrete Space}

Although \ours is designed for continuous location spaces, it can be applied to any fine-grained discrete location space
by rounding the perturbed locations to the nearest discrete locations.
The rounding is a post-processing step that does not affect the privacy guarantee for the continuous space.

Specifically, if the discrete location space is a set of cells,
we can round the perturbed location to the cell that contains it.
Formally, assume $\mathcal{C} = \{c_1, \ldots, c_m\} \subseteq \mathcal{S}$ discretize $\mathcal{S}$ into $m$ cells.
If $\tau_i'$ is the perturbed location of $\tau_i$ in \ours, then the discretized \ours outputs
$c_j$ such that $\tau_i' \in c_j$.
Commonly, the cells are evenly divided, so the rounding costs $\Theta(1)$ time,
as $c_j$ can be determined by the coordinates of $\tau_i'$.

If the location space is a set of location points, we can round the perturbed location to the nearest location point.
Formally, assume $\mathcal{P} = \{p_1, \ldots, p_m\} \subseteq \mathcal{S}$ is a set of location points, 
and $\tau_i'$ is the perturbed location of $\tau_i$ in \ours.
Then, the discretized \ours outputs $\argmin_{p_j \in \mathcal{P}} ||\tau_i' - p_j||_2$.
The time complexity of rounding to the nearest location point is $\mathcal{O}(m)$ in the worst case.
In practice, this time complexity can be significantly reduced by using heuristic or approximation algorithms.

The candidate locations in a discrete space are typically public knowledge, which enables a heuristic selection of a better reference location for \oursd.
The key observation is that a smaller distance space $\mathcal{D}_{r(\varphi)}$ yields a smaller distance-perturbation error.
Therefore, it is preferable to choose a reference location that minimizes the size of $\mathcal{D}_{r(\varphi)}$ with respect to the candidate locations.

Specifically, the direction from the reference location to other locations is often concentrated in a dominant sector, as illustrated in Figure~\ref{fig:direction_perturbation}.
If such a sector covers most candidate locations (thus reducing rounding error) and is close to the boundary of the location space (thus shrinking the distance domain),
then the corresponding reference location is a good choice.

\subsection{Discussions and Extensions}

This subsection discusses the comparison between \oursd and \oursc,
Euclidean versus spherical geometry, technical considerations in \ours, 
and other discussions and extensions.

\subsubsection{Comparison of Privacy Between \oursd and \oursc}

Although \oursd and \oursc provide the same level of privacy quantified by $\varepsilon$-LDP---more specifically, location-level (or event-level) LDP as described 
in surveys~\cite{DBLP:journals/popets/MirandaPascualGPFS23,DBLP:journals/popets/BuchholzAWNK24}---they have different privacy interpretations.

In \oursd, the LDP guarantee is provided for the direction information and the subsequent distance information.
This means that when the use of \oursd and the perturbed trajectory $\mathcal{T}'$ is publicly known,
any observer can hardly infer the sensitive direction $\varphi: \tau'_{i}\to \tau_{i+1}$ 
and the sensitive distance $r(\varphi): |\tau'_{i}\to \tau_{i+1}|$ from the known $\tau'_{i}$ and $\tau'_{i+1}$.
\ Although the distance space $\mathcal{D}_{r(\varphi)}$ relies on a specific direction,
it does not leak the direction information, as the direction space $\pdomain = [0,2\pi)$ is independent of any location
and \oursd uses the perturbed direction.

In \oursc, the LDP guarantee is provided for the two independent distance spaces $\mathcal{D}_a$ and $\mathcal{D}_b$.
This means that when the use of \oursc and the perturbed trajectory $\mathcal{T}'$ is publicly known,
any observer can hardly infer the sensitive distances $d_a$ and $d_b$.

\subsubsection{Euclidean Geometry vs Spherical Geometry}

\ours is designed for a rectangular location space $\mathcal{S} \subset \mathbb{R}^2$ under Euclidean geometry, where distances are measured by the Euclidean metric.
In particular, \oursd requires computing the distance spaces $\mathcal{D}_{r(\varphi)}$, which are defined with respect to the Euclidean distance.
This choice is made for generality:
(i) for many continuous trajectories (e.g. from wearable sensors or indoor devices), Euclidean geometry is a natural choice; and
(ii) for city-scale GPS trajectories (e.g. in Chicago and Tokyo), the distortion induced by approximating geographic coordinates as Cartesian coordinates is typically negligible.

Under spherical geometry, distances are measured by the great-circle distance.
(i) For country-scale GPS trajectories, standard map projections such as UTM~\cite{latlon_utm_converter} can convert GPS coordinates into local Cartesian coordinates, after which Euclidean distance can be used.
(ii) For general spherical domains (e.g. the Earth's surface), we can treat longitude and latitude as two independent coordinates and redesign \oursc via independent composition of $(\mechanism_\circ, \mechanism_{-})$.
Specifically, for a location with GPS coordinates $\tau_i = (a_i, b_i) \in ([-\pi, \pi), [-\pi/2, \pi/2])$, i.e. longitude and latitude, we perturb longitude and latitude independently using $\mechanism_\circ$ and $\mechanism_{-}$, respectively.
This design does not involve Euclidean distance and aligns with the semantics of longitude and latitude as separate coordinates.

\subsubsection{Why Perturb Direction First in \oursd}

In \oursd, we perturb the direction first and then the distance.
This order is motivated by two considerations.
First, the direction space $\pdomain = [0, 2\pi)$ is location-independent, making it well suited for direct perturbation.
Second, the distance space $\mathcal{D}_{r(\varphi)}$ is always defined relative to a specific direction $\varphi$.
If we were to perturb the distance before the direction, then after perturbing the direction we would need to redefine the corresponding distance domain using the perturbed direction, which complicates the procedure.
Perturbing the direction first avoids this issue and yields a cleaner mechanism design.

\subsubsection{Impact of the Size of $\mathcal{S}$}

Intuitively, a larger location space $\mathcal{S}$ requires larger privacy parameters to achieve the same level of utility.
For \oursd and \oursc, this impact is reflected in the size of the direction space and distance spaces.

In \oursd, the direction space $\pdomain = [0,2\pi)$ is independent of the size of $\mathcal{S}$,
meaning that the MSE of the perturbed direction, $\mathrm{MSE}[\varphi']$, remains constant regardless of the size of $\mathcal{S}$.
\ The distance space $\mathcal{D}_{r(\varphi)}$ depends on the specific direction $\varphi$ and the size of $\mathcal{S}$.
In the worst case, $\mathcal{D}_{r(\varphi)}$ is the diagonal of $\mathcal{S}$.
We have shown that the MSE of $\mechanism_{-}$ is quadratic to the size of the distance space
after linear mapping in Section~\ref{subsubsec:analysis_tracs-d}.
Therefore, when $|\mathcal{D}_{r(\varphi)}|$ increases linearly, the expected error of the perturbed distance, 
i.e. $\mathrm{E}[|r'(\varphi) - r(\varphi)|]$, also increases linearly. 
\ Nonetheless, the error in the direction space $\pdomain$ affects the error in $\mathcal{D}_{r(\varphi)}$ 
in a non-linear way.
Specifically, under the Euclidean distance metric, 
this effect is $|\mathcal{D}_{r(\varphi)}|\cdot 2\sin(|\varphi - \varphi'|/2)$,
which is the chord length of the arc between $\varphi$ and $\varphi'$.

In \oursc, both distance spaces $\mathcal{D}_a$ and $\mathcal{D}_b$ are determined by the size of $\mathcal{S}$.
Since they use mechanism $\mechanism_{-}$ independently, 
their expected error increases linearly with the increase of $|\mathcal{D}_a|$ and $|\mathcal{D}_b|$.
Meanwhile, the error in $\mathcal{D}_a$ affects the error in $\mathcal{D}_b$ linearly.

\subsubsection{Comparison with Geo-indistinguishability} \label{subsubsec:geo-indistinguishability}

Geo-indistinguish\-ability~\cite{DBLP:conf/ccs/AndresBCP13} (Geo-Ind for brevity, also called metric privacy~\cite{DBLP:journals/popets/QiuLPSX25} in recent works) is a more general (or weaker) version of LDP. 
From the perspective of Geo-ind, LDP force the proximity-aware function $d(x, y)$ in Geo-ind to be $d(x, y) = 1$ for all $x, y \in S$, 
ignoring that different $x$ and $y$ have different $d(x, y)$ in the context of their locations. 
Conversely, if the upper bound $\max_{x, y \in S} d(x, y) \leq d^*$ holds in $S$, then TraCS also satisfies $d^*\varepsilon$-Geo-ind. 
However, this trivial satisfaction of $d^*\varepsilon$-Geo-ind is highly suboptimal, 
since it uses the global upper bound of $d(x, y)$ rather than the fine-grained, actual values of $d(x, y)$ for different location pairs of $x$ and $y$.

We omit trajectory utility evaluations against Geo-ind as it is a different privacy model from LDP.
Generally, a well-designed Geo-ind mechanism can achieve better data utility than LDP mechanisms.
For piecewise-shaped mechanisms, it is possible to tailor different pieces to account for varying $d(x, y)$ for different $x, y$, 
thereby satisfying Geo-ind while achieving better data utility than the current LDP mechanisms.

\subsubsection{Other LDP Mechanisms for Bounded and Continuous Space} \label{subsubsec:other_bounded_ldp}

In this paper, we employ utility-optimized piecewise-based mechanisms (OGPM)~\cite{DBLP:journals/popets/ZhengMH25} for bounded numerical domains to design the direction and distance perturbations in \ours.
Other LDP mechanisms for bounded numerical domains can also be incorporated into \ours, but most of them may require redesign to suit the direction perturbation,
as the circular domain $\pdomain = [0, 2\pi)$ has a different distance metric than linear domains.
One exception is the Purkayastha mechanism~\cite{DBLP:conf/ccs/WeggenmannK21},
which is designed for sphere $\mathbb{S}^{n-1}$, e.g. circular lines when $n=2$.
It can be an alternative to $\mathcal{M}_\circ$, but not be directly applicable to $S\subset\mathbb{R}^2$.
We don't leverage it for direction perturbation because of its sampling complexity (Section~3.4.2 in~\cite{DBLP:conf/ccs/WeggenmannK21}) and higher MSE~\cite{DBLP:journals/popets/ZhengMH25}.

Another category of LDP mechanisms for bounded numerical domains is truncated mechanisms, 
including truncated Laplace~\cite{DBLP:journals/tdp/000619,DBLP:journals/jpc/HolohanABA20} and truncated Gaussian mechanisms~\cite{DBLP:journals/jpc/ChenH24}.
These mechanisms modify Laplace and Gaussian distributions by redesigning distributions on a bounded domain, e.g. $[0,1)$.
However, truncated mechanisms cannot preserve the same privacy level as their original counterparts~\cite{DBLP:journals/tdp/000619,DBLP:journals/jpc/HolohanABA20,DBLP:journals/jpc/ChenH24}.
Determining their new privacy level is non-trivial and typically relies on numerical testing algorithms~\cite{DBLP:journals/jpc/HolohanABA20,DBLP:journals/jpc/ChenH24} 
rather than closed-form solutions.
Compared with truncated mechanisms, piecewise-based mechanisms are explicitly designed for LDP in bounded numerical domains,
making them more effective and easier to analyze.



\subsubsection{Other Piecewise-based Mechanisms} \label{appendix:other_piecewise}

In Definition~\ref{def:direction_perturbation} and Definition~\ref{def:distance_perturbation},
we use specific instantiations for the parameters $p$ and $[l, r)$ in the piecewise-based mechanism.
Besides these instantiations, \ours can be integrated with any other piecewise-based mechanisms~\cite{DBLP:conf/icde/WangXYZHSS019,DBLP:conf/sigmod/Li0LLS20,DBLP:journals/tmc/MaZW24,10735696,DBLP:journals/iotj/ZhaoZYWWLNL21}
by redesigning them.
For example, SW~\cite{DBLP:conf/sigmod/Li0LLS20,DBLP:journals/popets/ZhengMH25} is a famous mechanism proposed for distribution estimation;
it is defined on 
$[0,1)\to [-b, 1+b)$, with $b$ as a parameter.
It uses different $p$ and $[l, r)$, and employs a strategy of maximizing the 
mutual information between the perturbed and sensitive data to determine $b$.
\ SW can be redesigned~\cite{DBLP:journals/popets/ZhengMH25} for the direction perturbation and distance perturbation in \oursd and \oursc.
Appendix~\ref{appendix:resigned_sw} provides the detailed redesign of SW for them
and shows intuitive comparisons with Definition~\ref{def:direction_perturbation} and Definition~\ref{def:distance_perturbation},
and Appendix~\ref{appendix:resigned_sw_exp} provides a comparison.

\subsubsection{Other Shapes of Location Space}

The design of LDP mechanisms is space-specific,
implying that extensions to other shapes of location space are not straightforward.

\textbf{By post-processing.} 
A practical workaround for non-rectangular continuous domains is to post-process perturbed locations so they lie inside the target region. 
If a perturbed location falls outside the shape, 
project it to the nearest point inside the shape (for polygons this is the closest point on the boundary). 
Because the shape is public and projection is a data-independent post-processing step, 
it does not weaken the LDP guarantee. This approach is especially useful for irregular shapes or unions of multiple cities.

\textbf{Direct extension.}
Directly extending \ours to other shapes requires decomposing the domain into independent subspaces that are compatible with our perturbation primitives.
For \ours, the challenge lies in determining the size of each subspace,
i.e. $|\mathcal{D}_{\varphi}|$, $|\mathcal{D}_{r(\varphi)}|$, $|\mathcal{D}_a|$, and $|\mathcal{D}_b|$.
Among them, $|\mathcal{D}_{r(\varphi)}|$ is not apparent but can be calculated for rectangular location spaces.
\ Although calculating $|\mathcal{D}_{r(\varphi)}|$ is generally non-trivial for irregular shapes,
\ours can be extended to parallelogram location spaces by a linear transformation to a rectangular space.
Meanwhile, \oursd can be extended to circular \emph{regions}, which are widely used in relative-location representations (e.g. radar and sonar systems).
Extending \oursc to circular regions is more challenging, since such regions do not admit a natural Cartesian decomposition into independent coordinates.
Appendix~\ref{appendix:resigned_sw_exp} reports experimental results of \oursd on circular location spaces.

\section{Evaluation} \label{sec:evaluation}

This section evaluates the performance of \ours on both continuous and discrete spaces. 

\subsection{Results on Continuous Space}
The strawman method---our extension of $k$-RR for direction perturb\-ation---can be combined with the distance perturbation mechanism in \oursd to ensure LDP in continuous spaces $\mathcal{S} \subset \mathbb{R}^2$.
For simplicity, we continue to refer to this extended approach as the strawman method, and compare it with \ours in continuous space.
We also include the planar Laplace mechanism~\cite{DBLP:conf/ccs/AndresBCP13} $+$ clipping (truncation) as a baseline.
This baseline perturbs each location by adding noise drawn from a 2D planar Laplace distribution, and then truncates the output to the location space $\mathcal{S}$ if it falls outside.\footnote{
    Appendix~\ref{appendix:2d_laplace} provides more details on the planar Laplace mechanism + clipping.
}

\subsubsection{Setup} \label{sec:continuous_exp_setup}

\ours requires the following parameters: the location space $\mathcal{S}$, the sensitive trajectory $\mathcal{T}$, 
the privacy parameter $\varepsilon$, and $\varepsilon_d$ for direction perturbation in \oursd. 
The strawman method also requires an instance of $k$ in $k$-RR.
We use configurations as follows for the experiments.
\begin{itemize}
    \item Location space $\mathcal{S}$: We consider synthetic and real-world settings. 
    (i) The default location space is $[0,1) \times [0,1)$.
    Without loss of generality, we consider a larger space $\mathcal{S} = [0,2)\times [0,10)$ for comparison.
    We generate $100$ random trajectories for each location space as synthetic datasets.
    Specifically, each trajectory is generated by randomly sampling a sequence of $100$ points in $\mathcal{S}$.
    This setup excludes dataset-specific effects to emphasize algorithmic performance.
    \ (ii) Areas of TKY and CHI: These two trajectory datasets were collected in Tokyo and Chicago,
    extracted from the Foursquare dataset~\cite{DBLP:journals/tsmc/YangZZY15} and Gowalla dataset~\cite{DBLP:conf/kdd/ChoML11}, respectively.
    Locations in TKY are within area $[139.47, 139.90)\times [35.51, 35.86)$ and CHI within area $[-87.9, -87.5)\times [41.6, 42.0)$.
    These two areas are visually shown in Figure~\ref{appendix:fig:location_spaces}.
    We use the first $100$ trajectories in each dataset.

    \item Privacy parameter $\varepsilon_d$: We set $\varepsilon_d = \varepsilon\pi / (\pi + 1)$ to heuristically balance the
    perturbation in $\mechanism_{\circ}$ and $\mechanism_{-}$ according to their domain sizes.\footnote{
        Appendix~\ref{appendix:epsilon_d} provides more discussion on setting $\varepsilon_d$.
    }
    \item $k$-RR: We set $k=6$ as the strawman method's default value, and consider $3$-RR and $12$-RR for comparison.
\end{itemize}
We evaluate $\varepsilon$ in the range $[2,10]$. For 2D discrete spaces with large location sets,
this range represents a moderate privacy level.\footnote{
    In Definition~\ref{def:edldp}, the privacy parameter $\varepsilon$ is defined with respect to a specific domain $\mathcal{X}$; its practical strength therefore depends on the size of $\mathcal{X}$.
    }
For example, with $k$-RR mechanism and $|\mathcal{X}| = 3600$ locations,
$\varepsilon=10$ yields the true location output with probability $\approx 0.85$, while $\varepsilon=2$ results in a much lower probability of $\approx 0.002$. 
The Exponential mechanism exhibits similar behavior, though the exact probabilities depend on the score function.
In such cases where $|\mathcal{X}|$ is large, even though the probability ratio $\exp(\varepsilon)$ in the LDP definition is high, 
it remains difficult for an adversary to infer the true location with high probability due to the vast number of possible locations.

\textbf{Trajectory utility metrics.}
We follow the Euclidean distance to measure the utility of the perturbed trajectory~\cite{DBLP:journals/pvldb/Zhang000H23,DBLP:journals/pvldb/CunninghamCFS21,DBLP:conf/ccs/WangH0QH22}.
The distance between two locations $\tau_i$ and $\tau'_i$ is defined as $\|\tau_i - \tau'_i\|_2$.
Therefore, given a sensitive trajectory $\mathcal{T}$ and a perturbed trajectory $\mathcal{T}'$,
the average error (AE) among all locations in the trajectory is:
\begin{equation*}
    AE(\mathcal{T}, \mathcal{T}') = \frac{1}{|\mathcal{T}|}\sum_{i=1}^{|\mathcal{T}|} \|\tau_i - \tau'_i\|_2,
\end{equation*}
where $|\mathcal{T}|$ is the number of locations in the trajectory.
A smaller AE indicates better trajectory utility.
We compute the average AE across all trajectories in the dataset for comparison.
Although AE is not a perfect utility measure, e.g. it may not fully reflect the requirements of specific downstream tasks, 
it remains the commonly used metric for evaluating the utility of perturbed trajectories~\cite{DBLP:journals/popets/MirandaPascualGPFS23,DBLP:journals/eswa/ZhaoPC20,DBLP:journals/pvldb/Zhang000H23,DBLP:journals/pvldb/CunninghamCFS21,DBLP:conf/ccs/WangH0QH22}.
Moreover, many other trajectory utility metrics are directly or indirectly related to AE, such as the preservation of range queries and hotspots~\cite{DBLP:journals/pvldb/CunninghamCFS21,DBLP:journals/pvldb/Zhang000H23}.
For brevity, we defer these additional metrics to Section~\ref{appendix:other_metrics}.

\subsubsection{Error Comparison on Synthetic Datasets}

\begin{figure}[t]
    \centering
    \begin{subfigure}[b]{0.45\linewidth}
        \centering
        \includegraphics[width=0.98\textwidth]{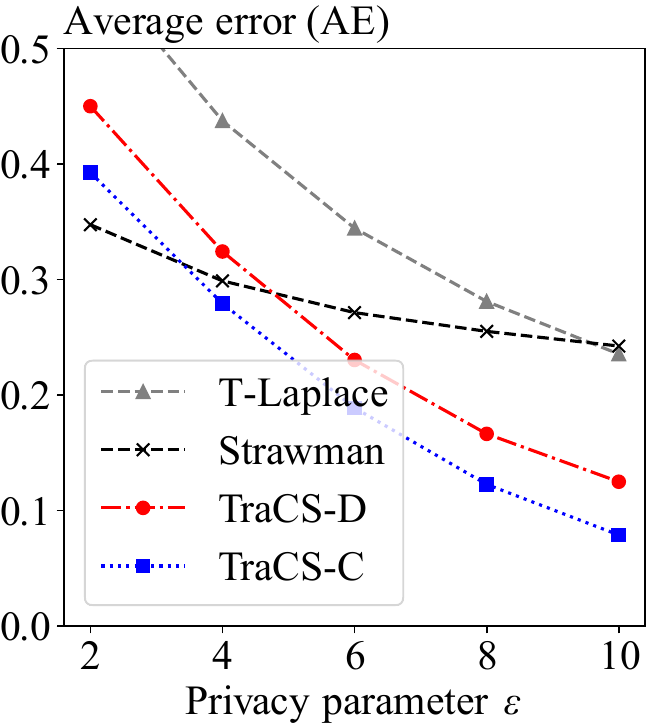}
        \caption{$\mathcal{S} = [0,1)\times[0,1)$}
        \label{fig:exp:1_1}
    \end{subfigure}
    \hfill
    \begin{subfigure}[b]{0.45\linewidth}
        \centering
        \includegraphics[width=0.98\textwidth]{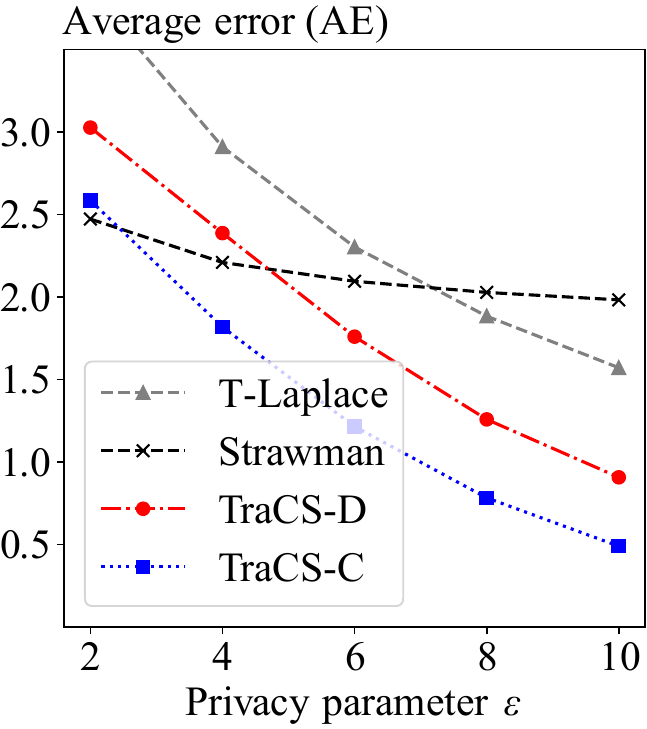}
        \caption{$\mathcal{S} = [0,2)\times[0, 10)$}
        \label{fig:exp:1_2}
    \end{subfigure}
    \caption{Comparison on synthetic datasets.}
    \label{fig:exp:1}
\end{figure}

\begin{figure}[t]
    \begin{minipage}{0.45\linewidth}
        \centering
        \includegraphics[width=0.98\textwidth]{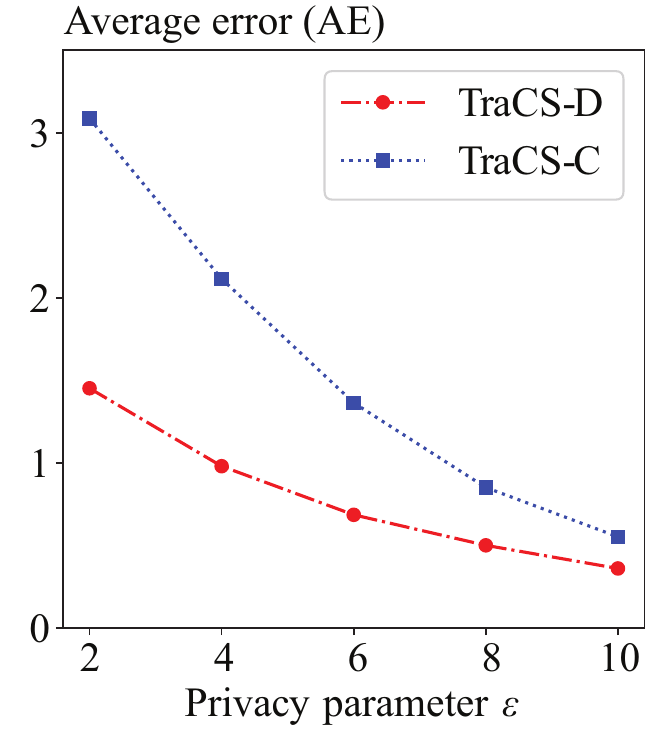}
        \caption{Condition \oursd outperforms \oursc.}
        \label{fig:exp:1_3}
    \end{minipage}
    \hfill
    \begin{minipage}{0.45\linewidth}
        \centering
        \includegraphics[width=0.98\textwidth]{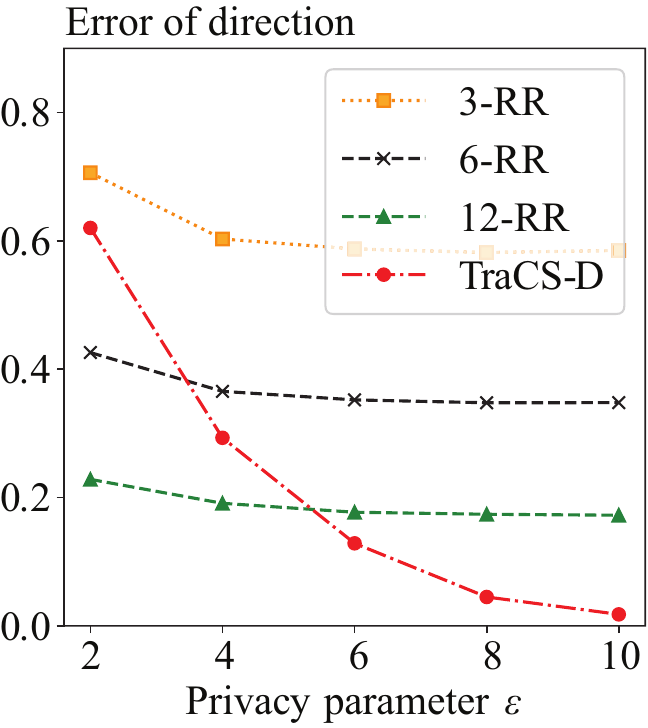}
        \caption{\oursd vs $k$-RR + uniform sampling.}
        \label{fig:exp:2}
    \end{minipage}
\end{figure}

Figure~\ref{fig:exp:1} presents the error comparison between \ours and the strawman method.
We can observe the error advantage of \ours over the truncated Laplace mechanism (T-Laplace) across all $\varepsilon$ values,
and over the strawman method as the privacy parameter $\varepsilon$ increases.
Specifically, in Figure~\ref{fig:exp:1_1}, \oursc exhibits smaller AE when $\varepsilon \gtrsim 3$,
and \oursd exhibits smaller AE when $\varepsilon \gtrsim 5$.
When the location space expands to $[0,2)\times [0,10)$, i.e. Figure~\ref{fig:exp:1_2},
the advantage of \ours becomes more significant.
\ Among \ours, \oursc consistently has a smaller AE than \oursd.
This is because \oursd often deals with larger distance spaces $\mathcal{D}_{r(\varphi)}$, which also magnifies the error of direction perturbation.
\ Statistically, the mean AE of \oursd is $91.1\%$ of the strawman method in Figure~\ref{fig:exp:1_1} and $86.6\%$ in Figure~\ref{fig:exp:1_2}
across all the $\varepsilon$ values.
For TraCS-C the corresponding ratios are $75.5\%$ in Figure~\ref{fig:exp:1_1} and $64.0\%$ in Figure~\ref{fig:exp:1_2}.

\textbf{Conditions \oursd outperforms \oursc.}
The error of \oursc depends on the size of the distance spaces $\mathcal{D}_a$ and $\mathcal{D}_b$,
while the error of \oursd depends only on the size of the distance space $\mathcal{D}_{r(\varphi)}$.
Therefore, if $\mathcal{D}_{r(\varphi)}$ is majorly smaller than $\mathcal{D}_a$ or $\mathcal{D}_b$,
\oursd will outperform \oursc.
\ We conduct experiments to verify this condition.
Specifically, we set $\mathcal{S} = [0,2)\times [0,10)$, and choose a sensitive trajectory along the $x$-axis, 
i.e. the short side of $\mathcal{S}$.
Then we perturb the trajectory $1000$ times using \oursc and \oursd and calculate the average AE.
In this case, $|\mathcal{D}_{r(\varphi)}|\approx 2$, much smaller than $|\mathcal{D}_b| = 10$.
Consequently, \oursd exhibits smaller AE than \oursc across all the $\varepsilon$ values,
as shown in Figure~\ref{fig:exp:1_3}.
In this experiment, the mean AE of \oursd is $49.8\%$ of \oursc across all the $\varepsilon$ values.

\textbf{Error of direction perturbation.}
Figure~\ref{fig:exp:2} shows the average error of direction perturbation in \oursd compared with the strawman method.
We collect the sensitive directions of the trajectories in the location space $\traspace = [0,1)\times [0,1)$, and perform direction perturbation using \oursd and the strawman method.
For the strawman method, we use $k$-RR with $k=3,6,12$.
This experiment reflects empirical error results of mechanism $\mechanism_{\circ}$ in Definition~\ref{def:direction_perturbation} and the $k$-RR + uniform sampling.
\ We can observe an exponentially decreasing error of \oursd as $\varepsilon$ increases,
as stated in Theorem~\ref{thm:tracs-d-variance}.
In contrast, the strawman method exhibits an eventually stable error as $\varepsilon$ increases, due to the inherent inner-sector error of $k$-RR.

\begin{figure}[t]
    \centering
    \begin{subfigure}[b]{0.465\linewidth}
        \centering
        \includegraphics[width=0.98\textwidth]{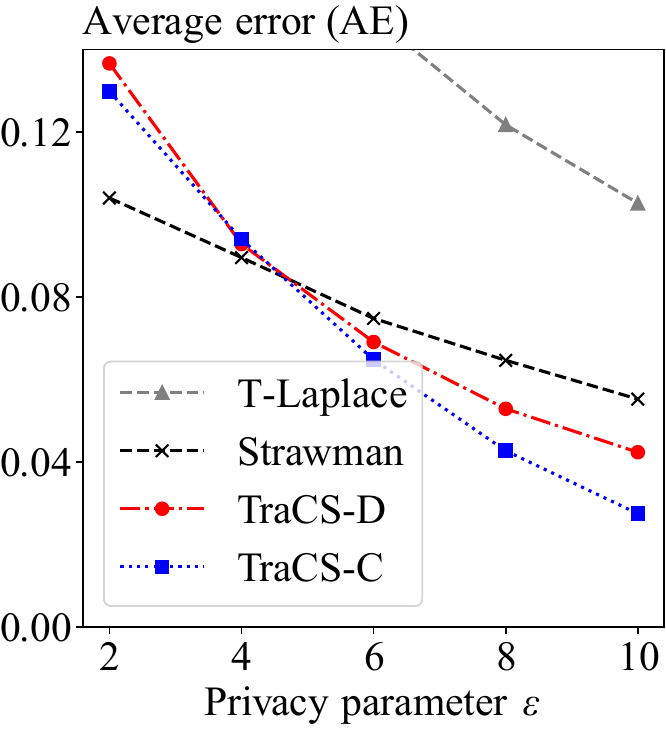}
        \caption{TKY dataset}
        \label{fig:exp:5_tky}
    \end{subfigure}
    \hfill
    \begin{subfigure}[b]{0.465\linewidth}
        \centering
        \includegraphics[width=0.98\textwidth]{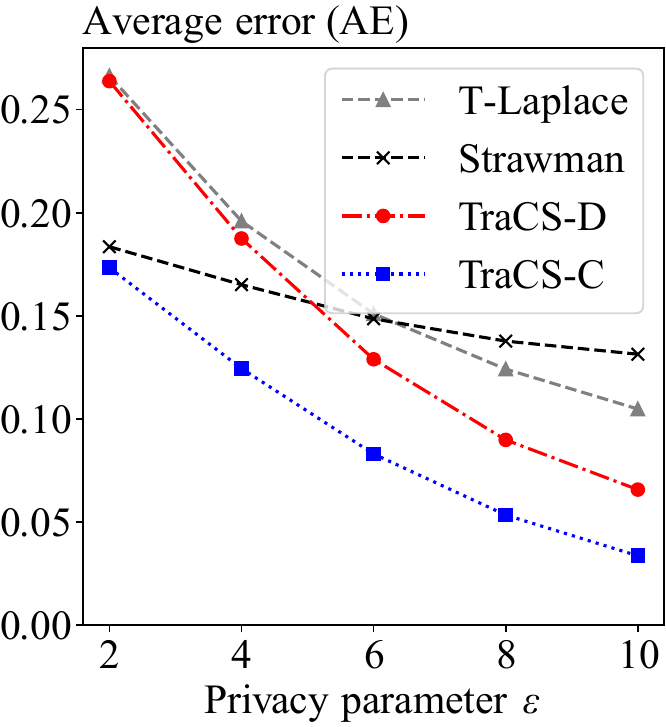}
        \caption{CHI dataset}
        \label{fig:exp:5_chi}
    \end{subfigure}
    \caption{Comparison on real-world datasets.}
    \label{fig:exp:5}
\end{figure}

\subsubsection{Error Comparison on Real-world Datasets}

Figure~\ref{fig:exp:5} shows the error comparison on the TKY and CHI areas. 
The trends largely mirror those observed on the synthetic datasets.
T-Laplace incurs substantially larger AE than \ours and the strawman method on the TKY dataset.
Across all $\varepsilon$ values, the mean AE of \oursd is $96.8\%$ of the strawman method on TKY and $94.5\%$ on CHI, 
for \oursc the corresponding ratios are $89.7\%$ and $61.2\%$. 
The performance gap is smaller on TKY than on CHI, because TKY's location space is more compact, 
which lowers the error values for all perturbation methods.

\subsubsection{Choose \oursd or \oursc}

From the above experiments, we can conclude criteria for choosing \oursd or \oursc:
(i) \oursc generally has better trajectory utility than \oursd for random trajectories;
(ii) \oursd is better for specific trajectories where their distance spaces are smaller than in \oursc,
and suitable for circular areas.

\subsection{Results on Discrete Space}

Our approach can be applied to discrete spaces by rounding each perturbed location to its nearest discrete point. 
Importantly, the privacy guarantee is established in the underlying continuous domain and therefore still holds after rounding, 
regardless of the chosen discretization. 
In the following, we evaluate the performance of \ours on discrete spaces and compare it with existing methods.

\subsubsection{Setup}

Our evaluation includes the following LDP methods for collecting trajectories in discrete spaces:

\begin{itemize}
    \item NGram~\cite{DBLP:journals/pvldb/CunninghamCFS21}: It is often impractical to 
    acquire reachability knowledge in practice~\cite{DBLP:journals/pvldb/Zhang000H23}.
    We instead consider a strong reachability constraint, i.e. from each location, the next location can only be within a distance of $0.75 \cdot \hat{a}$,\footnote{
        While the original NGram employs much stricter time-reachability constraints (e.g. the next location must be within $0.1$h $\cdot$ $4$km/h), this constraint is also \emph{known} to the adversary, which undermines the privacy guarantee.
        \ours are LDP mechanisms guaranteeing privacy for the entire domain and do not incorporate such constraints.
    } 
    where $\hat{a}$ is the maximal length of the location space.
    This constraint significantly reduces the location space for perturbation, especially for the locations near the boundary.              
    \item L-SRR~\cite{DBLP:conf/ccs/WangH0QH22}: We use group number $g=3$, the same as the empirical optimal value in their paper and code.
    Locations in the first group are within $0.3 \cdot \hat{a}$ distance from the sensitive location, 
    where $\hat{a}$ is the maximal length of the location space.
    The second group is within $0.6 \cdot \hat{a}$ distance from the sensitive location,
    and the remaining locations are in the third group.
    \item ATP~\cite{DBLP:journals/pvldb/Zhang000H23}: We set the $k$-RR parameter $k=6$ in ATP. 
    The privacy parameter $\varepsilon$ in original ATP is for the entire trajectory,
    we distribute it to each location's perturbation according to their algorithms.
    \item \ours: We use the same privacy parameter $\varepsilon_d$ as in the continuous space experiments.
    For each perturbed location from \ours, we round it to the nearest discrete location.
\end{itemize}

For a fair comparison, we use $\varepsilon$ as the privacy parameter for each location in all methods.
The comparison is conducted on the following datasets:
\begin{itemize}
    \item Synthetic datasets: We uniformly discretize $[0,1)\times [0,1)$ into $m$ discrete points and treat the $m$ points as the discrete location space.
    Specifically, we set $m=10\times 10$ and $m=60\times 60$ for comparison.
    Along with them, we generate $100$ random trajectories with $100$ locations each as synthetic datasets,
    i.e. each location is randomly sampled from the $m$ discrete points and connected sequentially.
    \item TKY and CHI (visualized in Figure~\ref{appendix:fig:location_spaces}): We treat every location in the TKY and CHI datasets as a discrete point, which together constitute the discrete location space.
    TKY contains $7,798$ discrete locations; CHI contains $1,000$ discrete locations. We use the first $100$ trajectories for evaluation.
\end{itemize}

For the TKY and CHI datasets, we perform the perturbation $5$ times for each trajectory and compute the average AE,
which is the same setup as in ATP~\cite{DBLP:journals/pvldb/Zhang000H23}.

\subsubsection{Error Comparison on Synthetic Datasets}

\begin{figure}[t]
    \centering
    \begin{subfigure}[b]{0.45\linewidth}
        \centering
        \includegraphics[width=0.98\textwidth]{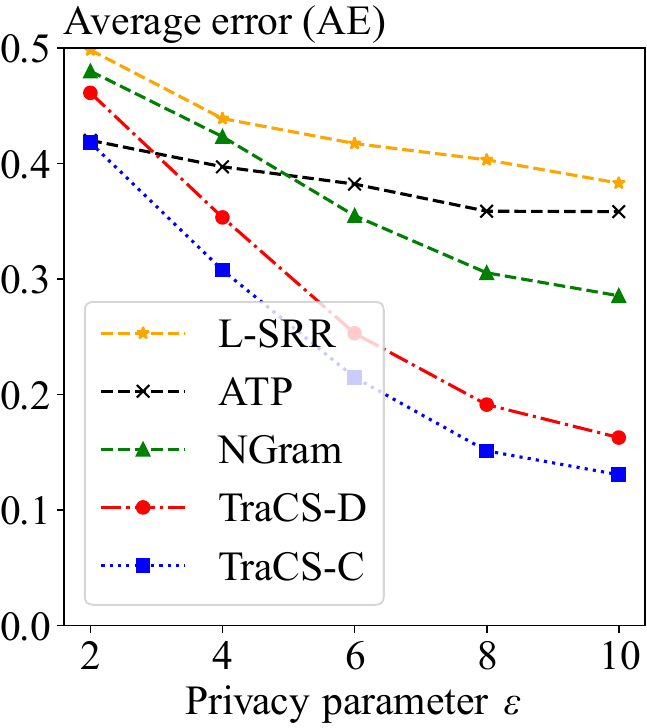}
        \caption{$m = 10\times 10$}
        \label{fig:exp:real:1_1}
    \end{subfigure}
    \hfill
    \begin{subfigure}[b]{0.45\linewidth}
        \centering
        \includegraphics[width=0.98\textwidth]{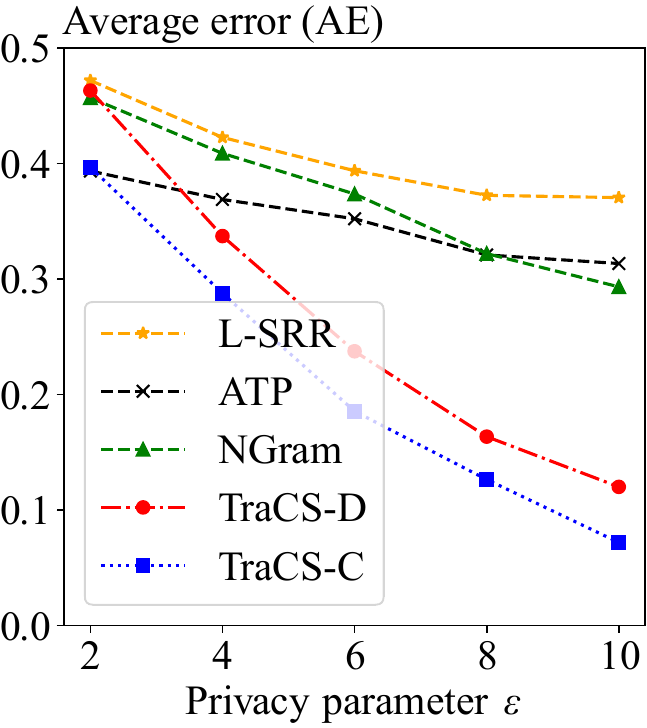}
        \caption{$m = 60\times 60$}
        \label{fig:exp:real:1_2}
    \end{subfigure}
    \caption{Comparison on synthetic datasets. $m$ is the number of discrete points in the location space.}
    \label{fig:exp:real_1}
\end{figure}

Figure~\ref{fig:exp:real_1} presents the error comparison on the synthetic dataset. Figure~\ref{fig:exp:real:1_1} shows results for $m=10\times 10$, 
and Figure~\ref{fig:exp:real:1_2} shows results for $m=60\times 60$. 
We observe that \oursc consistently outperforms NGram, L-SRR, and ATP across all the $\varepsilon$ values, 
and \oursd also outperforms them when $\varepsilon \gtrsim 3$. Moreover, the performance advantage of \ours becomes more significant as $m$ increases to $60\times 60$.
This is because finer-grained discretization allows \ours's rounding to align with the sensitive locations more accurately. Statistically, the mean AE of \oursc is $66.1\%$ of NGram,
$57.1\%$ of L-SRR, and $63.8\%$ of ATP in Figure~\ref{fig:exp:real:1_1}, 
and the proportions are $57.6\%$, $52.6\%$, and $61.1\%$ in Figure~\ref{fig:exp:real:1_2}. 
The mean AE of \oursd is $76.9\%$ of NGram, $66.4\%$ of L-SRR, and $74.2\%$ of ATP in Figure~\ref{fig:exp:real:1_1}, 
and the proportions are $71.2\%$, $65.0\%$, and $75.5\%$ in Figure~\ref{fig:exp:real:1_2}. 
These results indicate the advantage of \ours for synthetic trajectories in discrete spaces.

\subsubsection{Error Comparison on Real-world Datasets} \label{sec:real_discrete}
Figure~\ref{fig:exp:real_2} presents the error comparison on the TKY and CHI datasets. Compared to the synthetic dataset, NGram, L-SRR, and ATP achieve lower errors, particularly when $\varepsilon$ is small. 
However, as $\varepsilon$ increases, \ours still outperforms them.

\textbf{Reasons for the results.}
(i) ATP performs exceptionally well when $\varepsilon$ is small, particularly on the TKY dataset. This is because this dataset contains many ``condensed'' locations, i.e. no other locations near the trajectory. 
In such cases, if the bi-direction perturbation of ATP is accurate, the perturbed location is error-free.
(ii) The AE of ATP does not decrease significantly as $\varepsilon$ increases, which is consistent with their paper. 
NGram shows a similar trend, especially on the TKY dataset. 
This is due to the low performance of the Exponential mechanism for large location spaces, 
particularly when the locations are concentrated in a small area. 
In such cases, the score function struggles to differentiate the scores among the locations.
L-SRR exhibits a similar trend.
(iii) \ours exhibits larger AE than them when $\varepsilon$ is small because it focuses on the rectangular area formed by the longitude and latitude, instead of a set of discrete locations in the cities
(visualized in Figure~\ref{appendix:fig:location_spaces}).
Therefore, the perturbed locations may be far from the city center (where benefits discrete mechanisms) when $\varepsilon$ is small.

\begin{figure}[t]
    \centering
    \begin{subfigure}[b]{0.465\linewidth}
        \centering
        \includegraphics[width=0.98\textwidth]{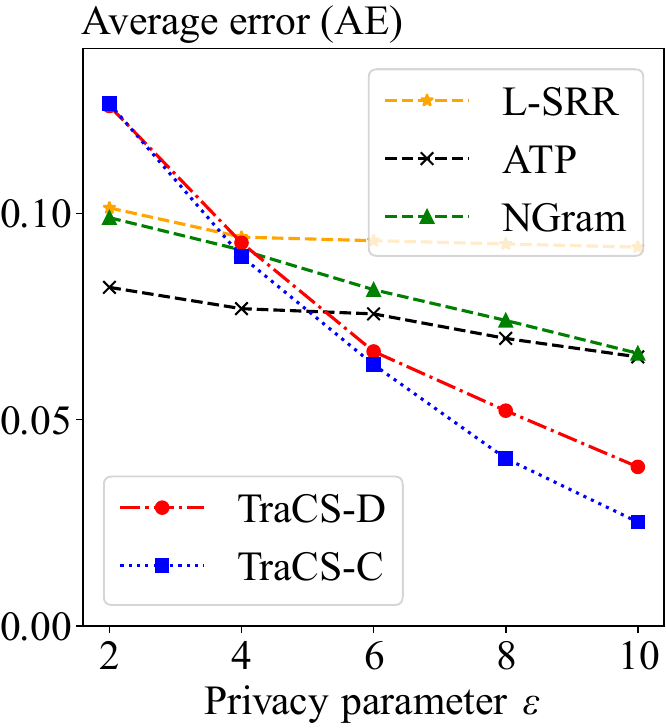}
        \caption{TKY dataset}
        \label{fig:exp:real:2_1}
    \end{subfigure}
    \hfill
    \begin{subfigure}[b]{0.465\linewidth}
        \centering
        \includegraphics[width=0.98\textwidth]{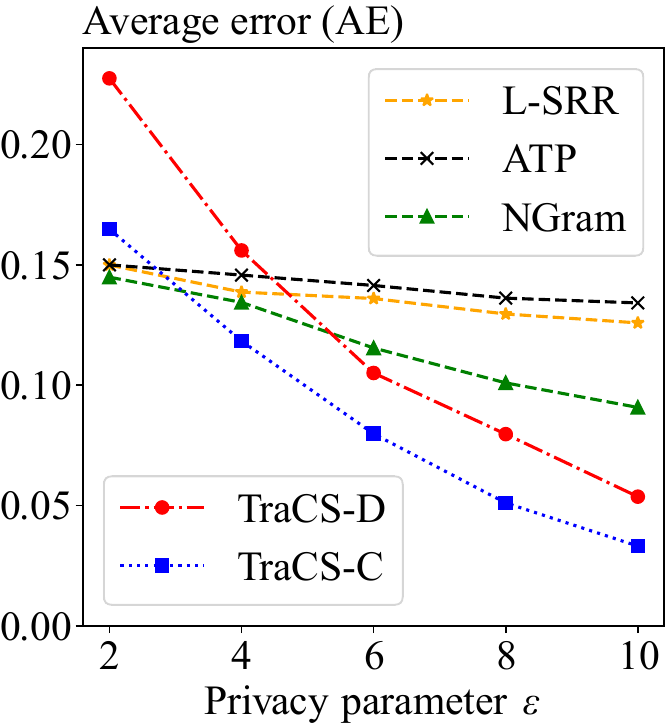}
        \caption{CHI dataset}
        \label{fig:exp:real:2_2}
    \end{subfigure}
    \caption{Comparison on real-world datasets. 
    \ours operates on rectangular areas encompassing the city, which may result in larger AE when $\varepsilon$ is small.}
    \label{fig:exp:real_2}
\end{figure}

\begin{table}[t]
    \centering
    \caption{Time cost comparison (in milliseconds).}
    \label{tab:time_cost}
    \setlength{\tabcolsep}{6pt} 
    \begin{tabular}{lrrr >{\columncolor{gray!10}}r >{\columncolor{gray!10}}r}
        \toprule
        & ATP & NGram & L-SRR & \oursd & \oursc \\
        \midrule
        \textbf{Total} & $145.7$ & $100.9$ & $6.2$ & $0.06$ & $0.05$ \\
        \midrule
        \textbf{Perturb} & $125.8$ & $92.8$ & $0.003$ & $0.018$ & $0.003$ \\
        \bottomrule
    \end{tabular}
\end{table}

\subsubsection{Time Cost Comparison}

Table~\ref{tab:time_cost} presents the time cost comparison on the TKY dataset. 
The time cost is measured in milliseconds and averaged per location. 
It shows that \oursd and \oursc are significantly faster than NGram, ATP, and L-SRR.
Compared to NGram and ATP, this efficiency is primarily due to the piecewise-based mechanisms in \ours, which require negligible time for perturbation compared to the Exponential mechanism.
Compared to L-SRR, \oursd and \oursc are also faster because they do not require grouping locations.

ATP has the highest time cost because it needs to process two copied trajectories and merges them.
NGram has a lower time cost than ATP but is still significantly higher than \ours. 
The perturbation procedure accounts for $86.3\%$ and $92.0\%$ of the total time cost in ATP and NGram, respectively.
L-SRR has the lowest time cost among the three state-of-the-art methods due to its efficient sampling mechanism,
however, its grouping procedure still incurs a non-negligible time cost.

The total time cost of \ours is less than $0.05\%$ of ATP and NGram, and less than $1\%$ of L-SRR.
The perturbation procedure of \oursd and \oursc only takes $30\%$ and $6\%$ time, respectively. 
\oursd has a higher time cost than \oursc because it needs to calculate the distance space $\mathcal{D}_{r(\varphi)}$ for each location in the trajectory, 
while \oursc can use the same $\mathcal{D}_a$ and $\mathcal{D}_b$ for all locations.

\subsubsection{Experimental Results for Other Metrics} \label{appendix:other_metrics}

This subsection evaluates the performance of \ours and existing methods under two additional metrics: range query preservation (Formula 17 in~\cite{DBLP:journals/pvldb/CunninghamCFS21}) and hotspot preservation (Section 6.2.4 in~\cite{DBLP:journals/pvldb/Zhang000H23}).
Given a threshold $\delta$, a perturbed location is considered correct if it lies within a $\delta$ distance of the corresponding sensitive location.
Formally, for a trajectory $\mathcal{T}$ and its perturbed trajectory $\mathcal{T}'$, the range query preservation (RQP) is defined as:
\[
RQP(\traj, \traj') = \frac{1}{|\mathcal{T}|}\sum_{i=1}^{|\mathcal{T}|} \mathbf{1}\{\|\tau_i - \tau'_i\|_2 \leq \delta\} \cdot 100\%,
\]
where $\mathbf{1}\{\cdot\}$ is the indicator function.
A higher RQP indicates better trajectory utility.
Hotspot preservation measures how many hotspots from a given set remain after perturbation.
Given a set of hotspots $\mathcal{H}$, the locations in $\mathcal{H}$ are considered preserved if their perturbed locations after rounding are also in $\mathcal{H}$.
Formally, the count difference (CD) of a trajectory in hotspot preservation is defined as:
\[
CD(\traj, \traj') = \sum_{i=1}^{|\mathcal{T}|} \mathbf{1}\{\tau_i \in \mathcal{H}\} - \sum_{i=1}^{|\mathcal{T}|} \mathbf{1}\{\tau_i \in \mathcal{H} \land \tau'_i \in \mathcal{H}\},
\]
where $\mathbf{1}\{\cdot\}$ is the indicator function.
A smaller CD indicates better trajectory utility.
\ We can find the relationship between AE and these two metrics.
For range query preservation with a given threshold $\delta$, when $AE \leq \delta$, the perturbed location is expected to satisfy the range query.
For hotspot preservation a smaller AE leads to better hotspot retention after rounding in \ours.

\begin{figure}[t]
    \centering
    \begin{subfigure}[b]{0.465\linewidth}
        \centering
        \includegraphics[width=0.98\textwidth]{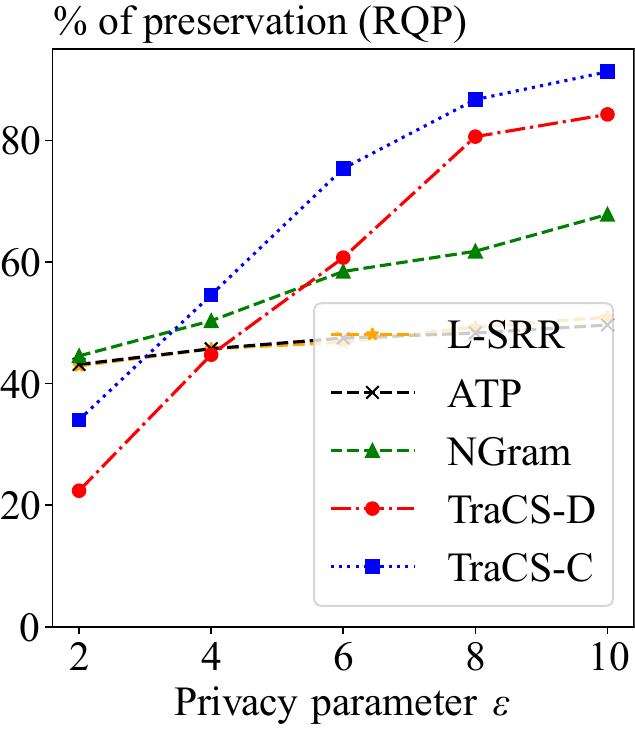}
        \caption{Range query preservation}
        \label{appendix:fig:range_query}
    \end{subfigure}
    \hfill
    \begin{subfigure}[b]{0.45\linewidth}
        \centering
        \includegraphics[width=0.98\textwidth]{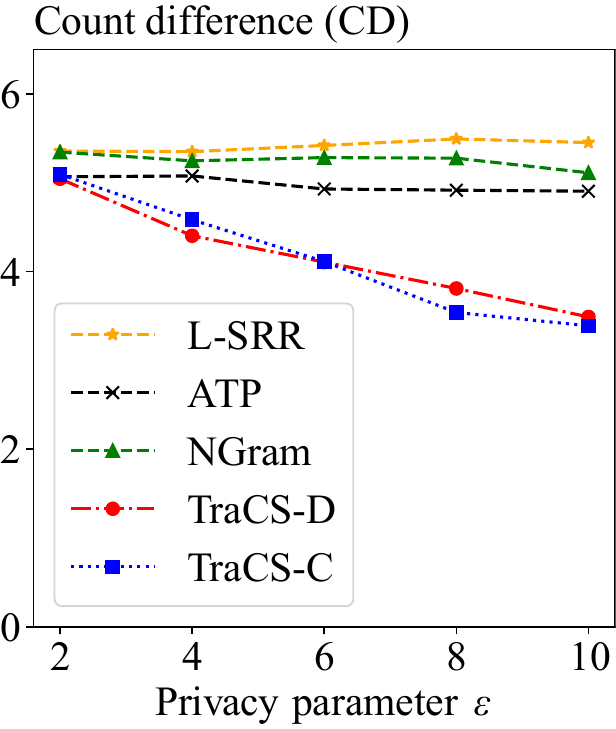}
        \caption{Hotspot preservation}
        \label{appendix:fig:hotspot}
    \end{subfigure}
    \caption{Comparison of other trajectory metrics on the CHI dataset. 
    Higher RQP and lower ACD indicate better performance.}
    \label{appendix:fig:other_metrics}
\end{figure}

We further evaluate the performance of \oursd and existing methods under varying privacy parameters $\varepsilon$ using these two metrics.
Specifically, we use the CHI dataset and adopt the following experimental setups:
\begin{itemize}
    \item Range query preservation: We set the threshold $\delta = 0.1$, and compute the RQP averaged over all trajectories.
    \item Hotspot preservation: The hotspot set $\mathcal{H}$ is defined as the first $20\%$ frequent locations in the CHI dataset and we compute the CD averaged over all trajectories.
\end{itemize}

The results are shown in Figure~\ref{appendix:fig:other_metrics}. 
It can be seen that \ours generally outperforms existing methods across different privacy levels,
demonstrating better effectiveness in preserving both range queries and hotspots.
Statistically, (i) the mean RQP across all $\varepsilon$ values for L-SRR, ATP, and NGram are 47.1\%, 46.8\%, and 56.5\%, respectively, 
while the mean RQP of \oursd and \oursc are 58.6\% and 68.4\%, respectively.
(ii) The mean CD across all $\varepsilon$ values for L-SRR, ATP, and NGram are 5.4, 4.9, and 5.2, respectively, 
while the mean CD of \oursd and \oursc are both 4.1.
Particularly, \ours shows improvement in hotspot preservation compared to existing methods across all privacy levels.

\subsubsection{Assigning $\varepsilon$ for a Whole Trajectory} \label{subsubsec:trajectory_level_epsilon}
We also evaluate \ours when the privacy parameter $\varepsilon$ is specified for an entire trajectory rather than per location.
Specifically, for \ours, L-SRR, NGram, and ATP, we set $\varepsilon$ at the trajectory level and allocate it uniformly across locations by assigning each location a parameter of $\varepsilon/|\mathcal{T}|$, where $|\mathcal{T}|$ is the trajectory length.
This uniform allocation is simple and requires no additional information, and we consider it the most robust choice.
In contrast, non-uniform allocations (e.g. assigning a larger parameter to utility-critical locations) may improve utility, 
but they typically rely on public notions of ``importance'' and can weaken privacy protection for those locations.
Appendix~\ref{appendix:trajectory_level_epsilon} provides results and discussion on this evaluation, 
showing that \ours outperforms existing methods in trajectory utility under this setting when $\varepsilon$ is large.

\subsection{Summary of Evaluation}

From the evaluation results on both continuous and discrete spaces, we can summarize the following findings:
\begin{itemize}
    \item In continuous spaces, \ours generally exhibits better trajectory utility than the strawman method,
    particularly when the privacy parameter $\varepsilon$ is large.
    Meanwhile, \oursc has better utility than \oursd for common-shape location space and trajectories.
    \item In discrete spaces, \ours generally outperforms NGram, L-SRR, and ATP,
    particularly when the privacy parameter $\varepsilon$ is large.
    Furthermore, \ours has significantly lower time cost, requiring less than $1\%$ of their time cost.
\end{itemize}
These findings validate the effectiveness and efficiency of \ours in collecting trajectory data under LDP in both continuous and discrete spaces.

\section{Conclusions}

This paper presents \oursd and \oursc, providing local differential privacy guarantees for trajectory collection in continuous location spaces.
\oursd uses a novel direction-distance perturbation procedure, while \oursc perturbs the Cartesian coordinates of each location.
These two methods can also be applied to discrete location spaces by rounding each perturbed location to the nearest discrete point, 
achieving better efficiency than existing discrete methods.
Trajectory utility of \ours is analyzed theoretically and evaluated empirically.
Evaluation results on discrete location spaces show that \ours outperforms state-of-the-art discrete methods in trajectory utility, particularly when the privacy parameter is large.

In summary, \ours demonstrates that collecting trajectory data
in continuous spaces generally better discrete approaches,
avoiding privacy and efficiency issues caused by
discretization, while can also be rounded to discrete spaces when necessary.


\bibliographystyle{ACM-Reference-Format}
\bibliography{reference}
\appendix
\section{Related Work} \label{sec:related_work}
\setcounter{footnote}{0}

This section reviews related trajectory collection methods under LDP and other quantifiable privacy notions.
More privacy-preserving trajectory collection methods can be found in recent surveys~\cite{DBLP:journals/popets/MirandaPascualGPFS23,DBLP:journals/popets/BuchholzAWNK24}.

\subsubsection*{Trajectory Collection under LDP}


We have disscussed three state-of-the-art trajectory collection methods under pure LDP:
NGram~\cite{DBLP:journals/pvldb/CunninghamCFS21}, L-SRR~\cite{DBLP:conf/ccs/WangH0QH22}, and ATP~\cite{DBLP:journals/pvldb/Zhang000H23}.
Among them, L-SRR was originally applied for origin-destination location pairs,
while NGram, ATP, and this paper focus on trajectory perturbation.
\ NGram attempts to ensure pure LDP for the ``trajectory space'', 
also named as trajectory-level or instance-level LDP in surveys~\cite{DBLP:journals/popets/MirandaPascualGPFS23,DBLP:journals/popets/BuchholzAWNK24},
which is a stronger privacy than the location space.
However, this is proved extremely challenging as the trajectory space increases exponentially with the number of possible locations in the location space.
L-SRR, ATP and this paper focus on ensuring pure LDP for the location space.
Unlike L-SRR, ATP and other existing methods based on discrete mechanisms, 
this paper focuses on continuous location spaces, which are more expressive than any fine-grained discrete location space.


Beyond the above three methods that design LDP mechanisms for trajectory collection, 
several works leverage external knowledge to improve the utility of LDP-based trajectory collection.
WF-LDPSR~\cite{DBLP:journals/compsec/LiXZZ25} adaptively determines each user's privacy protection level based on the sensitivity of the user's information, 
and applies a water-filling strategy to allocate privacy budgets during perturbation.
Regional popularity (i.e. the popularity of different regions in the location space) can also be exploited to constrain trajectory boundaries and improve utility~\cite{DBLP:journals/tifs/ZhangNFHZ25}.
In addition, recent work studies poisoning attacks against NGram and ATP~\cite{DBLP:journals/corr/abs-2503-07483}, 
where adversaries induce malicious users to submit poisoned trajectories to the data collector with the goal of promoting target patterns.

Another line of research is trajectory \emph{synthesis} for privacy-preser\-ving trajectory publication.
A typical work is LDPTrace~\cite{DBLP:journals/pvldb/DuHZFCZG23},
which synthesizes trajectories using a generative model.
LDPTrace discretizes the location space into cells and represents each trajectory as a sequence of cells.
Transition patterns between neighboring cells are perturbed using LDP mechanisms and collected by an untrusted curator.
The curator then trains a Markov chain model on the perturbed transition patterns to capture movement behaviors.
Following LDPTrace, ADGTrace~\cite{DBLP:journals/tmc/CaiLSZYZX26} also employs Markov chain models for trajectory synthesis; 
it does not provide LDP guarantees and leverages personal features to improve utility.
Compared to methods that perturb each trajectory, such as NGram, L-SRR, and ATP,
synthetic trajectories may be overly too random and not reflect the original trajectories if not personalized enough.

\subsubsection*{LDP Mechanisms for Bounded Numerical Domains}
The perturbation mechanisms used in \ours for direction and distance perturbation are utility-optimized piecewise-based mechanisms in OGPM~\cite{DBLP:journals/popets/ZhengMH25}.
Other LDP mechanisms for bounded numerical domains~\cite{DBLP:conf/ccs/WeggenmannK21,DBLP:journals/tdp/000619,DBLP:journals/jpc/HolohanABA20} can also be incorporated into \ours; refer to Section~\ref{subsubsec:other_bounded_ldp} for discussions and comparisons.
Compared with OGPM, which focuses on optimizing the 1D mechanisms' data utility, the design of 2D mechanisms in \ours needs to consider the unique characteristics of trajectory data and 2D spaces.
In particular, the direction-distance perturbation in \oursd to guarantee LDP for a rectangular region requires careful design.
Meanwhile, this paper emphasizes the overlooked fundamental issues inherent in discrete-space LDP mechanisms for trajectory collection~\cite{DBLP:journals/pvldb/CunninghamCFS21,DBLP:journals/pvldb/Zhang000H23,DBLP:conf/ccs/WangH0QH22},
which make privacy, utility, and efficiency discretization-dependent and prevent guarantees for continuous spaces.
\ours demonstrates that designing mechanisms directly in continuous space (via building blocks like OGPM, etc.) avoids these issues and subsumes discrete cases via rounding.

\subsubsection*{Trajectory Collection under DP}
As a weaker privacy notion than LDP, (central) DP~\cite{DBLP:conf/icalp/Dwork06,DBLP:journals/fttcs/DworkR14}
assumes a trusted curator who processes the raw data and releases the perturbed data.
It requires that only neighboring locations in the location space be indistinguishable, rather than any two arbitrary locations.
Compared to LDP, the concept of ``neighboring'' in DP has many interpretations.
A comprehensive survey on DP-based trajectory collection and publication is available in~\cite{DBLP:journals/popets/MirandaPascualGPFS23}.

The use of direction information in trajectory collection dates back to an early work SDD~\cite{DBLP:conf/ssdbm/JiangSBKT13},
which attempts to add unbounded Laplace noise to the direction information.
It provides a DP guarantee for unbounded continuous location spaces, but results in poor utility.
The notion of ``n-gram'' in trajectory collection
dates back to~\cite{DBLP:conf/ccs/ChenAC12}, where it abstracts a trajectory pattern as a gram.
The idea of hierarchical decomposition can also be found in~\cite{DBLP:conf/sigmod/ZhangXX16},
where a spatial decomposition method is used for trajectory synthesis.

External knowledge is also widely used in DP-based trajectory collection methods.
NGram's prior work~\cite{DBLP:conf/ssd/CunninghamCF21} provides a DP-based solution for trajectory synthesis
by exploiting the public knowledge of the road network.
Focusing on particular privacy-concerned locations, or sensitive zones~\cite{DBLP:journals/compsec/WangK20}, can also
improve the trajectory utility while maintaining partial privacy guarantees.

\subsubsection*{Trajectory Collection under Other Privacy Notions}
A widely used relaxed privacy notion than LDP for trajectory collection is Geo-indistinguishability~\cite{DBLP:conf/ccs/AndresBCP13,DBLP:journals/popets/MendesCV20},
where the indistinguishability between two locations is a function of their distance, i.e. privacy can decrease with distance.
Input-discriminative LDP~\cite{DBLP:conf/icde/Gu0XC20} goes one step further, assigning each location its own privacy level.
Threshold-integrated LDP~\cite{10.1007/978-981-97-5562-2_5} defines a threshold to constrain the sensitive region near a location.
This is similar to NGram's reachability constraint but with a formal privacy definition.
\ More relaxed LDP notions for trajectory collection can be found in paper survey~\cite{DBLP:journals/popets/MirandaPascualGPFS23}.
Although these relaxed LDP notions generally provide better trajectory utility than pure LDP, 
they sacrifice the strong privacy guarantee of pure LDP, and often need hyperparameters to define the privacy level.

\section{Proofs} \label{appendix:proofs}

\subsection{LDP Proof of Definition~\ref{def:direction_perturbation}} \label{appendix:ldp_tracs-d}

The proof applies to all piecewise-based mechanisms, including the direction perturbation mechanism in Definition~\ref{def:direction_perturbation} 
and the distance perturbation mechanism in Definition~\ref{def:distance_perturbation}.

\begin{proof}
    For any two sensitive directions $\varphi_1$, $\varphi_2$ and any $\varphi' \in [0, 2\pi)$, we have:
    \begin{equation*}
        \frac{pdf[\mechanism_\circ(\varphi_1) = \varphi']}{pdf[\mechanism_\circ(\varphi_2) = \varphi']} \leq \frac{p_\varepsilon}{p_\varepsilon / \exp(\varepsilon)} = \exp(\varepsilon).
    \end{equation*}
    The inequality holds because any $\varphi'$ is sampled with a probability of at most $p_\varepsilon$ and at least $p_\varepsilon / \exp(\varepsilon)$, 
    regardless of whether the input is $\varphi_1$, $\varphi_2$ or any other value.
\end{proof}

\subsection{Proof of Theorem~\ref{thm:tracs-d}} \label{appendix:proof_tracs-d}

\begin{proof}
    Denote \oursd in Algorithm~\ref{alg:tracs-d} as $\mechanism$.
    We need to prove that $\mechanism$ satisfies $\varepsilon$-LDP for each location $\tau \in \traspace$,
    i.e. 
    $$
    \forall \tau_1, \tau_2, \tau' \in \mathcal{S}: \frac{pdf[\mathcal{M}(\tau_1) = \tau']}{pdf[\mathcal{M}(\tau_2) = \tau']} \leq \exp(\varepsilon).
    $$
    In \oursd, each location $\tau$ is represented in a direction-distance space with unique coordinates $(\varphi, r(\varphi))$.
    The key insight of the proof is that the perturbation of $\varphi$ and $r(\varphi)$ satisfies $\varepsilon_d$-LDP and $(\varepsilon - \varepsilon_d)$-LDP, respectively.
    Then \oursd ensures $\varepsilon$-LDP for the location space $\traspace = \mathcal{D}_\varphi \otimes \mathcal{D}_{r(\varphi)}$.
    We provide proofs by the Composition Theorem~\ref{thm:sequential} and by computing the 2D $pdf$ ratios in the LDP definition.

    \textbf{By Composition Theorem~\ref{thm:sequential}.} The definitions of piecewise-based mechanisms $\mechanism_\circ$ and $\mechanism_{-}$ imply that they satisfy LDP.
    Specifically, for $\mechanism_\circ(\varphi;\varepsilon_d)$, 
    $$
    \forall \varphi_1, \varphi_2, \varphi' \in [0, 2\pi): \frac{pdf[\mechanism_\circ(\varphi_1) = \varphi']}{pdf[\mechanism_\circ(\varphi_2) = \varphi']} \leq \exp(\varepsilon_d),
    $$
    is guaranteed by Definition~\ref{def:direction_perturbation} of $\mechanism_\circ$.
    Similarly, $\mechanism_{-}(\overline{r}(\varphi);\varepsilon - \varepsilon_d)$ in Definition~\ref{def:distance_perturbation} ensures $(\varepsilon - \varepsilon_d)$-LDP
    for $\overline{r}(\varphi)$.
    For clarity, let $\overline{r} \coloneqq \overline{r}(\varphi)$, which yields
    $$
    \forall \overline{r}_1, \overline{r}_2, \overline{r}' \in [0, 1): \frac{pdf[\mechanism_{-}(\overline{r}_1) = \overline{r}']}{pdf[\mechanism_{-}(\overline{r}_2) = \overline{r}']} \leq \exp(\varepsilon - \varepsilon_d).
    $$
    The subsequent linear mapping from $\overline{r}\in [0,1)$ to $r \in \mathcal{D}_{r(\varphi')}$ is a post-processing step that does not affect the LDP property.
    Specifically, denote the linear mapping as $g:[0,1)\to \mathcal{D}_{r(\varphi')}$, and denote $r_1 = g(\mechanism_{-}(\overline{r}_1))$ and $r_2 = g(\mechanism_{-}(\overline{r}_2))$ as two random variables.
    It is clear that
    $$
    \forall r_1, r_2, r' \in \mathcal{D}_{r(\varphi')}: \frac{pdf[r_1 = r']}{pdf[r_2 = r']} \leq \exp(\varepsilon - \varepsilon_d),
    $$
    because the linear mapping $g$ is a deterministic function without randomness.
    Thus, $\mechanism_{-}$ ensures $(\varepsilon - \varepsilon_d)$-LDP for $\mathcal{D}_{r(\varphi')}$.

    Combining these results with Sequential Composition Theorem~\ref{thm:sequential}, 
    we can conclude that $\mechanism = (\mechanism_\circ, \mechanism_{-})$ ensures $\varepsilon$-LDP for each location $\tau = (\varphi, r(\varphi)) \in \traspace$.
    Consequently, the entire trajectory with $n$ locations satisfies $n\varepsilon$-LDP.

    \textbf{By computing the 2D \bm{$pdf$} ratio.}
    Assuming two sensitive locations $\tau_1=(\varphi_1,\overline{r}_1)$ and $\tau_2=(\varphi_2,\overline{r}_2)$,
    where $\overline{r}_1$ and $\overline{r}_2$ are normalized distances from the reference location along $\varphi_1$ and $\varphi_2$, respectively.
    For any output $\tau' = (\varphi',\overline{r}')$, the 2D $pdf$ of getting it from $\tau_1$ and $\tau_2$ are
    \begin{equation*}
        \begin{split}
            pdf[\mathcal{M}(\tau_1) = \tau'] = pdf[\mathcal{M}_\circ (\varphi_1)=\varphi']\cdot pdf[\mathcal{M}_-(\overline{r}_1)=\overline{r}'], \\
            pdf[\mathcal{M}(\tau_2) = \tau'] = pdf[\mathcal{M}_\circ (\varphi_2)=\varphi']\cdot pdf[\mathcal{M}_-(\overline{r}_2)=\overline{r}'],
        \end{split}
    \end{equation*}
    where $\mathcal{M}_-$ is applied to the \emph{same} normalized distance space (from the reference location to the boundary of $S$ along $\varphi'$). 
    The ratio is bounded by $\exp(\varepsilon_d)\cdot \exp(\varepsilon - \varepsilon_d) = \exp(\varepsilon)$ due to the definition of $\mathcal{M}_\circ$ and $\mathcal{M}_-$. 
    Consequently, the entire trajectory with $n$ locations satisfies $n\varepsilon$-LDP.
\end{proof}

\subsection{Proof of Theorem~\ref{thm:tracs-d-variance}} \label{appendix:proof_variance}

\begin{proof}
    We prove the $\Theta(e^{-\varepsilon/2})$ convergence rate by calculating the MSE of the perturbation mechanism.
    Specifically, the MSE of $\mechanism_{-}$ is calculated as follows:
    \begin{equation*}
        \mathrm{MSE}[\mechanism_{-}(\overline{r})] = \int_{0}^{1} (\overline{r}^* - \overline{r})^2 pdf[\mechanism_{-}(\overline{r}) = \overline{r}^*] d\overline{r}^*.
    \end{equation*}
    The worst-case MSE is achieved when $\overline{r} = 0$ or $\overline{r} = 1$, i.e. at the endpoints (similar to the variance calculation in~\cite{DBLP:conf/icde/WangXYZHSS019}). 
    In this case,
    \begin{equation*}
        \begin{split}
            \mathrm{MSE}[\mechanism_{-}(0)] & = \int_{0}^{2C} \left(\overline{r}^* - 0\right)^2 p_{\varepsilon} d\overline{r}^* + \int_{2C}^{1} \left(\overline{r}^* - 0\right)^2 \frac{p_\varepsilon}{e^\varepsilon}d\overline{r}^* \\
                                            & = \frac{8C^3 p_\varepsilon}{3} + \frac{(1-8C^3)p_\varepsilon}{3e^\varepsilon}.
        \end{split}
    \end{equation*}
    with $p_{\varepsilon}$ and $C$ defined in Definition~\ref{def:distance_perturbation}.
    We omit the calculation for $\overline{r} = 1$ as it is symmetric to $\overline{r} = 0$.
    Plugging in the values of $p_{\varepsilon}$ and $C$, we can simplify the MSE using big $\mathcal{O}$ notation:
    \begin{equation*}
        \begin{split}
            \mathrm{MSE}[\mechanism_{-}(0)] &= \frac{1}{3}e^{-\varepsilon/2} + \frac{(e^{\varepsilon/2}-1)^3}{3e^{\varepsilon/2}(e^{\varepsilon} - 1)^2} \\
            &= \frac{1}{3}e^{-\varepsilon/2} + \mathcal{O}(e^{-\varepsilon}).
        \end{split}
    \end{equation*}
    Therefore, as $\varepsilon \to \infty$, the worst-case MSE of $\mechanism_{-}$ converges to zero at a rate of $\Theta(e^{-\varepsilon/2})$.
    The MSE of $\mechanism_{\circ}$ can be calculated in the same way, resulting in a convergence rate of $\Theta(e^{-\varepsilon/2})$ as well.
\end{proof}

\subsection{Proof of Theorem~\ref{thm:tracs-c}} \label{appendix:proof_tracs-c}

\begin{proof}
    Denote \oursc in Algorithm~\ref{alg:tracs-c} as $\mechanism$. We show that $\mechanism$ satisfies $\varepsilon$-LDP for each location $\tau \in \traspace$.
    In \oursc, each location $\tau$ is represented in the Cartesian space $\mathcal{D}_a \times \mathcal{D}_b$ with unique coordinates $(a, b)$.
    Thus, it suffices to prove that the perturbations of $a$ and $b$ each satisfy $\varepsilon/2$-LDP.

    From Theorem~\ref{thm:tracs-d}, $\mechanism_{-}(\overline{a}; \varepsilon/2)$ satisfies $\varepsilon/2$-LDP for the normalized coordinate $\overline{a} \in [0, 1)$.
    The subsequent linear mapping from $\overline{a} \in [0, 1)$ to $a \in \mathcal{D}_a$ is post-processing and therefore preserves the LDP guarantee.
    Hence, $\mechanism_{-}$ ensures $\varepsilon/2$-LDP over $\mathcal{D}_a$.
    By the same argument, $\mechanism_{-}$ also ensures $\varepsilon/2$-LDP over $\mathcal{D}_b$.

    Therefore, $\mechanism \coloneqq (\mechanism_{-}, \mechanism_{-})$ satisfies $\varepsilon$-LDP for each location $\tau = (a, b) \in \traspace$ by the Sequential Composition Theorem~\ref{thm:sequential}.
    Equivalently, one can verify the guarantee by directly bounding the 2D $pdf$ ratio induced by $\mechanism$.
    Consequently, the entire trajectory with $n$ locations satisfies $n\varepsilon$-LDP.
\end{proof}

\section{Complementary Materials} \label{appendix:complementary_materials}

\subsection{Space-Dependence of Indistinguishability (Section~\ref{sec:introduction})} \label{appendix:space_dependence}

Indistinguishability of a data point is always relative to a specified data space, which determines the set of alternatives that an adversary may try to distinguish it from.
Accordingly, indistinguishability cannot be meaningfully discussed without first specifying the underlying space, for two main reasons.

\textbf{(i) LDP perspective.} The LDP guarantee is defined with respect to a particular input domain.
If two algorithms operate on the same domain and use the same privacy parameter $\varepsilon$, then their privacy guarantees are comparable (in the sense of the LDP definition);
if their domains differ, then their guarantees may not be directly comparable, even when they share the same $\varepsilon$.

\textbf{(ii) Information-theoretic perspective.} For an $\varepsilon$-LDP mechanism $\mechanism$, the mutual information $I(x;\mechanism(x))$ is upper bounded by a function of both $\varepsilon$ and the size of the input space,
i.e. $I(x;\mechanism(x)) \leq \mathcal{O}(\varepsilon, |\mathcal{X}|)$, where $|\mathcal{X}|$ denotes the cardinality of the input space $\mathcal{X}$.
A similar bound can be found in~\cite{Duchi2013LocalPD}.\footnote{
    Intuitively, LDP enforces \emph{pairwise} indistinguishability but does not directly account for how indistinguishability accumulates over many alternatives, which can be captured by mutual information.
}

This observation is particularly relevant to trajectory collection or synthesis under LDP when discretization-based methods are used.
Even for a fixed continuous area, different discretization strategies (e.g. uniform grids, adaptive grids, or different sets of points of interest (POIs)) induce different discrete location spaces with different sizes and spatial layouts.
Consequently, the effective indistinguishability provided by the same LDP mechanism can vary substantially across discretizations, even when the same $\varepsilon$ is used.

\subsection{Limitations of the Exponential Mechanism (Section~\ref{sec:introduction} and Section~\ref{sec:existing})} \label{appendix:exponential_mechanism}

\subsubsection{High Computational Complexity} \label{appendix:exp_mechanism_complexity}

The biggest limitation of the Exponential mechanism is the complexity in calculating the score function and sampling from the probability distribution.
Before perturbing location $x$, the Exponential mechanism requires computing the score function $d(x,y)$ for every possible location $y$ in the location space.
If the location space has $m$ locations, the time complexity of computing the score function for 
a single location is $\mathcal{O}(m)$, which is computationally expensive for large-scale spaces.
\ Moreover, ATP's dynamic strategy to constrain the location space for perturbation makes
it need to recompute the score function for each location,
resulting in $\mathcal{O}(m)$ time complexity, where $m$ is the size of the location space.

The complexity of sampling from the Exponential mechanism is also $\mathcal{O}(m)$,
as its cumulative distribution function (CDF) is an $m$-piece function.
Sampling requires comparing a random number with the CDF values of all $m$ locations.

In contrast, we have seen that piecewise-based perturbation mechanisms in \ours do not require score functions,
and they sample from a $3$-piece CDF at each perturbation,
resulting in a negligible constant time complexity, i.e. $\Theta(1)$, for each perturbation.

\begin{figure}
    \centering
    \begin{subfigure}[b]{0.46\linewidth}
        \centering
        \includegraphics[width=\textwidth]{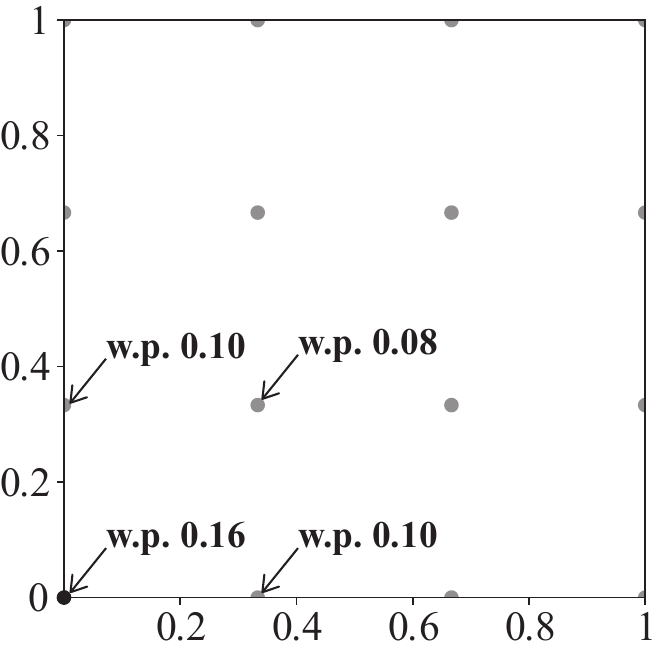}
        \caption{$m = 4 \times 4$ (gray points)}
        \label{fig:appendix_exp_mechanism_1}
    \end{subfigure}
    \hfill
    \begin{subfigure}[b]{0.46\linewidth}
        \centering
        \includegraphics[width=\textwidth]{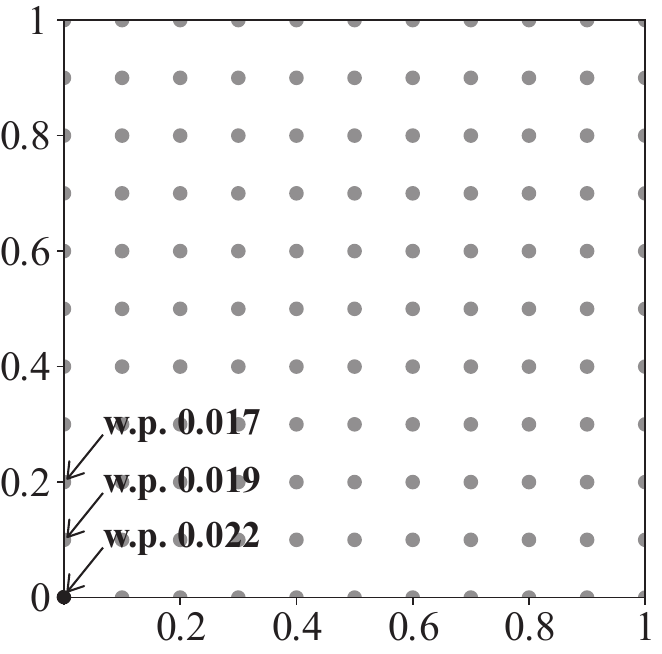}
        \caption{$m = 11 \times 11$ (gray points)}
        \label{fig:appendix_exp_mechanism_2}
    \end{subfigure}
    \caption{The Exponential mechanism $\mechanism_{\text{exp}}(0,0)$ with $\varepsilon=4$ on two discrete location spaces within $[0,1] \times [0,1]$.
        The number and arrangement of points affect the mechanism.}
    \label{fig:appendix_exp_mechanism}
\end{figure}

\begin{figure}
    \centering
    \begin{subfigure}[b]{0.46\linewidth}
        \centering
        \includegraphics[width=\textwidth]{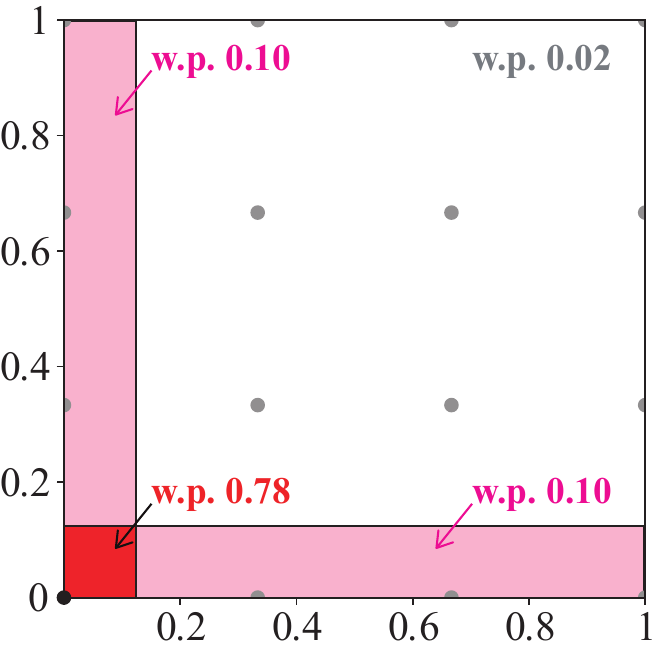}
        \caption{$\traspace = [0,1] \times [0,1]$}
    \end{subfigure}
    \hfill
    \begin{subfigure}[b]{0.46\linewidth}
        \centering
        \includegraphics[width=\textwidth]{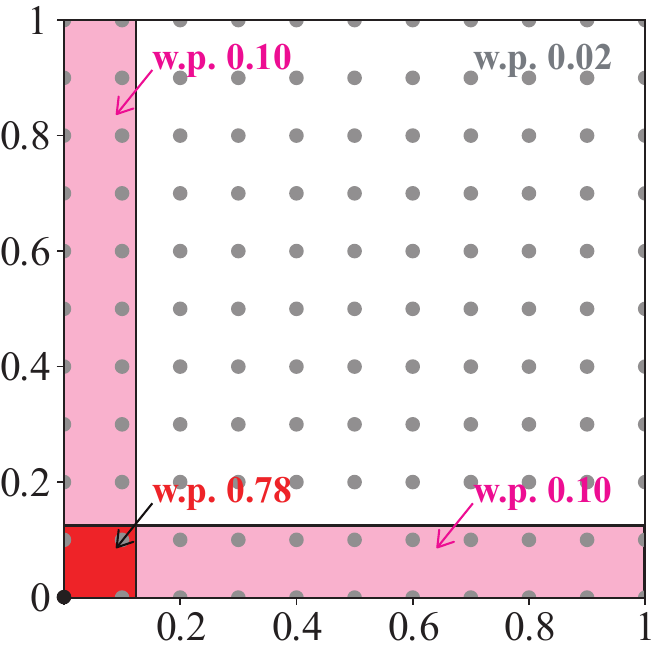}
        \caption{$\traspace = [0,1] \times [0,1]$}
    \end{subfigure}
    \caption{\oursc mechanism $\mechanism_{\text{\oursc}}(0,0)$ with $\varepsilon=4$ on $\traspace = [0,1] \times [0,1]$,
        which encompasses both discrete location spaces in Figure~\ref{fig:appendix_exp_mechanism}.
        The sampling probability is defined over areas rather than individual points.}
    \label{fig:appendix_tracsc}
\end{figure}

\subsubsection{Affected by the Number and Arrangement of Locations} \label{appendix:exp_mechanism_impact}

Because the sampling probability depends on the score function, the spatial distribution and relative distances between locations further affect the probability assigned to each location.

Figure~\ref{fig:appendix_exp_mechanism} illustrates the Exponential mechanism $\mechanism_{\text{exp}}$ applied to the sensitive location $\tau = (0,0)$ with $\varepsilon=4$ on two discrete location spaces within $[0,1] \times [0,1]$.
Specifically, Figure~\ref{fig:appendix_exp_mechanism_1} presents a $4 \times 4$ grid of locations, while Figure~\ref{fig:appendix_exp_mechanism_2} depicts an $11 \times 11$ grid.
We can observe that the sampling probabilities of $\mechanism_{\text{exp}}$ differ for the same sensitive location $\tau = (0,0)$,
even though both location spaces are contained within the same $[0,1] \times [0,1]$ area.
When the number of locations increases and the arrangement becomes denser (i.e. $m = 11 \times 11$),
the sampling probabilities for each location become smaller and more uniform.

In contrast, Figure~\ref{fig:appendix_tracsc} illustrates the \oursc mechanism $\mechanism_{\text{\oursc}}$ applied to the same sensitive location $\tau = (0,0)$ with $\varepsilon=4$ over the $[0,1] \times [0,1]$ area.
Unlike the Exponential mechanism, \oursc is designed for continuous spaces, where the sampling probability is defined over areas rather than individual points.

To apply \oursc to the discrete spaces shown in Figure~\ref{fig:appendix_exp_mechanism},
one can simply round each perturbed location to its nearest discrete point.
Since this rounding is a post-processing step, it does not affect the LDP guarantee.

\subsection{Limitations of Applying Discrete LDP Mechanisms to Continuous Spaces (Section~\ref{sec:introduction})} \label{appendix:discrete_to_continuous}

Although a continuous space can be discretized before applying discrete LDP mechanisms,
this approach inherits the privacy--utility--efficiency issues of discrete methods and introduces additional challenges in choosing an appropriate discretization strategy.

\textbf{Gridding the continuous space.} A common discretization approach is to partition the continuous space into uniform grid cells
and treat each cell as a discrete input and output of the discrete mechanism.
The reported cell can then be post-processed by randomly sampling a location within that cell.

\begin{figure}
    \centering
    \begin{subfigure}[b]{0.46\linewidth}
        \centering
        \includegraphics[width=\textwidth]{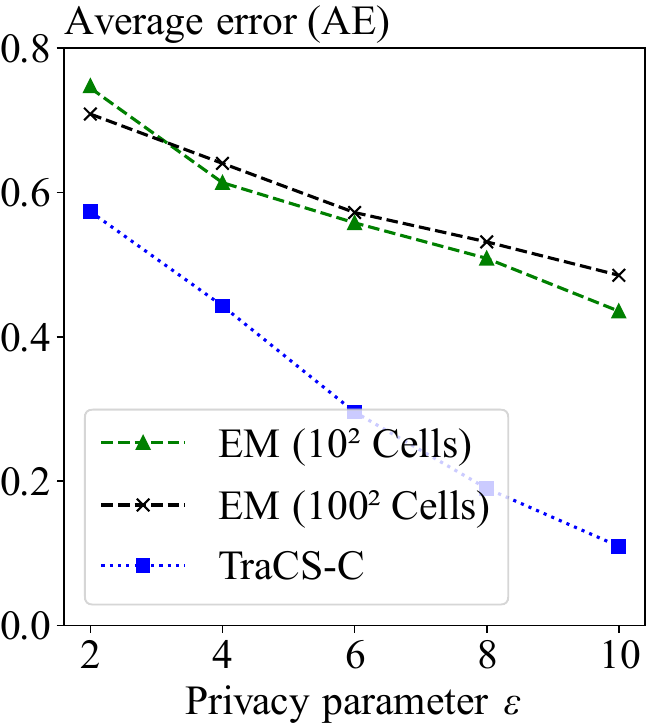}
        \caption{Error of the gridding approach with different cell sizes.}
    \end{subfigure}
    \hfill
    \begin{subfigure}[b]{0.48\linewidth}
        \centering
        \includegraphics[width=\textwidth]{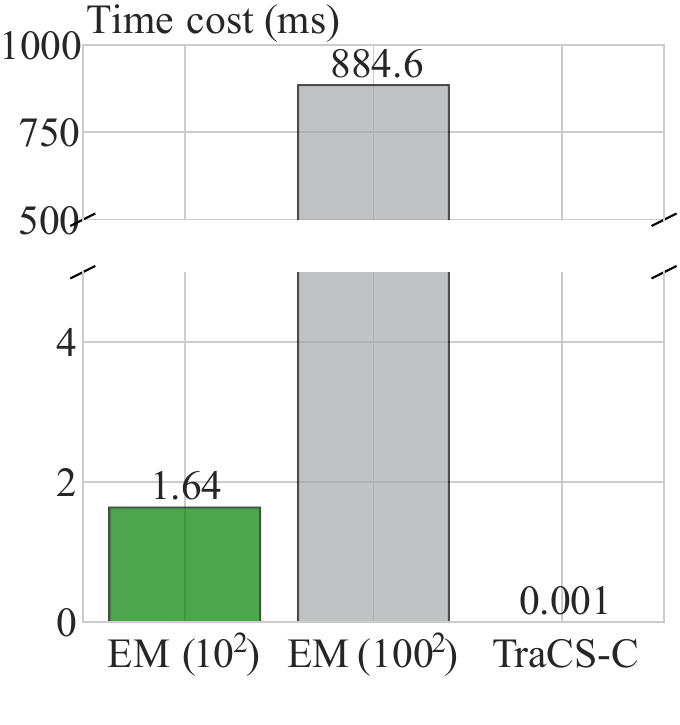}
        \caption{Average time cost of perturbing one location.}
    \end{subfigure}
    \caption{Performance of the gridding approach with different cell sizes in a continuous space.
    Coarse-grained grids (e.g. $10 \times 10$) perform better when $\varepsilon$ is small, whereas fine-grained grids (e.g. $100 \times 100$) perform better when $\varepsilon$ is large, at the cost of higher computational time.
    In contrast, \ours achieves better utility than both gridding baselines across all $\varepsilon$ values, with negligible time overhead.}
    \label{fig:appendix_griding}
\end{figure}

\textbf{Dilemma.}
However, the gridding approach faces a fundamental dilemma in choosing the grid cell size, which directly affects the privacy--utility--efficiency tradeoff.
(i) If the grid cells are too large (i.e. coarse-grained gridding), the mechanism is more likely to output the true cell, but the intra-cell error 
(i.e. the distance between the true location and the post-processed location sampled within the same cell) can be large.
(ii) If the grid cells are too small (i.e. fine-grained gridding), the intra-cell error can be reduced, but the probability of outputting the true cell decreases.
Meanwhile, the computational cost increases with the number of cells.

We empirically evaluate the gridding approach under different cell sizes in a continuous space and compare it with \oursc.
Specifically, we consider the $[0,1]\times[0,1]$ location space and two grid resolutions: $10\times 10$ and $100\times 100$.
For a fixed location, we generate $500$ perturbed locations using (i) the gridding approach with the Exponential mechanism (EM), 
where the score is defined by the distance between cell centers, and (ii) \oursc.
We then report the average error and the average time required to perturb a single location.
\ Figure~\ref{fig:appendix_griding} summarizes the results.
In the small-$\varepsilon$ regime, coarse-grained grids (e.g. $10 \times 10$) outperform fine-grained grids (e.g. $100 \times 100$),
whereas in the large-$\varepsilon$ regime, fine-grained grids perform better.
This trend is consistent with the above dilemma: it is difficult to select a single grid resolution that performs well across different $\varepsilon$ values.
In contrast, \oursc consistently achieves better utility than both gridding baselines for all tested $\varepsilon$ values, while incurring negligible time overhead.

\subsection{Dominant Sector Comparison Between the Strawman Approach and \oursd (Example~\ref{example:direction_perturbation})} \label{appendix:comparison_strawman}

\begin{figure}
    \centering
    \begin{subfigure}[b]{0.48\linewidth}
        \centering
        \includegraphics[width=\textwidth]{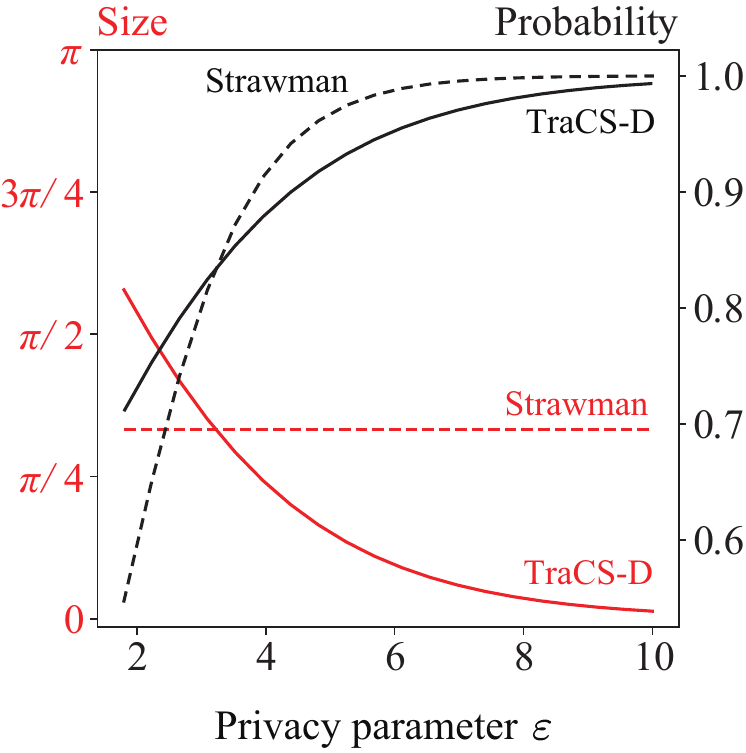}
        \caption{$k = 6$}
    \end{subfigure}
    \hfill
    \begin{subfigure}[b]{0.48\linewidth}
        \centering
        \includegraphics[width=\textwidth]{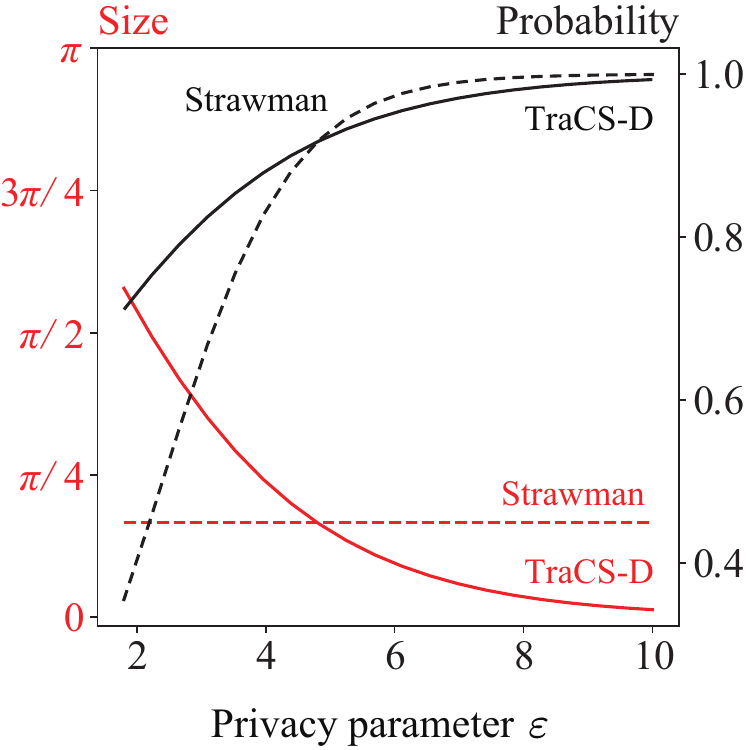}
        \caption{$k = 12$}
    \end{subfigure}
    \caption{Comparison of the size and probability of the dominant sector between \oursd and the strawman approach.
    \oursd (solid line) achieves a better trade-off between the size and probability of the dominant sector compared to the strawman approach (dashed line), 
    having better adaptiveness across different $\varepsilon$ regions.}
    \label{fig:strawman_vs_oursd}
\end{figure}

For direction perturbation, a small-size and high-probability dominant sector is desirable for more accurate perturbation.
However, these two properties are generally conflicting with each other due to the LDP constraint:
a smaller dominant sector typically has a lower probability.
\ We will show that \oursd achieves a better trade-off between the size and probability of the dominant sector 
compared to the strawman approach with any $k$ value.
Figure~\ref{fig:strawman_vs_oursd} shows two exemplary quantitative comparisons between them.

\textbf{Small $\varepsilon$ region.}
In this case, the dominant sector of \oursd (red solid line) is larger than the strawman approach, 
but the probability of the dominant sector (black solid line) is significantly higher,
especially when $k$ is large.
This indicates that, although the dominant sector of \oursd is larger in this region, it is sampled with a higher probability.

\textbf{Large $\varepsilon$ region.}
In such cases, the dominant sector of \oursd is significantly smaller than the strawman approach,
while maintaining almost the same probability as the strawman approach.

We can observe the effect of $k$ on the strawman approach.
Larger $k$ values in the strawman approach reduce the inner-sector error compared to \oursd, 
but also significantly decrease the probability of the dominant sector when $\varepsilon$ is small.
Conversely, smaller $k$ values increase the probability of the dominant sector when $\varepsilon$ is small,
but result in a large inner-sector error when $\varepsilon$ is large.
In contrast, \oursd adaptively balances these trade-offs by adjusting the dominant sector's size and probability according to $\varepsilon$.

\subsection{Detailed Form of $|\mathcal{D}_{r(\varphi)}|$ (Section~\ref{subsubsec:distance_perturbation})} \label{appendix:detailed_expression}

For each location $\tau_i$, denote its directions to the four endpoints of the rectangular
location space as $\varphi_1, \varphi_2, \varphi_3, \varphi_4$. We have:
\begin{equation*}
    \begin{split}
        \varphi_1 &= \mathrm{atan2} \left( b_{\text{end}} - b_i, a_{\text{end}} - a_i \right), \\
        \varphi_2 &= \mathrm{atan2} \left(b_{\text{end}} - b_i, a_{\text{sta}} - a_i \right), \\
        \varphi_3 &= \mathrm{atan2} \left(b_{\text{sta}} - b_i, a_{\text{sta}} - a_i \right) + 2\pi, \\
        \varphi_4 &= \mathrm{atan2} \left(b_{\text{sta}} - b_i, a_{\text{end}} - a_i \right) + 2\pi.
    \end{split}
\end{equation*}
Then, $|\mathcal{D}_{r(\varphi)}|$ can be derived using trigonometric functions,
which leads to the following four cases:
\begin{equation*}
    \begin{split}
        |\mathcal{D}_{r(\varphi)}| = 
        \begin{dcases}
            \frac{a_{\text{end}} - a_i}{\cos \varphi} & \text{if} \ \varphi \in [0, \varphi_1) \cup [\varphi_4, 2\pi), \\
            \frac{b_{\text{end}} - b_i}{\sin \varphi} & \text{if} \ \varphi \in [\varphi_1, \varphi_2), \\
            \frac{a_i - a_{\text{sta}}}{-\cos \varphi} & \text{if} \ \varphi \in [\varphi_2, \varphi_3), \\
            \frac{b_i - b_{\text{sta}}}{-\sin \varphi} & \text{if} \ \varphi \in [\varphi_3, \varphi_4),
        \end{dcases}
    \end{split}
\end{equation*}
depending on the direction $\varphi$ from reference location $\tau_i = (a_i, b_i)$ to the boundary of the rectangular location space
$[a_{\text{sta}}, a_{\text{end}}] \times [b_{\text{sta}}, b_{\text{end}}]$.

\subsection{Redesigned SW (Section~\ref{appendix:other_piecewise})} \label{appendix:resigned_sw}

We redesign the SW mechanism to achieve $\varepsilon$-LDP for the circular space $[0,2\pi)$
and the linear space $[0,1)$.
The key idea is to transform the original sampling distribution to these spaces while preserving the LDP constraint.
\ Similar to Definition~\ref{def:direction_perturbation}, the redesigned SW mechanism for direction perturbation
is defined as follows.

\begin{definition}[Redesigned SW for direction perturbation] \label{def:resigned_sw_direction}
    Given a sensitive direction $\varphi$ and a privacy parameter $\varepsilon$,
    redesigned SW for direction perturbation is a mechanism  $\mathcal{M}_\circ: [0,2\pi)\to [0,2\pi)$
    defined by:
    \begin{equation*}
        pdf[\mechanism_\circ(\varphi)=\varphi'] =
        \begin{dcases}
            p_{\varepsilon} & \text{if} \ \varphi' \in [l_{\varphi,\varepsilon}, r_{\varphi,\varepsilon}), \\
            p_{\varepsilon} / \exp{(\varepsilon)} & \text{otherwise},
        \end{dcases}
    \end{equation*}
    where $p_\varepsilon = \frac{1}{2\pi \varepsilon}(\exp(\varepsilon) - 1)$ is the sampling probability,
    and $[l_{\varphi,\varepsilon}, r_{\varphi,\varepsilon})$ is the sampling interval that
    \begin{equation*}
        \begin{split}
            l_{\varphi,\varepsilon} &= \left( \varphi - \pi\frac{\exp(\varepsilon)(\varepsilon - 1) + 1}{(\exp(\varepsilon) - 1)^2} \right) \mod 2\pi, \\
            r_{\varphi,\varepsilon} &= \left( \varphi + \pi\frac{\exp(\varepsilon)(\varepsilon - 1) + 1}{(\exp(\varepsilon) - 1)^2} \right) \mod 2\pi.
        \end{split}
    \end{equation*}
\end{definition}

Similar to Definition~\ref{def:distance_perturbation}, the redesigned SW mechanism for distance perturbation
is defined as follows.

\begin{definition}[Redesigned SW for distance perturbation] \label{def:resigned_sw_distance}
    Given a sensitive distance $\overline{r}(\varphi)$ and a privacy parameter $\varepsilon$,
    redesigned SW for distance perturbation is a mechanism $\mathcal{M}_{-}: [0,1)\to [0,1)$ that
    \begin{equation*}
        pdf[\mechanism_{-}(\overline{r}(\varphi))=\overline{r}'(\varphi)] =
        \begin{dcases}
            p_{\varepsilon} & \text{if} \ \overline{r}'(\varphi) \in [u, v), \\
            p_{\varepsilon} / \exp{(\varepsilon)} & \text{otherwise},
        \end{dcases}
    \end{equation*}
    where $p_{\varepsilon} = (\exp(\varepsilon) - 1) / \varepsilon$ is the sampling probability,
    and $[u,v)$ is the sampling interval that
    \begin{equation*}
        \begin{split}
            [u,v) &=
            \begin{cases}
                \overline{r}(\varphi) + [-C, C) & \text{if} \ \overline{r}'(\varphi) \in [C, 1-C), \cr
                [0, 2C) & \text{if} \ \overline{r}'(\varphi) \in [0, C), \cr
                [1-2C, 1) & \text{otherwise},
            \end{cases}
        \end{split}
    \end{equation*}
    with $C=(\exp(\varepsilon)(\varepsilon - 1)+ 1) / (2(\exp(\varepsilon) - 1)^2)$.
\end{definition}

Compared with Definition~\ref{def:direction_perturbation} and Definition~\ref{def:distance_perturbation},
the redesigned SW mechanisms have a larger $p_\varepsilon$ and a narrower $[l, r)$ or $[u, v]$ interval for sampling.
We can treat the redesigned SW mechanisms as a more aggressive perturbation mechanism:
it tries to sample from a narrow dominant sector with a higher probability ($p_\varepsilon$), 
which also leads to a higher probability ($p_\varepsilon/\exp(\varepsilon)$) of non-dominant sectors due to the LDP constraint.

\subsection{Experimental Results for Extensions (Section~\ref{appendix:other_piecewise})} \label{appendix:resigned_sw_exp}

\subsubsection{Effect of Different Piecewise-based Mechanisms} \label{sec:exp:redesigned_sw}
Figure~\ref{fig:exp:3} compares the performance of different piecewise-based mechanisms for \oursd.
In addition to the mechanisms defined in Definition~\ref{def:direction_perturbation} and Definition~\ref{def:distance_perturbation},
we name the redesigned SW for \oursd as \oursd (R-SW),
as detailed in Appendix~\ref{appendix:resigned_sw}.
\ We can observe that \oursd consistently exhibits smaller errors than \oursd (R-SW) across all the $\varepsilon$ values.
Statistically, the mean AE of \oursd is $52.6\%$ of \oursd (R-SW) across all the $\varepsilon$ values.
This difference comes from the MSE of their sampling distributions, 
where Definition~\ref{def:direction_perturbation} and Definition~\ref{def:distance_perturbation}
have smaller MSE than R-SW.

\subsubsection{\oursd for Circular Area}

Figure~\ref{fig:exp:4} illustrates an example of \oursd applied to a circular area $\mathcal{S} = [0,2\pi) \otimes [0,1)$.
We set the sensitive location $\tau$ as $(\pi, 0.5)$ and the privacy parameter $\varepsilon = 5$,
then collect $50$ random samples of perturbed locations using \oursd.
In this case, the dominant area is $[0.87\pi, 1.13\pi)\otimes [0.32, 0.67)$, as determined by the perturbation mechanisms
in Definition~\ref{def:direction_perturbation} and Definition~\ref{def:distance_perturbation}.
We can observe that the perturbed locations are concentrated in the dominant area.

\begin{figure}[t]
    \begin{minipage}{0.45\linewidth}
        \centering
        \includegraphics[width=0.98\textwidth]{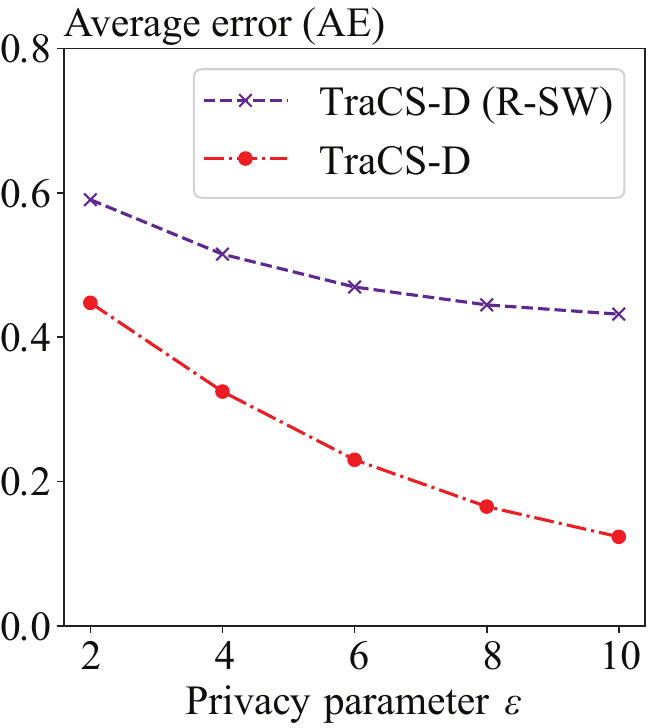}
        \caption{Different piecewise-based mechanisms.}
        \label{fig:exp:3}
    \end{minipage}
    \hfill
    \begin{minipage}{0.45\linewidth}
        \centering
        \includegraphics[width=0.98\textwidth]{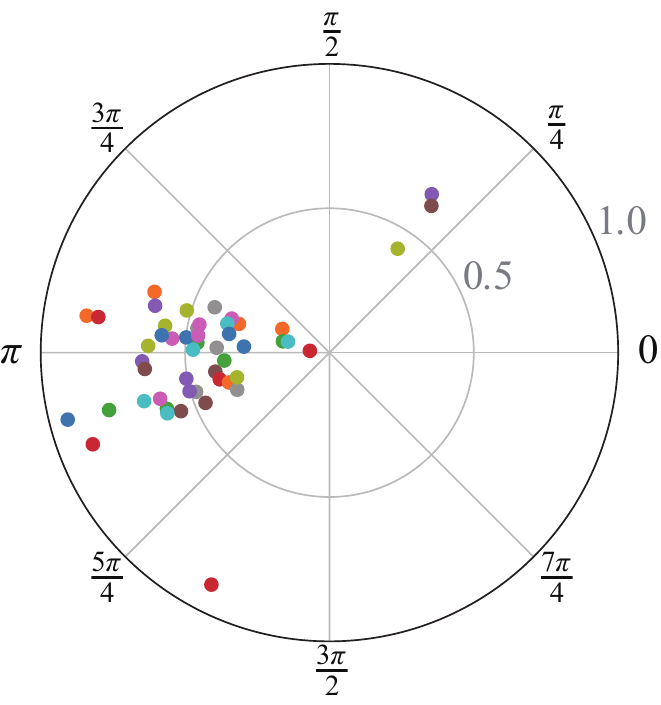}
        \vskip 0.5em
        \caption{Samples of \oursd at $\tau = (\pi, 0.5)$ for circular area.}
        \label{fig:exp:4}
    \end{minipage}
\end{figure}

\subsection{2D Laplace Mechanism with Truncation (Section~\ref{sec:continuous_exp_setup})} \label{appendix:2d_laplace}

For 2D location spaces equipped with the Euclidean distance, a standard Laplace-based approach is the Planar Laplace mechanism~\cite{DBLP:conf/ccs/AndresBCP13}.\footnote{
        We omit the standard (truncated) Laplace mechanism for $\mathbb{R}$, as it is designed for $L_1$ distance and has been shown to have worse data utility compared to piecewise-based mechanisms~\cite{DBLP:journals/popets/ZhengMH25}.
        }
It was originally proposed for \emph{geo-indistinguishability}, and it also provides an LDP guarantee with a calibrated sensitivity.

\begin{definition}[Planar Laplace mechanism]
    Let $\mathcal{X} \subseteq \mathbb{R}^2$ be the input domain. The Planar Laplace mechanism
    $\mechanism_{\mathrm{PL}}: \mathcal{X} \to \mathbb{R}^2$ is defined by
    \[
        \mechanism_{\mathrm{PL}}(x) = x + \eta,
    \]
    where $\eta \in \mathbb{R}^2$ is a noise vector with density
    \[
        pdf[\eta] = \frac{\varepsilon^2}{2\pi}\exp\bigl(-\varepsilon\|\eta\|_2\bigr).
    \]
\end{definition}

\textbf{Sensitivity calibration for an LDP guarantee.}
By the triangle inequality, for any $x_1,x_2 \in \mathcal{X}$ and any $y \in \mathbb{R}^2$, we have
\[
    \frac{pdf[y-x_1]}{pdf[y-x_2]}
    = \exp\Bigl(\varepsilon\bigl(\|y-x_2\|_2-\|y-x_1\|_2\bigr)\Bigr)
    \le \exp\bigl(\varepsilon\|x_1-x_2\|_2\bigr).
\]
Therefore, over the domain $\mathcal{X}$, the mechanism satisfies $(\varepsilon\cdot \mathrm{diam}(\mathcal{X}))$-LDP, where
$\mathrm{diam}(\mathcal{X})$ is the diameter of $\mathcal{X}$.
Equivalently, to ensure $\varepsilon$-LDP for all pairs of inputs in $\mathcal{X}$, one can run the Planar Laplace mechanism with parameter
$\varepsilon/\mathrm{diam}(\mathcal{X})$.

\textbf{Truncation for bounded location spaces.}
The Planar Laplace mechanism outputs perturbed locations in $\mathbb{R}^2$, which may fall outside the bounded location space $\mathcal{S}$.
A common practice is to truncate the output to $\mathcal{S}$, e.g. by projecting it to the nearest point in $\mathcal{S}$.
This truncation is a post-processing step and does not affect the LDP guarantee.

\subsection{Privacy Parameter Assignment in \oursd (Section~\ref{sec:continuous_exp_setup})} \label{appendix:epsilon_d}

In \oursd, the overall perturbation error is jointly determined by the direction perturbation and the distance perturbation.
These two components are controlled by $\varepsilon_d$ for $\mechanism_\circ$ and $\varepsilon-\varepsilon_d$ for $\mechanism_{-}$, respectively.
As a result, the (theoretical) error bound of \oursd depends on both $\varepsilon$ and the privacy split $\varepsilon_d$.
In principle, the best choice of $\varepsilon_d$ is the one that minimizes this bound.

\begin{wrapfigure}{r}{0.34\linewidth}
    \centering
    \includegraphics[width=\linewidth]{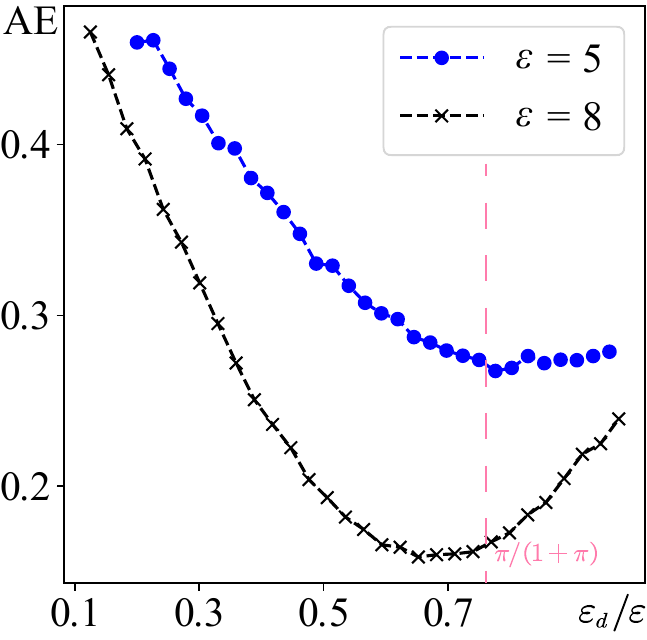}
    \label{fig:appendix:epsilon_d}
\end{wrapfigure}
However, the optimal split depends on $\varepsilon$ and on the shape of $\traspace$, which makes a universal closed-form choice impractical.
Using the same experimental setup as in the Figure~\ref{fig:exp:1_1}, we empirically evaluate how $\varepsilon_d$ affects the error of \oursd; the results are shown in the figure on the right.
We observe a consistent pattern: for each fixed $\varepsilon$, the error decreases initially and then increases as $\varepsilon_d$ varies from $0$ to the full budget~$\varepsilon$.

Across all tested $\varepsilon$ values, the minimizer $\varepsilon_d/\varepsilon$ is close to our heuristic split $\varepsilon_d=\pi/(1+\pi)$ (dashed pink line), which supports the use of this heuristic in practice.

\subsection{Privacy Parameter Assignment for a Whole Trajectory (Section~\ref{subsubsec:trajectory_level_epsilon})} \label{appendix:trajectory_level_epsilon}

\begin{figure}[t]
    \centering
    \begin{subfigure}[b]{0.465\linewidth}
        \centering
        \includegraphics[width=0.98\textwidth]{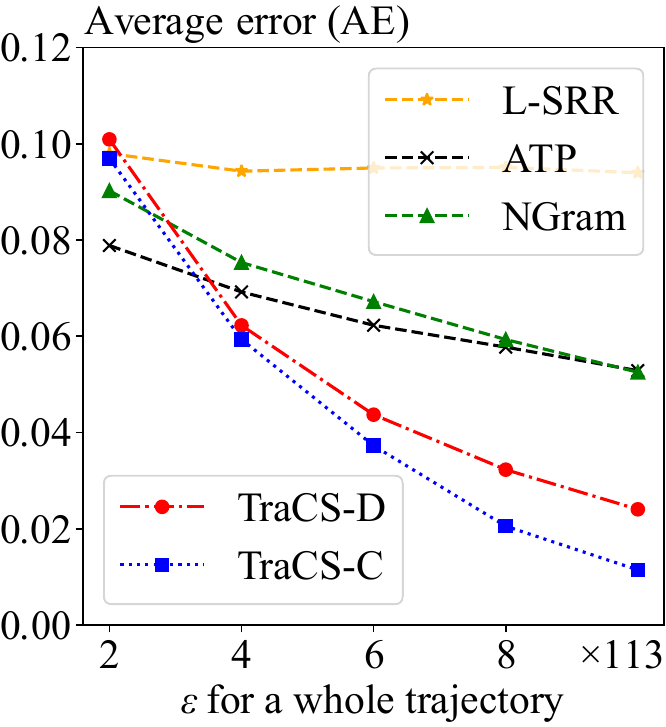}
        \caption{TKY dataset}
        \label{appendix:fig:trajectory_epsilon_tky}
    \end{subfigure}
    \hfill
    \begin{subfigure}[b]{0.465\linewidth}
        \centering
        \includegraphics[width=0.98\textwidth]{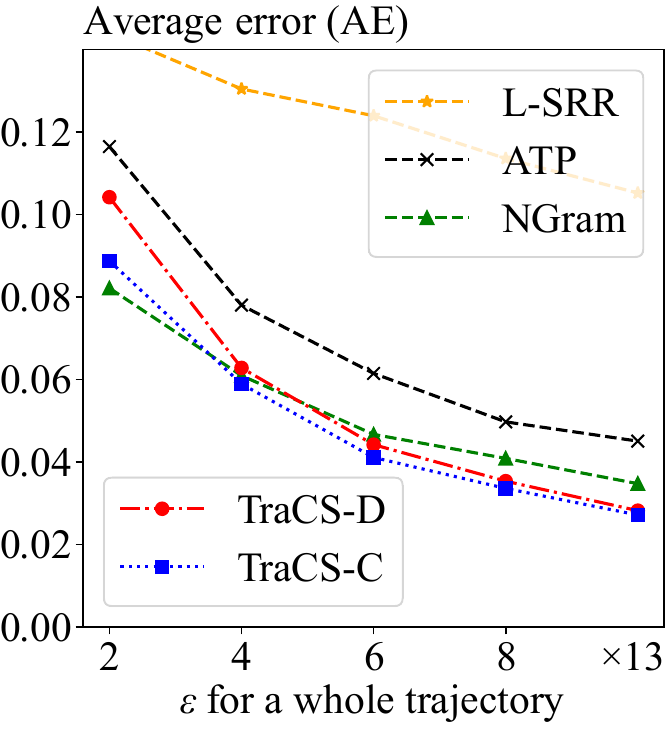}
        \caption{CHI dataset}
        \label{appendix:fig:trajectory_epsilon_chi}
    \end{subfigure}
    \caption{Comparison with trajectory-level $\varepsilon$ assignment. 
    The average trajectory length is $113$ for TKY and $13$ for CHI. 
    \ours's AE decreases fast as $\varepsilon$ increases, outperforming other discrete-space mechanisms when $\varepsilon$ is large.}
    \label{appendix:fig:trajectory_epsilon}
\end{figure}

\begin{figure}[t]
    \centering
    \begin{subfigure}[b]{0.465\linewidth}
        \centering
        \includegraphics[width=0.98\textwidth]{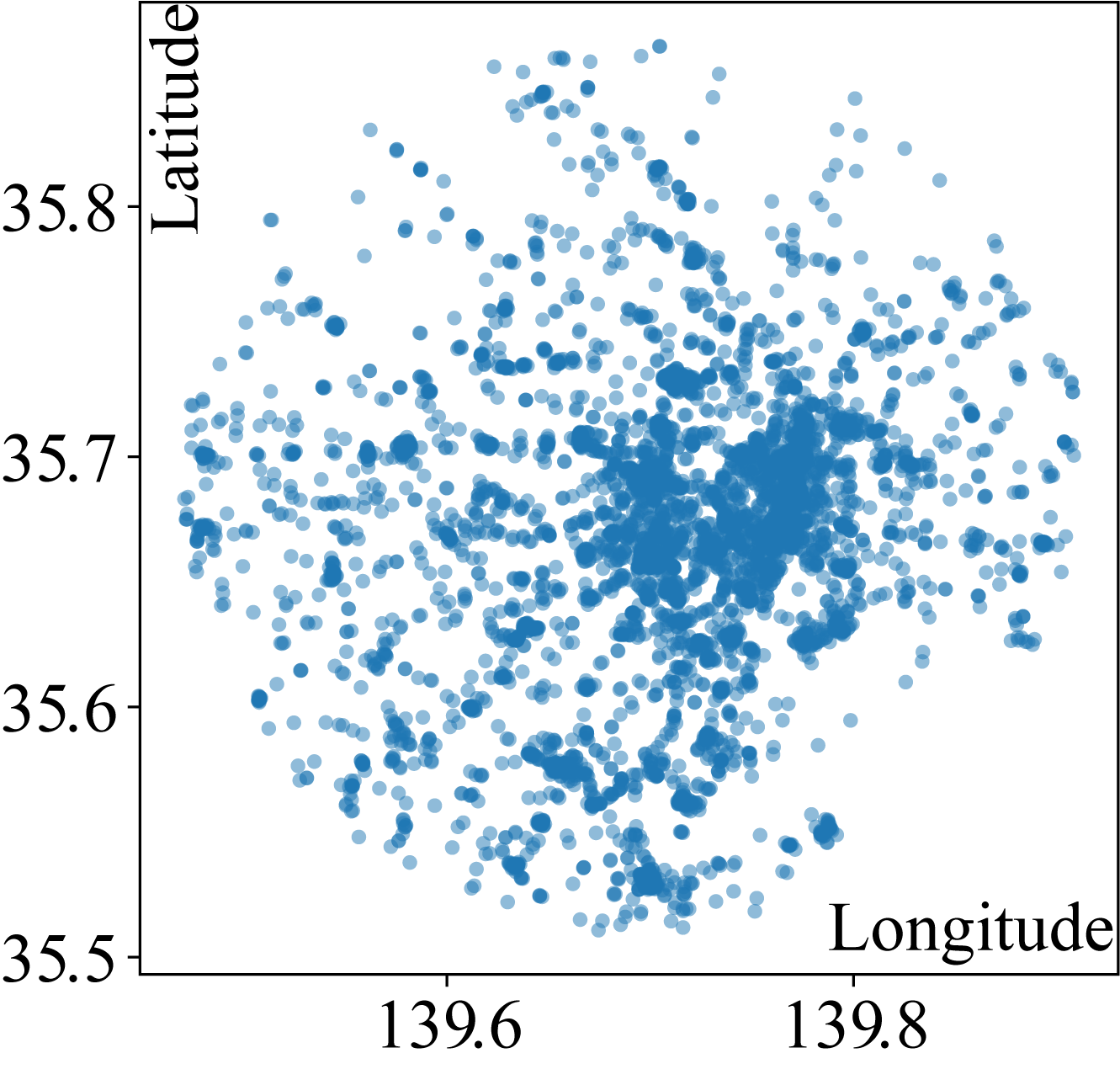}
        \caption{TKY location space}
        \label{appendix:fig:tky_location_space}
    \end{subfigure}
    \hfill
    \begin{subfigure}[b]{0.465\linewidth}
        \centering
        \includegraphics[width=0.98\textwidth]{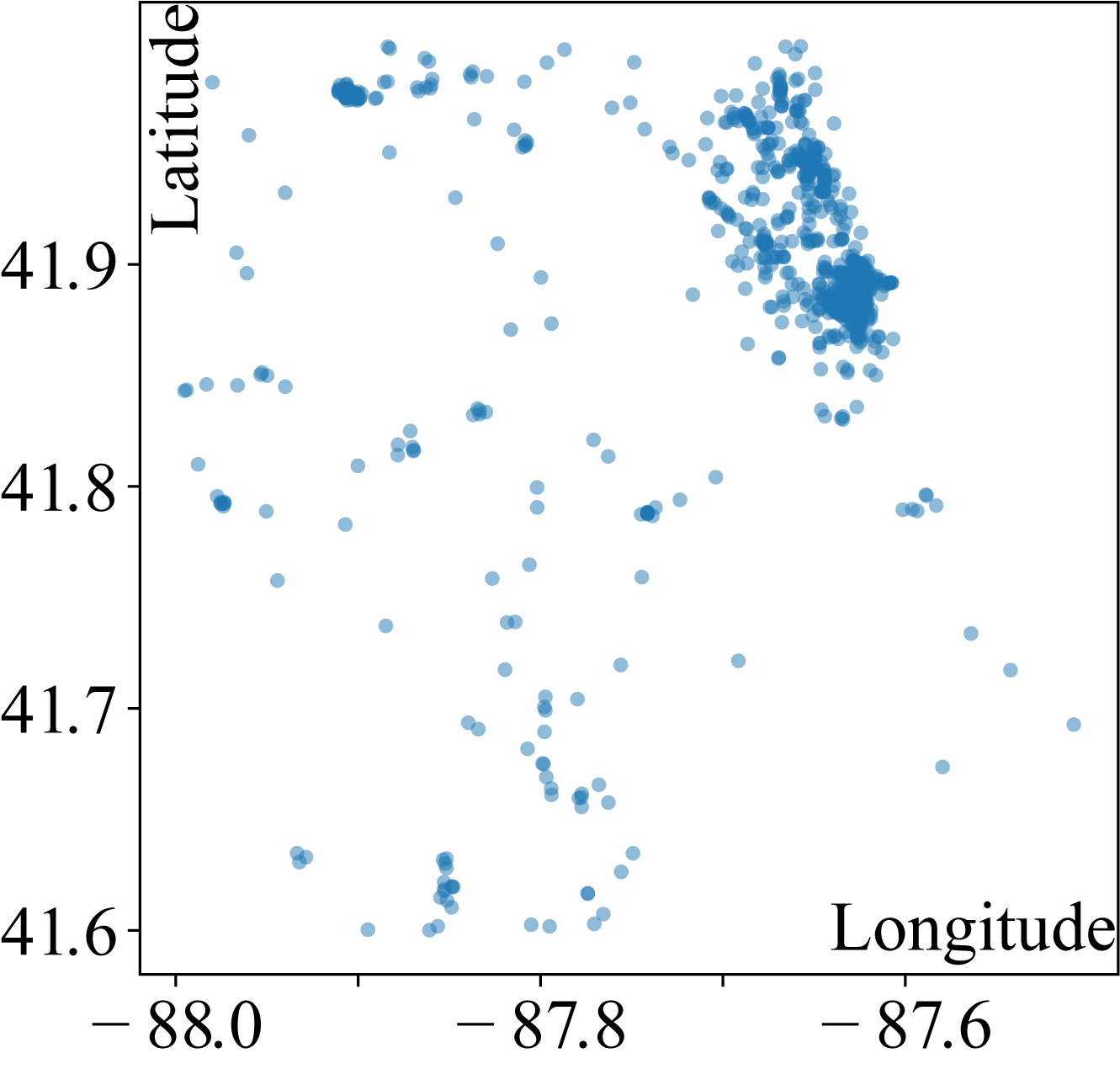}
        \caption{CHI location space}
        \label{appendix:fig:chi_location_space}
    \end{subfigure}
    \caption{Location spaces for TKY and CHI datasets.
    \ours operates over the whole rectangular area that encloses the discrete location space of each city, 
    while other discrete-space mechanisms operate only on the discrete locations (blue points).}
    \label{appendix:fig:location_spaces}
\end{figure}

We compare \ours with NGram, L-SRR, and ATP under a trajectory-level privacy parameter assignment on the TKY and CHI datasets.
The average trajectory lengths are $113$ (TKY) and $13$ (CHI). Accordingly, we assign a trajectory-level budget of $\varepsilon \times 113$ for TKY and $\varepsilon \times 13$ for CHI.
This scaling makes the trajectory-level setting more comparable to the location-level $\varepsilon$ assignment in Figure~\ref{fig:exp:real_2}.
The results are shown in Figure~\ref{appendix:fig:trajectory_epsilon}.

We observe that \ours has larger AEs than the discrete-space mechanisms when $\varepsilon$ is small, but its errors decrease rapidly as $\varepsilon$ increases.
As a result, \ours outperforms the discrete-space mechanisms in the large-$\varepsilon$ regime.
Compared with assigning $\varepsilon$ at the location level in Figure~\ref{fig:exp:real_2}, the overall trends remain similar across all mechanisms:
\ours starts to outperform all other discrete-space mechanisms when $\varepsilon \approx 4$ per location on both datasets.
Among the discrete-space mechanisms, ATP achieves the lowest AEs on TKY, whereas NGram (i.e. the Exponential mechanism with a reachable set) attains the lowest AEs on CHI, 
due to CHI's smaller location space than TKY.

As discussed in Section~\ref{sec:real_discrete}, the larger average errors (AEs) of \ours in the small-$\varepsilon$ regime can be attributed to its enlarged effective location space.
\ours is designed for continuous spaces and perturbs locations over the rectangular area that encloses the discrete location space of each city, as illustrated in Figure~\ref{appendix:fig:location_spaces}.
The discrete location set in TKY is comparatively dense and evenly distributed, whereas the location set in \mbox{CHI} is much sparser.
Consequently, the gap between the discrete set and its enclosing rectangle is larger for CHI, 
which amplifies the utility loss of \ours and explains its weaker performance on CHI (relative to discrete-space mechanisms) compared with TKY.

\end{document}